 \tikzstyle{Gfacility}=[rectangle,draw,fill=black!90,minimum size=4.5pt, inner sep=0pt]
 \tikzstyle{Filtered}=[rectangle,draw,fill=black!20,minimum size=4.5pt, inner sep=0pt]
 \tikzstyle{Lfacility}=[circle,draw=black,fill=black!90,minimum size=4.5pt,inner sep=0pt]
 \tikzstyle{LFiltered}=[circle,draw=black,fill=black!20,minimum size=4.5pt,inner sep=0pt]
 \tikzstyle{Point}=[circle,draw=black,fill=blue!90,minimum size=4pt,inner sep=0pt]
 \tikzstyle{Lfacility-tiny}=[circle,draw=black,fill=black!90,minimum size=1pt,inner sep=0pt]
\newtheorem{theorem}{Theorem} 
\newtheorem{lemma}{Lemma} 
\newtheorem{claim}{Claim}
\newtheorem{definition}{Definition} 
\newtheorem{remarka}{Remark}
\newenvironment{proof}{{\bf Proof.}}{\hfill\rule{2mm}{2mm}} 
\newenvironment{pproof}[1]{\noindent{\textbf{Proof of #1.}}}{\hfill\rule{2mm}{2mm}} 
\newcommand{\RR}{\mathbb{R}}
\newcommand{\cB}{{\mathcal B}}
\newcommand{\cost}{{\rm cost}\xspace}
\newcommand{\optref}{*\xspace}
\newcommand{\locref}{\xspace}
\newcommand{\cX}{{\mathcal X}\xspace}
\newcommand{\fa}{{\mathcal C}\xspace}
\newcommand{\opt}{{\mathcal O}\xspace}
\newcommand{\loc}{{\mathcal S}\xspace}
\newcommand{\fopt}{\overline{\opt}\xspace}
\newcommand{\floc}{\overline{\loc}\xspace}
\newcommand{\costo}{c^{\optref}\xspace}
\newcommand{\costl}{c^{\locref}\xspace}
\newcommand{\swaps}{\rho(\eps, d)\xspace}
\newcommand{\cent}{{\rm cent}\xspace}
\newcommand{\ophi}{{\overline{\phi}}\xspace}
\newcommand{\osigma}{{\overline{\sigma}}\xspace}
\newcommand{\cents}{{\mathcal T}\xspace}
\newcommand{\net}{{\mathcal N}\xspace}
\newcommand{\eps}{{\epsilon}}
\newcommand{\calP}{{\cal P}}
\newcommand{\del}{\delta}
\newcommand{\kmeans}{{$k$-\textsc{means}}\xspace}
\newcommand{\kmed}{{$k$-\textsc{median}}\xspace}
\newcommand{\ufl}{{\textsc{uncapacitated facility location}}\xspace}
\newcommand{\gkm}{{\textsc{generalized} $k$-\textsc{median}}\xspace}
\newcommand{\lqnorm}{{$\ell_q^q$-\textsc{norm~}$k$-\textsc{clustering}}\xspace}
\newcommand{\apx}{{\textbf{APX}}}
\newcommand{\ETH}{{\textbf{ETH}}}
\title{Local Search Yields a PTAS for $k$-Means in Doubling Metrics
\footnote{A preliminary version of this paper appeared in Proceedings of 57th FOCS 2016 \cite{FRS16B}}}
\author{Zachary Friggstad\thanks{This research was undertaken, in part, thanks to funding from the Canada Research Chairs program and an NSERC Discovery Grant.} \qquad Mohsen Rezapour \qquad Mohammad R. Salavatipour\thanks{Supported by NSERC.}\\
Department of Computing Science\\
University of Alberta}
\date{}
\begin{document}

\maketitle

\begin{abstract}
The most well known and ubiquitous clustering problem encountered in nearly every branch of science is undoubtedly \kmeans:
%\kmeans:
given a set of data points and a parameter $k$, select $k$ centres and partition the data points into $k$ clusters
around these centres so that the sum of squares of distances of the points to their cluster centre % (called the cost of the solution)
is minimized. Typically these data points lie in Euclidean space $\RR^d$ for some $d\geq 2$.

\kmeans and the first algorithms for it were introduced in the 1950's.
Over the last six decades, hundreds of papers have studied this problem and different algorithms have been proposed for it.
The most commonly used algorithm in practice  is known as Lloyd-Forgy, which is also referred to as ``the'' \kmeans algorithm, 
and various extensions of it
% (see~\cite{Jain10} and references there)
often work very well in practice.
However, they may produce solutions whose cost is arbitrarily large compared to the optimum solution.
Kanungo et al.~[2004] %\cite{KMNPSW04} 
analyzed a very simple local search heuristic
to get a polynomial-time %and proved that the
algorithm with approximation ratio $9+\epsilon$ for any fixed $\epsilon>0$ for \kmeans in Euclidean space.

Finding an algorithm with a better worst-case approximation guarantee has remained one of the biggest open questions 
in this area, in particular whether one can get a true PTAS for fixed dimension Euclidean space.
We settle this problem by showing that a simple local search algorithm 
provides a PTAS for \kmeans for $\RR^d$ for any fixed $d$.

More precisely, for any error parameter $\epsilon>0$, 
the local search algorithm that considers swaps of up to $\rho=d^{O(d)}\cdot{\epsilon}^{-O(d/\epsilon)}$ centres 
at a time will produce a solution using {\em exactly} $k$ centres
whose cost is at most a $(1+\epsilon)$-factor greater than the optimum solution.
Our analysis extends very easily to the more general settings where we want to minimize the sum of $q$'th powers
of the distances between data points and their cluster centres (instead of sum of squares of distances as in \kmeans) 
for any fixed $q\geq 1$ and where the metric may not be Euclidean but still has fixed doubling dimension.

Finally, our techniques also extend to other classic clustering problems. We provide the first demonstration that local search
yields a PTAS for \ufl and the generalization of \kmed to the setting with non-uniform opening costs in doubling metrics.
\end{abstract}

%%%%%%%%%%%%%%%%%%%%%%%%%%%%%%%%%%%%%%%%%%%%%%%%%%%%%%%%%%
\section{Introduction}
With advances in obtaining and storing data, one of the emerging challenges of our age is data analysis.
It is hard to find a scientific research project which does not involve some form of methodology to process,
understand, and summarize data. A large portion of data analysis is concerned with predicting patterns in data
after being trained with some training data set (machine learning).
Two problems often encounterd in data analysis are classification and clustering.
Classification (which is an instance of supervised learning) is the task of predicting the label of a new data point
after being trained with a set of labeled data points (called a training set).
Basically, after given a training set of correctly labeled data points the program has to identify the label of
a given new (unlabeled) data point. Clustering (which is an instance of unsupervised learning)
is the task of grouping a given set of objects or data points into clusters/groups
such that the data points that are more similar fall into the same cluster while 
data points (objects) that do not seem similar are in different clusters. 
Some of the main purposes of clustering are to understand the underlying structure and relation between objects
and find a compact representation of data points.

Clustering and different methods to achieve it have been studied since the
1950's in different branches of science: biology, statistics, medical sciences, computer science, social sciences, 
engineering, physics, and more.
Depending on the notion of what defines a cluster, different models of clustering have been proposed and studied by
researchers. Perhaps the most widely used clustering model %with a precise formulation of the model
is the \kmeans clustering: Given a set $\cX$ of $n$ {\em data points}
in $d$-dimensional Euclidean space $\RR^d$, and an integer $k$, 
find a set of $k$ points $c_1,\ldots,c_k \in \RR^d$ to act as as {\em centres}
that minimize the sum of squared distances of each data point to its nearest 
centre. In other words, we would like to partition $\cX$ into $k$ cluster sets, $\{C_1,\ldots,C_k\}$ and 
find a centre $c_i$ for each $C_i$ to minimize
\[\sum_{i=1}^k\sum_{x\in C_i}||x-c_i||^2_2. \]
Here, $||x-c_i||_2$ is the standard Euclidean distance in $\RR^d$ between points $x$ and $c_i$.

This value is called the cost of the clustering.
Typically, the centres $c_i$ are selected to be the centroid (mean) of the cluster $C_i$.
In other situations the centres must be from the data points themselves (i.e. $c_i\in C_i$) or from
a given set $\fa$. This latter version is 
referred to as discrete \kmeans clustering. Although in most application of \kmeans the data points are in some
Euclidean space, the discrete variant can be defined in general metrics. 
The \kmeans clustering problem is known to be an NP-hard problem even for $k=2$ or 
when $d=2$~\cite{ADHP09,MNV09,DFKVV04,Vattani09}.

%{\bf ZF: Is it strongly NP-hard? If we don't know for sure, then a comment in the conclusion should be modified.}

Clustering, in particular the \kmeans clustering problem as the most popular model for it, has found numerous applications in very different areas. The following is a (short) list of applications of clustering that have been addressed by Jain \cite{Jain10}: image segmentation, information access, grouping customers into different types for efficient marketing, grouping delivery services for workforce management and planning, and grouping genome data in biology.
For instance, clustering is used to identify groups of genes with related expression patterns in a key step of the analysis of gene functions and cellular processes.
It is also extensively used to group patients based on their genetic, pathological, and cellular features which is proven useful in analyzing
human genetics diseases (e.g., see \cite{Patrik2005,Hofree2013}).

The most widely used algorithm for \kmeans (which is also sometimes referred to as ``the'' $k$-means algorithm)
is a simple heuristic introduced by Lloyd in 1957~\cite{Lloyd82}. 
This algorithm starts from an initial partition of the points into $k$ clusters 
and it repeats the following two steps as long as it improves the quality of the clustering:
pick the centroids of the clusters as centres, and then re-compute a new clustering by assigning
each point to the nearest centre. Although this algorithm works well in practice it is known that the 
ratio of the cost of the solution computed by this algorithm vs the optimum solution cost 
(known as the ``approximation ratio'') can be  arbitrarily large (see~\cite{KMNPSW04}).
Various modifications and extensions of this algorithm
have been produced and studied, e.g. ISODATA, FORGY, Fuzzy C-means, $k$-means++, 
filtering using kd-trees (see~\cite{Jain10}), but none of them are known to have a bounded approximation ratio
in the general setting. Arthur and Vassilvitskii~\cite{AV07} show that Lloyd's method with properly chosen
initial centres will be an $O(\log k)$-approximation. Ostrovsky et al.~\cite{ORSS12} show that under some assumptions
about the data points the approximation ratio is bounded by a constant.
The problem of finding an efficient algorithm for \kmeans
with a proven theoretical bound on the cost of the solution returned is probably one of the most well studied problems
in the whole field of clustering with hundreds of research papers devoted to this.

%There are various algorithms proposed that can produce solutions that are arbitrary close to optimum but under some extra
%assumptions. 
Arthur and Vassilvitskii~\cite{AV06}, Dastupta and Gupta~\cite{Dasgupta03a}, and Har-Peled and Sadri
\cite{HS05} study convergence rate of Lloyd's algorithm. In particular~\cite{AV06} show that it can be 
super-polynomial. More recently, Vattani~\cite{Vattani11} shows that it can take exponential time even in two dimensions.
Arthur et al.~\cite{AMR11} proved that Lloyd's algorithm has polynomial-time smoothed complexity.
Kumar and Kannan~\cite{KK10}, Ostrovsky et al.~\cite{ORSS12}, and Awasthi et al.~\cite{ABS10}
gave empirical and theoretical evidence for when and why the known heuristics work well in practice.
For instance~\cite{ABS10} show that when the size of an optimum $(k-1)$-{\textsc{means}} is sufficiently larger than the
cost of \kmeans then one can get a near optimum solution to \kmeans using a variant of Lloyd's algorithm.

The \kmeans problem is known to be NP-hard~\cite{ADHP09,DFKVV04,MNV09}.
In fact, the \kmeans problem is NP-hard if $d$ is arbitrary even for $k=2$~\cite{ADHP09,DFKVV04}. Also, if $k$ is arbitrary
the problem is NP-hard even for $d=2$~\cite{MNV09,Vattani09}.
However, the \kmeans problem can be solved in polynomial time by the algorithm of~\cite{Inaba1994} when both $k$ and $d$ are constant.

A {\em polynomial-time approximation scheme} (PTAS) is an algorithm that accepts an additional parameter $\eps > 0$.
It finds solutions whose cost is at most $1 + \eps$ times the optimum solution cost
and runs in polynomial time when $\eps$ can be regarded as a fixed constant (i.e. $O(n^{f(\eps)})$ for some function $f$).
Matou\v{s}ek~\cite{Matousek00} gave a PTAS for fixed $k,d$ with running time $O(n (\log n)^k\eps^{-2k^2d})$.
Since then several other PTASs have been
proposed for variant settings of parameters but all need $k$ to be constant 
\cite{OR02,BHI02,DKKR03,HM04,KSS04,KSS10,FMS07,HK05}.
B\={a}doiu et al.~\cite{BHI02} gave a PTAS for fixed $k$ and any $d$ with time 
$O(2^{(k/\epsilon)^{O(1)}}{\rm poly}(d)n\log^k n)$.
De la Vega et al.~\cite{DKKR03} proposed a $(1+\epsilon)$-approximation
with running time $O(2^{(k^3/\epsilon^8)(\ln(k/\epsilon))\ln k}dn\log^k n)$. This was improved to 
$O(2^{(k/\epsilon)^{O(1)}}dn)$ by Kumar et al.~\cite{KSS04,KSS10}.
By building {\em coresets} of size $O(k\epsilon^{-d}\log n)$,
Har-Peled and Mazumdar~\cite{HM04} presented a $(1+\epsilon)$-approximation with time 
$O(n+k^{k+1}\epsilon^{-(2d+1)k}\log^{k+1} n\log^k\frac{1}{\epsilon})$. This was slightly improved by Har-Peled and 
Kushal~\cite{HK05}. Feldman et al.~\cite{FMS07} developed another $(1+\epsilon)$-approximation with running time 
$\tilde{O}(nkd+d\cdot {\rm poly}(k/\epsilon)+2^{\tilde{O}(k/\epsilon)})$, where the $\tilde{O}(.)$ notation hides polylog factors.
Recently, Bandyapadhyay and Varadarajan~\cite{BV16} presented a pseudo-approximation for \kmeans in fixed-dimensional Euclidean space:
their algorithm finds a solution whose cost is at most $1+\epsilon$ times of the optimum but might use up to
$(1+\epsilon)\cdot k$ clusters.

The result of Matou\v{s}ek~\cite{Matousek00} also shows that 
one can select a set $\fa$ of ``candidate'' centers in $\RR^d$ from which the $k$ centres should be chosen from with a loss
of at most $(1+\epsilon)$ and this set can be computed in time $O(n\epsilon^{-d}\log(1/\epsilon))$.
This reduces the $k$-means problem to the discrete setting where along with $\cX$ we have a set $\fa$ of candidate
centres and we have to select $k$ centres from $\fa$. 
Kanungo et al.~\cite{KMNPSW04} proved that a simple local search heuristic yields an algorithm with
approximation ratio $9+\epsilon$ for $\RR^d$. This remains the best known approximation algorithm with polynomial running time
for $\RR^d$. They also present an algorithm which is a hybrid of local
search and Lloyd algorithm and give empirical evidence that this works much better in practice (better than Lloyd's 
algorithm) and has proven constant factor ratio. For general metrics, Gupta and Tangwongsan \cite{GT08} proved that local
search is a $(25+\eps)$-approximation.
It was an open problem for a long time whether \kmeans is APX-hard or not in Euclidean metrics.
This was recently answered positively by Awasthi et al.~\cite{ACKS15} where they showed that the problem is
 APX-hard in $\RR^d$ but the dimension $d$ used in the proof is $\Omega(\log n)$.
Blomer et al.~\cite{BLSS16} have a nice survey of theoretical analysis of different $k$-means algorithms.

%this can be reduced to $\Omega(\log n)$ by using the Johnson-Lindenstrauss dimension reduction (Dasgupta and Gupta~\cite{DG03}).
%ZF: I commented this out because Awasthi do in fact mention how to get log n using JL at the end of their paper.

~

\noindent
{\bf \kmed}\\
Another very well studied problem that is also closely related to \kmeans is \kmed.
The only difference is that the goal (objective function) in \kmed is to minimize the sum of distances,
instead of sum of square of distances as in \kmeans, i.e. minimize $\sum_{i=1}^k\sum_{x\in C_i} \del(x,c_i)$
where $\del(x,c_i)$ is the distance between $x$ and $c_i$. This problem occurs in operations research settings.

There are constant factor approximation algorithms for
\kmed in general metrics. The simple local search (which swaps in and out a constant number of centres in
each iteration) is known to give a $3+\epsilon$ approximation by Arya et al.~\cite{AGKMMP01,AGKMMP04}. 
The current best approximation uses different techniques and has an approximation ratio
of $2.611+\epsilon$~\cite{LS13,BPRST15}. The local search $(9+\epsilon)$-approximation (for \kmeans) in~\cite{KMNPSW04} 
can be seen as an extension of the analysis in~\cite{AGKMMP01} for \kmed. One reason that analysis of \kmeans
is more difficult is that the squares of distances do not necessarily satisfy the triangle inequality.
For instances of \kmed on Euclidean metrics, Arora et al.~\cite{ARR98},
building on the framework of Arora~\cite{Arora98}, gave the first PTAS.
Kolliopoulos and Rao~\cite{KR99} improved the time complexity and presented a PTAS for Euclidean \kmed with
time complexity $O(2^{O((1+\log\frac{1}{\epsilon})/\epsilon)^{d-1}}n\log n\log k)$. Such an approximation is also known
as an {\em efficient PTAS}: the running time of the $(1+\eps)$-approximation is of the form $f(\eps) \cdot {\rm poly}(n)$
(in fixed-dimension metrics).

%%%%%%%%%%%%%%%%%%%%%%%%%%%%%%%%%%%%%%%%%%%%%%%%%%%%%%%%%%%%%%%%%%%%%%%%%%%%%%%%%%%%%

~

\noindent
{\bf \textsc{Uncapacitated facility location}}\\
The \ufl problem is the same as \kmed except instead of a cardinality constraint bounding the number of open facilities, we are instead
given opening costs $f_i$ for each $i \in \fa$. The goal is to find a set of centers $\loc \subseteq \fa$ that 
minimizes $\sum_{x \in \cX} \del(x, \loc) + \sum_{i \in \loc} f_i$.
Currently the best approximation for \ufl in general metrics is a 1.488-approximation~\cite{L11}. As with \kmed, a PTAS is known  for \ufl in constant-dimensional Euclidean metrics \cite{ARR98,KR99},
with the latter giving an efficient PTAS.

%%%%%%%%%%%%%%%%%%%%%%%%%%%%%%%%%%%%%%%%%%%%%%%%%%%%%%%%%%%%%%%%%%%%%%%%%%%%%%%%%%%%%
\subsection{Our result and technique}
Although a  PTAS for \kmed in fixed dimension Euclidean space 
has been known for almost two decades, getting a PTAS for \kmeans in fixed dimension 
Euclidean space has remained an open problem. We provide a PTAS for this setting.
We focus on the discrete case where we have to select a set of $k$ centres from a given set $\fa$.
Our main result is to show that a simple local search heuristic that swaps up 
to $\rho=d^{O(d)}\cdot{\epsilon}^{-O(d/\epsilon)}$ 
centres at a time and assigns each point to the nearest centre is a PTAS for \kmeans in $\RR^d$.

A precise description of the algorithm is given in Section \ref{sec:prem}. At a high level, we start with
any set of $k$ centres $\loc \subseteq \fa$. Then, while there is some other set of $k$ centres $\loc' \subseteq \fa$ with $|\loc - \loc'| \leq \rho$
for some constant $\rho$ such that $\loc'$ is a cheaper solution than $\loc$, we set $\loc \leftarrow \loc'$. Repeat until $\loc$ cannot be improved any further.
Each iteration takes $|\fa|^{O(\rho)}$ time, which is 
 polynomial when $\rho$ is a constant.
Such a solution is called a {\em local optimum} solution with respect to the $\rho$-swap heuristic.
We still have to ensure that the algorithm only iterates a polynomial number of times; a standard modification
discussed in Section \ref{sec:prem} ensures this.

Recall that the {\em doubling dimension} of a metric space is the smallest $\tau$ such that any ball of
radius $2r$ around a point can be covered by at most $2^\tau$ balls of radius $r$.
If the doubling dimension can be regarded as a constant then we call the metric a {\em doubling metric}
(as in~\cite{Talwar04}). The Euclidean metric over $\RR^d$
has doubling dimension $O(d)$. 
Our analysis implies a PTAS for more general settings where the data points are in a metric space
with constant doubling dimension (described below)
and when the objective function is to minimize the sum of $q$'th power of the distances for some fixed $q \geq 1$.

Let $\swaps := d^{O(d)} \cdot \eps^{O(-d/\eps)}$.
We will articulate the absolute constants suppressed by the $O(\cdot)$ notation later on in our analysis.
\begin{theorem}\label{theo:main}
The local search algorithm that swaps up to $\swaps$ centres at a time
is a $(1+\epsilon)$-approximation for \kmeans in metrics with doubling dimension $d$.
\end{theorem}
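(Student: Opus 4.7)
The plan is to compare the local optimum $\loc$ with the global optimum $\opt$ via a set of structured test swaps whose feasibility is guaranteed by local optimality. First I would fix $\loc$ and $\opt$ with $|\loc| = |\opt| = k$ and, after moving to the discrete setting, use the natural assignment $\sigma\colon \cX \to \loc$ (each point to its closest centre in $\loc$) and analogously $\phi\colon \cX \to \opt$. The overall target inequality is $\cost(\loc) \le (1+\eps)\cost(\opt)$. The strategy is to produce a collection of swap pairs $(A_1, B_1), \ldots, (A_m, B_m)$ with $A_j \subseteq \loc$, $B_j \subseteq \fa$, $|A_j|=|B_j|\le \swaps$, such that (i) each swap is admissible for local optimality, so $\cost((\loc \setminus A_j) \cup B_j) \ge \cost(\loc)$; (ii) each centre of $\loc$ appears in few $A_j$'s and each centre of $\opt$ in few $B_j$'s (a bounded ``multiplicity''); and (iii) when we sum the upper bounds obtained by \emph{reassigning} points after each swap, the total is at most $(1+\eps)\cost(\opt) - (1-\eps)\cost(\loc)$. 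Rearranging this yields the theorem.

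The swaps will come from a randomized hierarchical partition of the points of $\loc\cup \opt$ tailored to the doubling dimension. Concretely, I would build a randomly shifted nested partition $\mathcal{P}_0 \succ \mathcal{P}_1 \succ \cdots$ of the metric, where a cell at level $i$ has diameter $\Theta(R \cdot b^i)$ for a geometric base $b$ chosen as $b = 2^{\Theta(1/\eps)}$. In a metric of doubling dimension $d$, a level-$i$ cell can contain at most $b^{O(d)} = 2^{O(d/\eps)}$ level-$(i-1)$ cells, giving a bounded branching factor. For each cell $C$ at each level of the hierarchy we group the $\loc$- and $\opt$-centres lying in $C$ and generate a swap $(A,B)$ with $A = \loc \cap C$ and $B = \opt \cap C$; by a standard balancing trick (adding ``filler'' swaps of equal size between the surpluses of $\loc$ and $\opt$ within parent cells) we can enforce $|A|=|B|$ while keeping $|A|+|B| \le \swaps$. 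This is where the doubling dimension pays off: the cell capacity depends only on $d$ and $\eps$, not on $n$ or $k$.

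The reassignment argument is the core technical step. After swapping $A$ out and $B$ in, I reassign each point $x\in \cX$ as follows: if $\sigma(x)\notin A$ the point keeps its centre; otherwise, I move $x$ either to $\phi(x)$ (if $\phi(x)\in B$) or, when $\phi(x)$ has not yet been opened, along a ``witness path'' in the hierarchy from $\sigma(x)$ up to the first ancestor cell that contains both $\sigma(x)$ and $\phi(x)$, and down to $\phi(x)$. The extra distance incurred is bounded by the diameter of that common ancestor cell, so in expectation over the random shift it is roughly $b^{O(1)}\cdot \del(x,\phi(x))$. The hard part will be turning this linear distance bound into a bound on squared distances: using $(a+b)^q \le (1+\eta)^{q-1} a^q + (1+1/\eta)^{q-1} b^q$ with $\eta = \Theta(\eps)$, each rerouted point $x$ contributes at most $(1+\eps)\cost_\opt(x) + (\text{a small fraction})\cdot \cost_\loc(x)$, and multiplying by the (bounded) expected number of swaps that affect $x$ must still yield an overall coefficient of $(1-\eps)$ on $\cost(\loc)$ and $(1+\eps)$ on $\cost(\opt)$. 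Controlling these multiplicative losses along up to $O(\log_b(\text{aspect ratio}))$ levels and showing that this telescopes cleanly is the main obstacle.

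Having summed the local-optimality inequalities $\cost((\loc\setminus A_j)\cup B_j) \ge \cost(\loc)$ weighted appropriately against the reassignment-based upper bounds, I would derandomize by showing the random shift can be replaced by the best deterministic shift (only polynomially many relevant shifts) and take a union over levels. A final bounded-aspect-ratio reduction, obtained by a preliminary snapping of the instance to a grid at scale $\eps \cdot \OPT/n$ (so that $\log b\cdot \text{depth}$ is $\poly(d/\eps)$), closes the argument and gives the bound stated in Theorem~\ref{theo:main}.
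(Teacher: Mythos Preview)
Your proposal has a genuine structural gap: the scheme of taking swaps from cells \emph{at every level} of a single hierarchical decomposition cannot simultaneously give (i) bounded multiplicity and (ii) bounded swap size. If every cell at every level generates a swap, then each centre lies in one cell per level, so its multiplicity is the depth of the hierarchy; your aspect-ratio reduction to scale $\eps\cdot\OPT/n$ still leaves aspect ratio $\poly(n/\eps)$ and hence depth $\Theta(\log n)$, not $\poly(d/\eps)$. Summing $\Theta(\log n)$ local-optimality inequalities per centre destroys the $(1+\eps)$ factor, and there is no telescoping here: the terms $\costo_j-\costl_j$ accrue once per level and do not cancel. Conversely, if you pick a single level to get a true partition, cells are small enough only if the level is low, but then the reassignment of a point $x$ with $\sigma(x)\in A$ and $\phi(x)\notin B$ forces a detour whose length is governed by the cell diameter, not by $\del(x,\phi(x))$; for pairs $\sigma(x),\phi(x)$ that are far apart this detour is not $O(\eps)\cdot(\del(x,\sigma(x))+\del(x,\phi(x)))$ and the squared-distance bound blows up.

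What the paper does that you are missing is to let the \emph{scale of the cell depend on the centre}. It first attaches to each $i\in\loc\cup\opt$ the quantity $D_i=\del(i,\text{other solution})$, filters $\loc$ and $\opt$ so that surviving centres are $\eps D_i$-separated, and then groups the survivors into geometric \emph{bands} by $D_i$. Within a band all $D_i$ are within a $\poly(1/\eps)$ factor, so a single random grid (Talwar's decomposition at one fixed level for that band) cuts out cells that are both small (a packing bound using the $\eps D_i$-separation gives size $d^{O(d)}\eps^{-O(d/\eps)}$) and wide enough that any pair $(i^*,i)$ with $\del(i,i^*)\le D_i/\eps$ is separated with probability at most $\eps$. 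This yields a genuine \emph{partition} of $\loc\cup\opt$ (multiplicity exactly one) with bounded parts. The reassignment analysis then needs a second idea you do not have: when $\sigma(x)$ is closed but $\phi(x)$ is not opened in the same part, $x$ is not sent toward $\phi(x)$ at all; instead it is sent to an open centre within $O(D_{\sigma(x)})$ of $\sigma(x)$, whose existence is guaranteed by a structural lemma about pairs $(\cent(i),i)$ that the partition is forced never to separate. Your witness-path rerouting to $\phi(x)$ cannot work because $\phi(x)$ is simply not open after that swap.
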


Now consider the generalization where the objective function measures the sum of $q$'th power of the distances
between points and their assigned cluster centre, we call this \lqnorm. %$\ell_q^q$-norm $k$-clustering. 
Here, we are given the points $\cX$ in a metric space $\del(\cdot,\cdot)$ along with a set $\fa$ of potential centres.
We are to select $k$ centres from $\fa$ and partition the
points into $k$ cluster sets $C_1,\ldots,C_k$ with each $C_i$ having a centre $c_i$ so as to minimize
\[\sum_{i=1}^k \sum_{x\in C_i} \del(x,c_i)^q.\]
Note that the case of $q=2$ is the \kmeans problem and $q=1$ is \kmed.
%For clarity, we present the result for \kmeans (i.e. $q=2$); an almost identical proof extends to the more general
%setting of $\ell^q_q$ for constant $q$ by swapping up to $C'(\eps, d, q) = d^{O(d)} \cdot (2^q/\eps)^{O(d 2^q / \eps)}$
%points in the local search algorithm.
We note that our analysis extends to provide a PTAS for this setting when $q$ is fixed.
That is, Theorem \ref{theo:main} holds for \lqnorm %$\ell^q_q$-norm $k$-clustering
for constants $q \geq 1$, except that
we require that the local search procedure be the $\rho'$-swap 
heuristic where $\rho' = d^{O(d)} \cdot (2^q/\eps)^{O(2^q \cdot d/\eps)}$.

Note that even for the case of \kmed, this is the first PTAS for metrics with constant doubling dimension.
%It is also the first PTAS using local search even for $\RR^d$.
Also, while a PTAS was known for \kmed for constant-dimensional Euclidean metrics, determining if local
search provided such a PTAS was an open problem. For example,~\cite{CM15} shows that local search
can be used to get a $1+\eps$ approximation for \kmed that uses up to $(1 + \eps) \cdot k$ centres.

For the ease of exposition, we initially present the proof restricted to \kmeans in $\RR^d$.
We explain in Section \ref{sec:doubling1} how to extend the analysis to doubling metrics and
describe in Section \ref{sec:lpq} how this can be easily extended to prove Theorem \ref{theo:main}
when the objective is \lqnorm. %$\ell^q_q$-norm $k$-clustering.

As mentioned earlier, Awasthi et al.~\cite{ACKS15} proved that \kmeans is APX-hard for $d=\Omega(\log n)$
and they left the approximability of \kmeans for lower dimensions as an open problem.
A consequence of our algorithm is that one can get a $(1+\epsilon)$-approximation for \kmeans that runs
in sub-exponential time for values of $d$ up to $O(\log n/\log\log n)$.
More specifically, for any given
$0<\epsilon<1$ and  $d=\sigma\log n/\log\log n$ for sufficiently small absolute constant $\sigma$
we get a $(1+\epsilon)$-approximation for \kmeans that
runs in time $O(2^{n^{\kappa}})$, for some constant $\kappa=\kappa(\sigma) < 1$; for $d=O(\log\log n/\log\log\log n)$ we
get a {\em quasi-polytime approximation scheme} (QPTAS). Therefore, our result in a sense shows that the requirement of~\cite{ACKS15} of $d=\Omega(\log n)$
to prove APX-hardness of \kmeans is almost tight unless ${\rm NP}\subseteq DTIME(2^{n^\sigma})$.

The notion of coresets and using them for finding faster algorithms for $k$-means has been studied extensively
(e.g.~\cite{HM04,HK05,Chen09, FMS07,FSS13} and references there).
A coreset is a small subset of data points (possibly with weights associated to them) such that running the clustering
algorithm on them (instead of the whole data set) generates a clustering of the whole data set with approximately good
cost. In order to do this, one can go to the discrete case.
%Matou\v{s}ek~\cite{Matousek00} showed that
%one can select a set $\fa$ of ``candidate'' centers from which the centres can be chosen from with a loss
%of at most $(1+\epsilon)$ and this set can be computed in time $O(n\epsilon^{-d}\log(1/\epsilon))$. {\bf ZF: Be more precise.}
To use coresets, we need to be able to solve (discrete) $k$-means in the more general
setting where each centre  $i\in\fa$ has an associated weight $w(i)$, and the cost of assigning a point $j$ to $i$ (if $i$
is selected to be a centre) is $w(i)\cdot \del(i,j)^2$.
Our local search algorithm works for this weighted setting as well.
We will show how to use these ideas to improve the running time of the local search algorithm.
%Har-Peled and Kushal~\cite{HK05} show that for $k$-means, one can find a coreset of size $O(k^3/\epsilon^{d+1})$
%with a $(1+\epsilon)$ factor loss. We can therefore use thi
%Combining this with our local search algorithm, we find an algorithm with running time $O(???)$.

Finally, we observe that our techniques easily extend to show a natural local-search heuristic for \ufl is a PTAS.
In particular, for the same constant $\rho$ as in Theorem \ref{theo:main} (in fact, it can be slightly smaller)
we consider the natural local search algorithm that returns a solution $\loc \subseteq \fa$ such that ${\rm cost}(\loc) \leq {\rm cost}(\loc')$
for all $\loc' \subseteq \fa$ with both $|\loc - \loc'| \leq \rho$ and $|\loc' - \loc| \leq \rho$ (i.e. add and/or drop up to $\rho$ centres in $\fa$).
\begin{theorem}\label{thm:ufl}
The local search algorithm that adds and/or drops up to $\swaps$ centres at a time
is a $(1+\epsilon)$-approximation for \ufl in metrics with doubling dimension $d$.
\end{theorem}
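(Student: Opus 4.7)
The plan is to adapt the structural machinery developed for Theorem \ref{theo:main} on doubling metrics. Let $\loc$ be the solution returned by the $\rho$-add/drop local search and let $\opt$ be an optimal \ufl solution. As in the \kmeans analysis, I would first invoke the partitioning scheme for doubling metrics from Section \ref{sec:doubling1}: it produces a collection of ``test groups'' $G_1, G_2, \ldots$ that collectively partition $\loc \cup \opt$ (with each side partitioned separately), where each $|G_i| \le \swaps$, each facility of $\loc$ appears in exactly one group, each facility of $\opt$ appears in exactly one group, and for each group the points assigned by $\opt$ to $G_i \cap \opt$ can be rerouted to $G_i \cap \loc$ (or the other way around) while paying only a small $(1+\eps)$-style overhead on routing cost. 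The only change from \kmeans is that we use the $\ell_1$ version of the rerouting bound, as we did when moving from \kmeans to \kmed/\lqnorm.

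Next, for each group $G_i$ I would use the add/drop test move that simultaneously opens $G_i \cap \opt$ and closes $G_i \cap \loc$; this uses at most $|G_i| \le \swaps$ operations, so it is a feasible move. Reassign every client served by something in $G_i \cap \loc$ under $\loc$ to its centre in $G_i \cap \opt$, and leave all other clients assigned as in $\loc$ (using the fact that clients not touched by $G_i$ still have an open facility nearby). Local optimality of $\loc$ implies that the total cost change of this move is nonnegative. The facility cost change is exactly $\sum_{i \in G_i \cap \opt} f_i - \sum_{i \in G_i \cap \loc} f_i$, and the service cost change is bounded by the same rerouting argument used for \kmed in doubling metrics, yielding a term at most $\eps \cdot \costl + \text{(contribution from } \costo \text{ inside } G_i)$.

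Summing the inequalities over all groups, the facility costs telescope neatly: each $i \in \opt$ is opened in exactly one test move and each $i \in \loc$ is closed in exactly one test move, so the total facility cost change is precisely $\sum_{i \in \opt} f_i - \sum_{i \in \loc} f_i$. Similarly, the service cost contributions from all groups add up to at most $\text{(const)} \cdot \costo - \costl + \eps \cdot \costl$ by the same calculation as in the doubling-metric \kmed proof. Combining,
\[
0 \;\le\; \Bigl(\sum_{i \in \opt} f_i - \sum_{i \in \loc} f_i\Bigr) + \bigl((1+\eps)\costo - \costl\bigr) + \eps \cdot \costl,
\]
which after rearranging yields $\cost(\loc) \le (1+O(\eps))\,\cost(\opt)$, and rescaling $\eps$ gives the theorem.

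The main obstacle is verifying that the partition lemma used for \kmed/\kmeans in doubling metrics applies essentially verbatim here. In the cardinality-constrained problems, the partition had to pair $\loc$ and $\opt$ facilities in groups of balanced sizes so that a swap made sense; for \ufl there is no cardinality constraint, so I expect the partition is actually easier to produce (groups need not be balanced), but I must confirm that the rerouting bound used inside each group does not rely on $|G_i \cap \loc| = |G_i \cap \opt|$. A careful inspection of the doubling-metric construction should show that the rerouting is purely a routing argument that only needs every client assigned by $\opt$ in $G_i$ to have a nearby centre in $G_i \cap \loc$ (and vice versa), which the net/cover construction already guarantees. Once this is checked, Theorem \ref{thm:ufl} follows with essentially the same $\rho$ as Theorem \ref{theo:main}.
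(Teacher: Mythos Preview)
Your proposal is correct and follows essentially the same route as the paper: state an unbalanced version of the partitioning theorem (the paper calls this Theorem~\ref{thm:partition_ufl}, obtained by simply omitting the balancing step of Section~\ref{sec:balancing}), use each part $P$ as a test swap so that every $i\in\loc$ is closed once and every $i^*\in\opt$ is opened once, and then combine the telescoping facility-cost terms with the same per-client reassignment bounds and expectation over the random partition. One small caution: your sentence ``reassign every client served by something in $G_i\cap\loc$ to its centre in $G_i\cap\opt$'' is an oversimplification---the actual rerouting uses the lucky/long/bad/good case analysis and Lemma~\ref{lem:cents}, and the ``bad'' case is handled only in expectation---but since you then invoke ``the same rerouting argument used for \kmed in doubling metrics,'' your plan already accounts for this.
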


This seems to be the first explicit record of a PTAS for \ufl in doubling metrics, but one was known in constant-dimensional Euclidean metrics
\cite{ARR98,KR99}. Prior to this work, local search was only known to provide a PTAS for \ufl in constant-dimensional Euclidean metrics
if all opening costs are the same \cite{CMSOGC15}. The basic idea why this works is our analysis in Theorem \ref{theo:main} considers
test swaps that, overall, swap in each local optimum centre exactly once and swap out each global optimum centre exactly once (i.e. the swaps
come from a full partitioning of the local and global optimum). The partitioning we use to obtain these swaps only uses the assumption that
precisely $k$ centres are open in a feasible solution at one point, and this step can be safely ignored in the case of \ufl.

Lastly, we consider the common generalization of \kmed and \ufl where all centres have costs and at most $k$ centres may be chosen
(sometimes called the \gkm or $k$-\textsc{uncapacitated facility location} problem).
We consider the local search operation that tries to add and/or drop up to $\rho$ centres at a time as long as the candidate solution being tested
includes at most $k$ centres. We get a PTAS in this case as well. To the best of our knowledge, the previous best approximation was a 5-approximation
in general metrics, also obtained by local search \cite{GT08}.
\begin{theorem}\label{thm:gkm}
The local search algorithm that adds and/or drops up to $\swaps$ centres at a time (provided the resulting solution has at most $k$ facilities)
is a a $(1+\epsilon)$-approximation for \gkm in metrics with doubling dimension $d$.
\end{theorem}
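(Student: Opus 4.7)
The plan is to combine the analyses of Theorems \ref{theo:main} and \ref{thm:ufl}. Let $\loc$ be the local optimum returned by the \gkm local search algorithm and $\opt$ a global optimum; both satisfy $|\loc|, |\opt| \leq k$. For a solution $\calS$ let $c(\calS)$ denote its connection cost and $f(\calS) = \sum_{i \in \calS} f_i$ its opening cost. The first step is to apply the geometric region decomposition that drives the proof of Theorem \ref{theo:main} to the pair $(\loc, \opt)$, obtaining a weighted family of test operations $\{(A_t, B_t, \lambda_t)\}_{t \in T}$ with $A_t \subseteq \loc$, $B_t \subseteq \opt$, $|A_t|, |B_t| \leq \swaps$, the unit-counting identities
\[
\sum_t \lambda_t\,\mathbf{1}[i \in A_t] = 1 \text{ for all } i \in \loc,\qquad \sum_t \lambda_t\,\mathbf{1}[j \in B_t] = 1 \text{ for all } j \in \opt,
\]
and a weighted bound on the connection-cost change: if $\Delta^c_t := c(\loc - A_t + B_t) - c(\loc)$, then $\sum_t \lambda_t \Delta^c_t \leq (1+\eps)\,c(\opt) - c(\loc)$.

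The second step forces each test operation to be admissible, i.e. $|\loc - A_t + B_t| \leq k$. In Theorem \ref{theo:main} this holds automatically because the partition is balanced ($|A_t| = |B_t|$), forced by the exact equality $|\loc| = |\opt| = k$ in \kmeans; in Theorem \ref{thm:ufl} there is no cardinality constraint at all. For \gkm I would refine the partition so that $|A_t| \geq |B_t|$ pointwise whenever $|\loc| = k$, which is possible because the unit-counting identities yield $\sum_t \lambda_t(|B_t| - |A_t|) = |\opt| - |\loc| \leq 0$: any ``heavy-add'' piece of the partition (one with $|B_t| > |A_t|$) can be split into singleton adds and paired against spare drop-slots from ``heavy-drop'' pieces, which must exist by this averaging identity. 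When $|\loc| < k$, the slack $k - |\loc|$ absorbs small imbalances directly, and the same rerouting handles the rest. The refinement inflates swap sizes by at most a constant factor, remains within $\swaps$, and preserves the connection-cost bound of Step~1.

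With admissibility in hand, the local optimality of $\loc$ gives $c(\loc - A_t + B_t) + f(\loc - A_t + B_t) \geq c(\loc) + f(\loc)$ for every $t$, i.e. $\Delta^c_t + \sum_{i \in B_t} f_i - \sum_{i \in A_t} f_i \geq 0$. Summing with weights $\lambda_t$ and using the unit-counting identities to telescope the opening-cost sum into $f(\opt) - f(\loc)$ (exactly as in Theorem \ref{thm:ufl}), then combining with the connection-cost bound of Step~1, gives $c(\loc) + f(\loc) \leq (1+\eps)\bigl(c(\opt) + f(\opt)\bigr)$, as required. The main obstacle is the partition-refinement argument of Step~2: verifying that rerouting heavy-add pieces through possibly distant heavy-drop pieces does not inflate the connection-cost side beyond Theorem \ref{theo:main}'s bound requires care, because the geometric decomposition there groups facilities by proximity and the rerouting must respect that grouping. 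Once this refinement is in place, the opening-cost accounting is a direct import from Theorem \ref{thm:ufl}.
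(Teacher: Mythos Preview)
The paper's proof is much simpler than your Step~2 suggests. It sidesteps admissibility entirely by \emph{padding}: add to $\fa$ artificial centres with opening cost $0$ located extremely far from $\cX$. Such centres never serve any client and cost nothing to open, so padding both $\loc$ and $\opt$ up to size exactly $k$ with (distinct) dummies leaves both costs unchanged and keeps $\loc$ locally optimal in the padded instance. Now $|\loc| = |\opt| = k$, Theorem~\ref{thm:partition} applies verbatim to give balanced parts with $|P \cap \loc| = |P \cap \opt|$, every test swap $\loc \triangle P$ has size exactly $k$ and is automatically admissible, and the opening-cost telescoping goes through exactly as you describe in your final paragraph.

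Your Step~2 refinement, as written, has a gap: splitting a heavy-add piece $(A_t, B_t)$ into singleton adds and re-pairing them with drops from a geometrically unrelated heavy-drop piece can break the connection-cost bound. In the proof of Theorem~\ref{thm:euclid}, the reassignment of a client $j$ with $\sigma(j) \in A_t$ relies on specific $\fopt$-centres in the \emph{same} part (namely $\osigma^*(j)$, or centres in the net around $\osigma(j)$) being opened simultaneously with the closing of $A_t$; moving those adds to another swap destroys this. A purely merging-based rebalancing --- merge heavy-add parts into heavy-drop parts, in the spirit of Lemma~\ref{lem:grouping} but targeting $\mu(P) \leq 0$ rather than $\mu(P) = 0$ --- would preserve the connection-cost bound and stay within a constant-factor blowup of swap size, so your route can be repaired. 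But the dummy-facility trick makes the whole detour unnecessary.
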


The results of Theorems \ref{thm:ufl} and \ref{thm:gkm} extend to the setting where we are considering the $\ell^q_q$-norm
for distances instead of the $\ell_1$-norm.

{\bf Note:} Shortly after we announced our result \cite{FRS16}, Cohen-Addad, Klein, and Mathieu \cite{AKM16,AKM16B} announced
similar results for \kmeans on Euclidean and minor-free metrics using the local search method. Specifically, 
they prove that the same local search algorithm yields a PTAS for \kmeans on Euclidean and minor-free metrics. These results are
obtained independently.

\subsection{Proof Outline} \label{sec:outline}
The general framework for analysis of local search algorithms for \kmed and \kmeans
in~\cite{AGKMMP04,KMNPSW04,GT08} is as follows. Let $\loc$ and $\opt$ be a local optimum and a global optimum solution,
respectively.
%Specifically, if we let $c$ denote the size of the swaps considered in the local search algorithm then
%$\cost(\loc) \leq \cost(\loc')$ for any $k$ centres $\loc'$ with $|\loc - \loc'| \leq c$.
%First they partition $\loc\cup \opt$ into subsets $S_1,\ldots,S_m$, $O_1,\ldots,O_m$ where for
%each $1\leq i\leq m$: $|S_i|=|O_i|$.
%Then
They carefully identify a set $Q$ of potential swaps 
between local and global optimum. %in each part exists
In each such swap, the cost of assigning a data point $x$ to the nearest 
centre after a swap is bounded with respect to the local and global cost assignment. 
In other words, if $\cB(\loc)$ is the set of solutions obtained by performing swaps from $Q$, the main
task is to show that $\sum_{\loc'\in\cB(\loc)} (\cost(\loc')-\cost(\loc))\leq \alpha\cdot \cost(\opt)-\cost(\loc)$ for
some constant $\alpha$. Given that $0 \leq \cost(\loc')-\cost(\loc)$ for all $\loc'\in\cB(\loc)$ (because $\loc$ is a local optimum),
$\cost(\loc)\leq\alpha\cdot \cost(\opt)$.

Our analysis has the same structure but has many more ingredients and several intermediate steps to get us what we want.
Note that the following only describes steps used in the analysis of the local search algorithm; we do not perform any of the
steps described below in the algorithm itself.

Let us define $\loc$ and $\opt$ as before. First, we do a filtering over $\loc$ and $\opt$ to obtain subsets
$\floc\subseteq \loc$ and $\fopt\subseteq\opt$ such that every centre in $\loc-\floc$ (in $\opt-\fopt$)
is ``close'' to a centre in $\floc$ (in $\fopt$) while these filtered centres are far apart. We define a ``net''
around each centre $i\in \floc$ which captures a collection of other filtered centres in $\fopt$ that are relatively
close to $i$. The idea of the net is that if we choose to close $i$ (in a test swap) 
then the data points that were to be assigned to $i$ will be assigned to a nearby centre in the net of $i$. 
Since the metric is a constant-dimensional Euclidean metric (or, more generally, a doubling metric), we can
choose these nets to have constant size.

For each $j$ assigned to $i$ in $\loc$, if the centre $i^*$ that $j$ is assigned to in the optimum solution lies somewhat close to $i$ then we can reassign $j$
to a facility in the net around $i$ that is close to $i^*$. In this case, the reassignment cost for $j$ will be close to $c^*_j - c_j$. Otherwise, if $i^*$ lies far from $i$
then we can reassign $j$ to a facility near $i$ in the net around $i$ and the reassignment cost will only be $O(\eps) \cdot (c^*_j + c_j)$ and we will generate the $c^*_j - c_j$ term
for the local search analysis when $i^*$ is opened in another different swap.

Of course, there are some complications in that we need to do something else with $j$
if the net around $i$ is not open. This will happen infrequently, but we need a somewhat reasonable bound when it does happen.
Also, for reasons that will become apparent in the analysis
we do something different in the case that $i^*$ is somewhat close to $i$ but $i^*$ is much closer to a different facility in $\loc$ than it is to $i$.

%if we can find a nearby open centre in the
%net of $i$ then the change in reassignment cost of $j$ will be small, i.e. only an $O(\epsilon)$ factor more 
%than the cost of assignment of $j$ in the optimum, or the assignment cost change of $j$ can be bounded
%by roughly $c^*_j - c_j$ where $c^*_j$ is the assignment cost of $j$ in $\opt$ and $c_j$ is the assignment cost of $j$ in $\loc$
%(the square of the distance of the assignments).
 
One main part of our proof is to show that there exists a suitable 
randomized partitioning of $\loc\cup\opt$ such that each
part has small size (which will be a function of only $\epsilon$ and $d$ and, ultimately, determines the size of the swaps in our local search procedure),
and for any pair $(i, i^*) \in \floc \times \fopt$ where $i^*$ lies in the net of $i$
we have $\Pr[i, i^* {\rm ~lie~in~the~same~part}] \geq 1-\eps$.
%with probability at least $1-\epsilon$ both $i$ and $i^*$ lie in the same part.
%Hence with probability very close to 1,
%for each point $j$ we can find a centre that is not too far away (w.r.t. the optimum centre for $j$).
This randomized partitioning is the only part of the proof that we rely on properties of doubling metrics (or $\RR^d$).
For those small portion of facilities that their net is ``cut'' by our partitioning, we show that when we close
those centres in our test swaps then the reassignment cost of a point $j$ that was assigned to them is only $O(1)$
times more than the sum of their assignment costs in $\opt$ and $\loc$. Given that this only happens with small probability (due to our random
partition scheme), this term is negligible in the final analysis.
So we get an overall bound of $1+O(\epsilon)$ on the ratio of cost of $\loc$ over $\opt$.

{\bf Outline of the paper:} We start with some basic definitions and notation in Section \ref{sec:prem}.
In Sections \ref{sec:euclid} and \ref{sec:partitioning} we show that the local search algorithm with an appropriate number
of swaps provides a PTAS for \kmeans in $\RR^d$.
In Section \ref{sec:doubling} we show how this can be extended to prove Theorem \ref{theo:main}
and to the setting where we measure the $\ell_q^q$-norm of the solution for any constant $q \geq 1$.
Finally, before concluding, in Section \ref{sec:ufl} we show that our analysis easily extends to prove Theorems \ref{thm:ufl} and \ref{thm:gkm}.

%%%%%%%%%%%%%%%%%%%%%%%%%%%%%%%%%%%%%%%%%%%%%%%%%%%%%%%%%%%%%%%%%%%%%%%%%%%%%%%%%%%%%%%%%%%%%%%%%
\section{Notation and Preliminaries}\label{sec:prem}
Recall that in the \kmeans problem we are given a set $\cX$ of $n$ points in $\RR^d$
and an integer $k\geq 1$; we have to find $k$ centres $c_1,\ldots,c_k \in \RR^d$ so as to minimize the
sum of squares of distances of each point to the nearest centre. As mentioned earlier,
by using the result of~\cite{Matousek00}, at a loss of $(1+\epsilon)$ factor
we can assume we have a set $\fa$ of ``candidate'' centres from which the centres can be chosen from.
This set can be computed in time $O(n\epsilon^{-d}\log(1/\epsilon))$ and $|\fa|=O(n\epsilon^{-d}\log(1/\epsilon))$.
Therefore, we can reduce the problem to the discrete case.

Formally, suppose we are given a set $\fa$ of points (as possible
cluster centres) along with $\cX$ and we have to select the $k$ centres from $\fa$.
Furthermore, we assume the points are given in a metric space $(V,\del)$ (not necessarily $\RR^d$).
For any two points $p,q\in V$, $\del(p,q)$ denotes the distance between them: for the case of the metric being
$\RR^d$, then $\del(p,q)=\sqrt{\sum_{\ell=1}^d |p_\ell-q_\ell|^2}$.
%^{\frac{1}{2}}$. 

We usually refer to a potential centre in $\fa$ by a simple index $i$ and a point in $\cX$ by a simple index $j$ (or slight
variants like $i^*$ or $\overline{i'}$). This is to emphasize that we do not need to talk about specific coordinates
of points in Euclidean space.
In fact, only once in our proof do we rely on the particular embedding of the points
in Euclidean space. This argument will also be replaced by a more general argument when discussing doubling metrics
in Section \ref{sec:doubling1}.
So, for any set $S\subseteq\fa$ and any $j\in\cX$, let $\del(j,S)=\min_{i\in S}\del(j,i)$.
We also define  $\cost(S) = \sum_{j \in \cX} \del(j, S)^2$.

Our goal in (discrete) \kmeans is to find a set of centres $S\subseteq\fa$ of size $k$ to minimize $\cost(S)$. 
Note that once we fix the set of centres we can find a partitioning of $\cX$ that realizes $\sum_{j\in \cX} \del(j,S)^2$
by assigning each $j\in\cX$ to the nearest centre in $S$, breaking ties arbitrarily.

%As in~\cite{Talwar04,GKL03}, we say a metric is {\em doubling} if it has constant doubling dimension.
For ease of exposition we focus on \kmeans on $\RR^d$ and then show
how the analysis can be extended to work for
metrics with constant doubling dimension and when we want to minimize $\sum_{j\in \cX} \del(j,S)^q$ for a fixed $q\geq 1$ instead of just $q=2$.

The simple $\rho$-swap local search heuristic shown in Algorithm \ref{alg:local} 
is essentially the same one considered in~\cite{KMNPSW04}.

\begin{algorithm*}[t]
 \caption{$\rho$-Swap Local Search} \label{alg:local}
\begin{algorithmic}
\State Let $\loc$ be an arbitrary set of $k$ centres from $\fa$
\While{$\exists$ sets $P\subseteq\fa-\loc$, $Q\subseteq \loc$ with $|P|=|Q|\leq \rho$ s.t. 
$\cost((\loc - Q) \cup P)<\cost(\loc)$}
\State $\loc\leftarrow (\loc - Q) \cup P$
\EndWhile
\State \Return $\loc$
\end{algorithmic}
\end{algorithm*}

%Throughout we assume that $\eps>0$ is sufficiently small.
Recall that we defined $\swaps= d^{O(d)}\cdot\eps^{O(d/\eps)}$, where the constants will be specified later
%$\swaps = 32 \cdot (2d)^{9d} \cdot \eps^{-32 \cdot d/\eps}$ 
and consider the local search algorithm with $\rho=\swaps$ swaps. 
By a standard argument (as in~\cite{AGKMMP01,KMNPSW04}) one can show that % by choosing $\psi=O(\epsilon/k)$
replacing the condition of the while loop with $\cost((\loc - Q) \cup P)\leq(1-\frac{\eps}{k}) \cdot \cost(\loc)$,
the algorithm terminates in polynomial time.
Furthermore, if $\alpha$ is such that any locally optimum solution
returned by Algorithm \ref{alg:local} has cost at most $\alpha \cdot \cost(\opt)$ where $\opt$ denotes a global
optimum solution, then any $\loc$ such that $\cost((\loc - Q) \cup P)<(1-\frac{\eps}{k}) \cdot \cost(\loc)$
for any possible swap $P,Q$ satisfies $\cost(\loc) \leq \frac{\alpha}{1-\eps} \cost(\opt)$.
This follows by arguments in~\cite{AGKMMP01,KMNPSW04} and the fact that our local search analysis
uses at most $k$ ``test swaps''.

%whose cost is at most a $(1+\eps)$-factor away from the cost of a locally-optimum solution.
%that is at most $(1+\epsilon)$ factor away from a local optimum.
For ease of exposition, we ignore this factor $1+\epsilon$ loss, and 
consider the solution $\loc$ returned by Algorithm \ref{alg:local}.
Recall that we use $\opt$ to denote the global optimum solution.
For $j \in \cX$, let $\costo_j = \del(j, \opt)^2$ and $\costl_j = \del(j, \loc)^2$, so $\cost(\opt) = \sum_{j \in \cX} \costo_j$ and $\cost(\loc) = \sum_{j \in \cX} \costl_j$.
We also denote the centre in $\opt$ nearest to $j$ by
$\sigma^*(j)$ and the centre in $\loc$ nearest to $j$ by $\sigma(j)$.
Define $\phi : \opt \cup \loc \rightarrow \opt \cup \loc$ to be the function that assigns 
$i^* \in \opt$ to its nearest centre in $\loc$ and assigns $i\in\loc$ to its nearest centre in $\opt$.
For any two sets $S, T \subseteq \opt \cup \loc$, we let 
$S \triangle T = (S \cup T) - (S \cap T)$.
%Finally, for any $i \in \opt \cup \loc$, let ${\bf p}^i \in \mathbb R^d$ be the location of centre $i$ in Euclidean space.

We assume $\opt \cap \loc = \emptyset$. This is without loss of generality because we could duplicate each 
location in $\fa$ and say $\opt$ uses the originals and $\loc$ the duplicates. It is easy to check that $\loc$ 
would still be a locally optimum solution in this instance. We can also assume that these are the only possible 
colocated facilities, so $\del(i,i') > 0$ for distinct $i,i' \in \opt$ or distinct $i,i' \in \loc$.
%We also use $O(\eps)$ to denote a term that is of the form $ \cdot \eps$ where $c$ is an absolute constant. 
Finally, we will assume $\eps$ is sufficiently small (independent of all other parameters, including $d$)  so that all of our bounds hold.

%%%%%%%%%%%%%%%%%%%%%%%%%%%%%%%%%%%%%%%%%%%%%%%%%%%%%%%%%%%%%%%%%%%%%%%%%%%%%%%%%%%%%%%%%%%%%%%%
%%%%%%%%%%%%%%%%%%%%%%%%%%%%%%%%%%%%%%%%%%%%%%%%%%%%%%%%%%%%%%%%%%%%%%%%%%%%%%%%%%%%%%%%%%%%%%%%

\section{Local Search Analysis for $\RR^d$}\label{sec:euclid}
In this section we focus on $\RR^d$ (for fixed $d\geq 2$) and define $\swaps = 32 \cdot (2d)^{8d} \cdot \eps^{-36 \cdot d/\eps}$.
Our goal in this section is to prove that the $\rho$-swap local search with 
$\rho=\swaps$  is a PTAS for \kmeans in $\RR^d$.

\begin{theorem}\label{thm:euclid}
Let $\loc$ be a locally-optimum solution with respect to the $\swaps$-swap local search heuristic when the points lie in $\RR^d$.
Then $\cost(\loc) \leq (1 + O(\eps)) \cdot \cost(\opt)$.
\end{theorem}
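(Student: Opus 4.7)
The plan is to follow the classical local-search template of \cite{AGKMMP01,KMNPSW04,GT08}: construct a collection of test swaps and sum the non-negative inequalities $\cost(\loc') - \cost(\loc) \geq 0$ — one per swap — to conclude $\cost(\loc) \leq (1 + O(\eps))\cost(\opt)$. Beyond the classical $9$-approximation analysis for \kmeans, the argument needs two extra ingredients sketched in Section \ref{sec:outline}: a \emph{filtering} step and, for each surviving local centre, a constant-size \emph{net} of nearby filtered global centres.

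First, I would greedily filter $\loc$ and $\opt$ at a carefully chosen scale to obtain $\floc \subseteq \loc$ and $\fopt \subseteq \opt$ such that distinct filtered centres are pairwise far apart while every unfiltered centre is close to some filtered one. For each $i \in \floc$ I would attach a net $N(i) \subseteq \fopt$ that $\eps$-covers the filtered optimum centres lying in a suitably sized ball around $i$: the purpose is that if $j \in \cX$ has $\sigma(j) = i$ and $\sigma^*(j)$ lies in that ball, then $\sigma^*(j)$ is within $\eps \cdot \del(j, \sigma^*(j))$ of a net element, so reassigning $j$ to that net element costs essentially $\costo_j$. A standard covering argument in $\RR^d$ caps $|N(i)|$ by $\eps^{-O(d/\eps)}$; this is essentially the only place where Euclidean structure enters the analysis in this section, and it is what forces the swap size $\swaps$.

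Next I would invoke the randomised partition of $\loc \cup \opt$ promised by Section \ref{sec:partitioning}: it partitions the $2k$ centres into parts each of size at most $\swaps$, balanced so $|P \cap \loc| = |P \cap \opt|$ (making the induced move a valid $\swaps$-swap), with the property that for every pair $(i,i^*) \in \floc \times \fopt$ with $i^* \in N(i)$, both $i$ and $i^*$ lie in the same part with probability at least $1-\eps$. For each part $P$ I would consider the test swap $\loc^P := (\loc \setminus P) \cup (\opt \cap P)$ and bound $\cost(\loc^P) - \cost(\loc)$ by reassigning only those $j$ with $\sigma(j) \in P$. Such a $j$ is treated according to where its optimum centre $i^* := \sigma^*(j)$ sits, relative to $i := \sigma(j)$: (i) if $i^* \in P$, assign $j$ to $i^*$, paying $\costo_j$; (ii) if $i^* \notin P$ but $i^*$ lies near $i$ and $i^*$'s ``representative'' in $N(i)$ survives in $P$, assign $j$ to that net element, paying at most $(1+O(\eps))\costo_j + O(\eps)\costl_j$ via the relaxed triangle inequality $(a+b)^2 \leq (1+\eps)a^2 + (1+O(1/\eps))b^2$; (iii) if $i^*$ is far from $i$, reassign $j$ to any surviving element of $N(i) \cap P$ near $i$, paying only $O(\eps)(\costo_j + \costl_j)$ — the $\costo_j$ term is deferred, to be generated when $i^*$ actually opens in a different swap. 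The rare event in (ii) that $i$'s net is ``cut'' by $P$ has probability at most $\eps$ and pays the crude $O(\costo_j + \costl_j)$; in expectation this contributes only $O(\eps)$ slack.

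Summing $0 \leq \cost(\loc^P) - \cost(\loc)$ over all parts $P$ and taking expectation over the random partition, the reassignment accounting collapses to
\[
0 \;\leq\; (1 + O(\eps)) \cost(\opt) - \cost(\loc),
\]
which is the theorem. The main obstacles I anticipate are: (a) constructing the randomised partition in Section \ref{sec:partitioning} under the simultaneous constraints of bounded part size, balance between $\loc$ and $\opt$, and the net co-location guarantee — this is the only step that really uses the Euclidean structure; (b) the squared-distance objective not obeying the triangle inequality, so the net must be $\eps$-fine (giving $|N(i)| = \eps^{-O(d/\eps)}$) to ensure reassignment to a net element costs at most $(1+O(\eps))\costo_j$; and (c) book-keeping the ``far $i^*$'' case so that every $\costo_j$ contribution is charged to exactly one swap (the swap opening $i^*$) — this is what ultimately requires treating the separate subcase where $i^*$ is near $i$ yet strictly closer to a different local centre, flagged at the end of Section \ref{sec:outline}.
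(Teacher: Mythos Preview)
Your proposal follows essentially the paper's approach: filter, define nets, invoke the randomised partition of Section~\ref{sec:partitioning}, do a case analysis on each $j$ according to how $\sigma(j)$ and $\sigma^*(j)$ sit relative to the partition, and average over the randomness. Your cases (i)--(iii) and the ``cut'' event correspond closely to the paper's \emph{lucky/good/long/bad} classification.

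There is one structural ingredient you gloss over that the paper makes explicit, and without it your case~(iii) and your ``cut'' fallback do not go through. The paper maintains \emph{two} kinds of pairs: $\net$ (kept together with probability $\geq 1-\eps$, as you describe) and a separate set $\cents$ of pairs that the partition is required to keep together \emph{deterministically}. The point of $\cents$ is Lemma~\ref{lem:cents}: after any swap $P$, every $i' \in \opt\cup\loc$ has an open centre within $5D_{i'}$. This is exactly the guarantee you need in your case~(iii) (``reassign $j$ to any surviving element of $N(i)\cap P$ near $i$'') and in the bad/cut case (``pays the crude $O(\costo_j+\costl_j)$''): without a deterministic nearby open centre, you have nowhere to send $j$ when $\sigma(j)$ closes and the relevant net element happens to be cut away. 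Your obstacle~(c) touches on the underlying issue but does not identify the fix; you should separate out these $\cents$-type pairs and require the partition to respect them always, not just in expectation.

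A smaller correction: the swap size $\swaps = d^{O(d)}\eps^{-O(d/\eps)}$ is not forced by $|N(i)|$ --- a direct packing argument gives $|N(i)| = \eps^{-O(d)}$ only. The $1/\eps$ in the exponent comes from the partition construction in Section~\ref{sec:partitioning}, where each band spans $\Theta(1/\eps)$ geometric scales and the cell-size bound (Lemma~\ref{lem:size}) must absorb all of them. So the Euclidean structure enters through the partition, not through the net bound, consistent with what you say in obstacle~(a) but contradicting your earlier sentence.
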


To prove this, we will construct a set of test swaps that yield various inequalities which, when combined, 
provide the desired bound on $\cost(\loc)$. That is, we will partition $\opt \cup \loc$ into sets 
where $|P \cap \opt| = |P \cap \loc| \leq \swaps$ for each part $P$.
For each such set $P$, $0 \leq \cost(\loc \triangle P) - \cost(\loc)$ because $\loc$ is a locally optimum solution. 
We will provide an explicit upper bound on this cost change that will reveal enough information to easily conclude 
$\cost(\loc) \leq (1 + O(\eps)) \cdot \cost(\opt)$.
For example, for a point $j \in \cX$ if $\sigma^*(j) \in P$ then the change in $j$'s assignment cost
is at most $c^*_j - c_j$ because we could assign $j$ from $\sigma(j)$ to $\sigma^*(j)$. The problem is that
points $j$ with $\sigma(j) \in P$ but $\sigma^*(j) \not\in P$ must go somewhere else; most of our effort is ensuring
that the test swaps are carefully chosen so such reassignment cost increases are very small.

First we need to describe the partition of $\opt \cup \loc$. This is a fairly elaborate scheme that involves several
steps. As mentioned earlier, the actual algorithm for \kmeans is the simple local search we described and the
algorithms we describe below to get this partitioning scheme are only for the purpose of proof and analysis of the local search
algorithm.

\begin{definition}
For $i^* \in \opt$ let $D_{i^*} := \del(i^*, \loc) = \del(i^*, \phi(i^*))$. For $i \in \loc$ let $D_i := \del(i, \opt) = \del(i, \phi(i))$.
\end{definition}
The first thing is to sparsify $\opt$ 
and $\loc$ using a simple filtering step. 
Algorithm \ref{alg:filter} {\em filters} $\opt$ to a set that is appropriately sparse for our analysis.

\begin{algorithm*}[t]
\caption{Filtering $\opt$}
\label{alg:filter}
\begin{algorithmic}
\State $\fopt \leftarrow \emptyset$
\For{each $i^* \in \opt$ in nondecreasing order of $D_{i^*}$}
\If {$\exists ~\overline{i^*}\in\fopt$ such that $\del(i^*, \overline{i^*}) \leq \eps \cdot D_{i^*}$}
\State $\eta(i^*) \leftarrow \overline{i^*}$
\Else
\State $\eta(i^*) \leftarrow i^*$
\State $\fopt \leftarrow \fopt \cup \{i^*\}$
\EndIf
\EndFor
\State \Return $\fopt$
\end{algorithmic}
\end{algorithm*}
Think of $\eta(i^*)$ as a proxy for $i^*$ that is very close to $i^*$. 
Using a similar process, we filter $\loc$ to get $\floc$ and proxy centres $\eta(i) \in \floc$ for each $i \in \loc$.
The idea is that the set of centres left in $\fopt$ and $\floc$ are somewhat far apart yet any point that was assigned to a
centre in $\opt-\fopt$ (or in $\loc-\floc$) can be ``cheaply'' reassigned to a proxy.

%\begin{itemize}
%\item $\fopt \leftarrow \emptyset$
%\item For $i^* \in \opt$ in increasing order of $D_{i^*}$
%\begin{itemize}
%\item If $d(i^*, \overline{i^*}) > \eps \cdot D_{i^*}$ then let $\eta(i^*) \leftarrow \overline{i^*}$
%\item Otherwise let $\eta(i^*) \leftarrow i^*$ and $\fopt \leftarrow \fopt \cup \{i^*\}$
%\end{itemize}
%\end{itemize}

%The following summarizes two important consequences of this process.

\begin{lemma}\label{lem:filter}
For each $i \in \opt \cup \loc$ we have $\del(i, \eta(i)) \leq \eps \cdot D_i$. For any distinct 
$i, i' \in \fopt \cup \floc$ we have $\del(i, i') \geq \eps \cdot \max\{D_{i}, D_{i'}\}$.
\end{lemma}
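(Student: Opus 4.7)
The plan is to verify each statement directly from the description of Algorithm~\ref{alg:filter} (and its analogous version for $\loc$), with a small case analysis for the second part depending on whether the two centres come from the same filtered set or opposite ones.

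For the first claim, I would just unroll the algorithm. When $i^* \in \opt$ is processed, either (i) there is already some $\overline{i^*} \in \fopt$ with $\del(i^*, \overline{i^*}) \leq \eps \cdot D_{i^*}$, in which case $\eta(i^*) = \overline{i^*}$ and the bound is immediate, or (ii) no such witness exists, in which case $\eta(i^*) = i^*$ and $\del(i^*, \eta(i^*)) = 0$. The argument for $i \in \loc$ is identical with the analogous filtering procedure. So the first part is essentially a one-line consequence of the definition of $\eta$.

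For the second claim I would split into two cases. First, suppose $i, i' \in \fopt$ (the case $i, i' \in \floc$ is symmetric). Since the for-loop processes centres in nondecreasing order of $D_{(\cdot)}$, WLOG $i$ was added to $\fopt$ before $i'$ and $D_i \leq D_{i'}$. When $i'$ was processed it was added as a new centre rather than mapped to an existing proxy, so the ``if'' condition failed: no $\overline{i^*} \in \fopt$ at that moment (including $i$) satisfied $\del(i', \overline{i^*}) \leq \eps \cdot D_{i'}$. Hence $\del(i, i') > \eps \cdot D_{i'} = \eps \cdot \max\{D_i, D_{i'}\}$, as required.

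The remaining case is $i \in \fopt$ and $i' \in \floc$ (or vice versa). Here the two filtering runs are independent, but the bound comes from the defining relation between $D_{(\cdot)}$ and the two solution sets: because $i \in \opt$ and $i' \in \loc$, we have $D_i = \del(i, \loc) \leq \del(i, i')$ and $D_{i'} = \del(i', \opt) \leq \del(i, i')$, so $\del(i, i') \geq \max\{D_i, D_{i'}\} \geq \eps \cdot \max\{D_i, D_{i'}\}$ since $\eps \leq 1$. I do not expect any real obstacle here; the only thing to be careful about is correctly handling the cross case, where the filtering logic itself gives no direct guarantee and one must instead invoke the definitions of $D_i$ and $D_{i'}$.
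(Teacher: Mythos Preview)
Your proposal is correct and follows essentially the same argument as the paper: the first claim is immediate from the algorithm, and the second is handled by the same two-case split (same filtered set via the failed ``if'' condition on the later-processed centre, cross case via the definition of $D_i$ and $D_{i'}$). There is nothing to add.
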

\begin{proof}
That $\del(i, \eta(i)) \leq \eps \cdot D_i$ follows immediately by construction. If $i \in \fopt, i' \in \floc$ or 
vice-versa, then in fact $\del(i, i') \geq \max\{D_i, D_{i'}\}$ simply by definition of $D_i, D_{i'}$.

Now suppose $i, i' \in \fopt$ and that $i'$ was considered after $i$ in Algorithm \ref{alg:filter} (so $D_{i'} \geq D_i$).
The fact that $i'$ was added to $\fopt$ even though $i$ was already in $\fopt$ means
$\del(i, i') \geq \eps \cdot D_{i'}$. The same argument works if $i, i' \in \floc$.
\end{proof}

Next we define mappings similar to $\phi, \sigma, \sigma^*$ except they only concern centres that were not filtered out.\\

\begin{definition}
~
\begin{itemize}
\item $\ophi : \fopt \cup \floc \rightarrow \fopt \cup \floc$ maps each $i \in \fopt$ to its nearest location in 
$\floc$ and vice versa.
\item $\osigma^* : \cX \rightarrow \fopt$ defined by $\osigma^*(j) = \eta(\sigma^*(j))$.
\item $\osigma : \cX \rightarrow \floc$ defined by $\osigma(j) = \eta(\sigma(j))$.
\end{itemize}
Finally, for each $i \in \ophi(\fopt)$, let $\cent(i)$ be the centre in $\ophi^{-1}(i)$ that is closest to $i$, breaking ties arbitrarily.
\end{definition}
Note $\osigma(j)$ may not necessarily be the centre in $\floc$ that is closest to $j$.
Also note that if one considers a bipartite graph with parts $\fopt$ and $\floc$, then $\ophi$ maps centres from one side to the 
other.

%\begin{definition}
%For each $i \in \ophi(\fopt)$, let $\cent(i)$ be the centre $i^* \in \fopt$ nearest to $i$ such that $\ophi(i^*) = i$,
%breaking ties arbitrarily.
%\end{definition}

\begin{lemma} \label{lem:phi}
For each $i' \in \fopt \cup \floc$, $D_{i'} \leq \del(i', \ophi(i')) \leq (1 + \eps) \cdot D_{i'}$.
\end{lemma}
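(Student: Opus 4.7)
The plan is a short triangle-inequality argument with two sides: a lower bound that is essentially definitional, and an upper bound that uses the filtering guarantee from Lemma \ref{lem:filter}.

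For the lower bound, I would just observe that for $i' \in \fopt$ we have $\floc \subseteq \loc$, so
\[
\del(i', \ophi(i')) \;=\; \min_{i \in \floc} \del(i', i) \;\geq\; \min_{i \in \loc} \del(i', i) \;=\; D_{i'},
\]
and symmetrically for $i' \in \floc$. This uses only that $\ophi$ picks the nearest point on the filtered side, while $D_{i'}$ is the distance to the nearest point on the unfiltered side.

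For the upper bound, I would argue as follows in the case $i' \in \fopt$ (the other case being symmetric). Let $i := \phi(i') \in \loc$ so that $\del(i', i) = D_{i'}$. If $i \in \floc$ there is nothing to prove, since then $\del(i', \ophi(i')) \leq \del(i', i) = D_{i'}$. Otherwise $i$ was filtered out, and so there exists $\eta(i) \in \floc$ with $\del(i, \eta(i)) \leq \eps \cdot D_i$ by Lemma \ref{lem:filter}. The key observation is that
\[
D_i \;=\; \del(i, \opt) \;\leq\; \del(i, i') \;=\; D_{i'},
\]
because $i' \in \opt$ is one candidate for the nearest point of $\opt$ to $i$. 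Combining with the triangle inequality,
\[
\del(i', \ophi(i')) \;\leq\; \del(i', \eta(i)) \;\leq\; \del(i', i) + \del(i, \eta(i)) \;\leq\; D_{i'} + \eps \cdot D_i \;\leq\; (1+\eps) \cdot D_{i'},
\]
which is the desired upper bound. The same argument with the roles of $\opt$ and $\loc$ swapped handles $i' \in \floc$.

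I do not anticipate any real obstacle here: the entire statement is a direct consequence of (i) the definition of $\ophi$ as nearest in the filtered set, (ii) the filtering guarantee $\del(i,\eta(i)) \leq \eps D_i$, and (iii) the monotonicity $D_{\phi(i')} \leq D_{i'}$ coming from the fact that $i'$ itself is a witness for $D_{\phi(i')}$. The only thing worth double-checking is that the inequality $D_{\phi(i')} \leq D_{i'}$ is applied correctly (i.e.\ that $\phi$ maps to the opposite side so $i'$ does lie in the ``opt'' side for the purpose of bounding $D_{\phi(i')}$), but this is precisely the cross-side definition of $\phi$ given in the preliminaries.
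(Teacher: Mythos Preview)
Your proposal is correct and follows essentially the same approach as the paper: the lower bound via $\floc \subseteq \loc$, and the upper bound via the triangle inequality through $\eta(\phi(i'))$ together with $D_{\phi(i')} \leq D_{i'}$. Your version is arguably a touch cleaner, since the paper writes $\del(i', \ophi(i')) = \del(i', \eta(\phi(i')))$ where only $\leq$ is justified (and is all that is needed).
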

\begin{proof}
Suppose $i' \in \fopt$, the proof is essentially the same for $i' \in \floc$.
On one hand, we know
\[ D_{i'} = \del(i', \phi(i')) \leq \del(i', \ophi(i')) \]
because $\floc \subseteq \loc$.
On the other hand,
\[ \del(i', \ophi(i')) = \del(i', \eta(\phi(i'))) \leq \del(i', \phi(i')) + \del(\phi(i'), \eta(\phi(i'))) \leq D_{i'} + \eps \cdot D_{\phi(i')}. \]
Conclude by observing $D_{\phi(i')} \leq \del(i', \phi(i')) = D_{i'}$.
\end{proof}

\begin{figure}[t]
\begin{center}
%\scalebox{1.4}{
%\input{Fig-Notations.tex}
%}   
\includegraphics[scale=1.4]{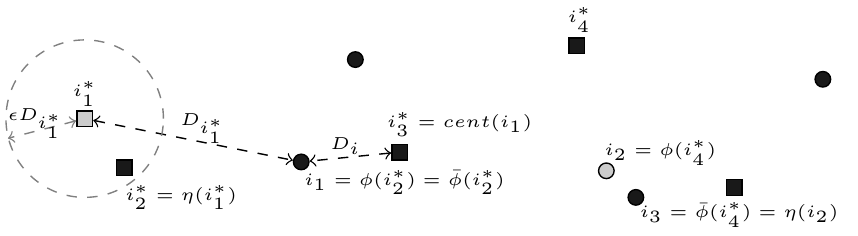} 
\end{center}
 \caption{The square centres lie in $\opt$ and the circle centres lie in $\loc$. The black ones survived the filtering (i.e. lie in $\fopt \cup \floc$) and the grey ones were filtered out.
Note that the grey square $i^*_1$ on the left was too close to $i^*_2$ so it was not added to $\fopt$ and $\eta(i^*_1) = i^*_2$.
Also note that $\ophi(i^*_4) \neq \phi(i^*_4)$ because the facility $i_2$ that defined $D_{i^*_4} := \del(i^*_4, i_2)$ was not added to $\floc$. Still,
$\del(i^*_4, \ophi(i^*_4)) = \del(i^*_4, i_3) \leq (1 + \eps) D_{i_4^*}.$}
 \label{fig:filtering}
\end{figure}

Figure \ref{fig:filtering} depicts many of the concepts covered above.
Finally, the last definition in this section identifies pairs of centres that we would like to have in the same part of the partition we construct.
\begin{definition}
~
\begin{itemize}
\item $\cents := \{(\cent(i), i) : i \in \ophi(\fopt) {\rm ~and~}  \eps \cdot \del(\cent(i), i) \leq D_{i}\}$
\item $\net := \{(i^*, i)  \in \fopt \times \floc :  \del(i, i^*) \leq \eps^{-1} \cdot D_i {\rm ~and~} D_{i^*}  \geq \eps \cdot D_i\}$
\end{itemize}
\end{definition}
For each $i \in \floc$, the set $\{i^* : (i^*, i) \in \net\}$ is the ``net'' for centre $i$ that was discussed
in the proof outline in Section \ref{sec:outline}.

Ultimately we will require that pairs in $\cents$ are not separated by the partition. Our requirement for $\net$
is not quite as strong. The partition is constructed randomly and
it will be sufficient to have each pair in $\net$ being separated by the partition with probability
at most $\eps$.

%%%%%%%%%%%%%%%%%%%%%%%%%%%%%%%%%%%%%%%%%%%%%%%%%%%%%%%%%%%%%%%%%%%%%%%%%%%%%%%%%%%%%%%%%%%%%%%%
%%%%%%%%%%%%%%%%%%%%%%%%%%%%%%%%%%%%%%%%%%%%%%%%%%%%%%%%%%%%%%%%%%%%%%%%%%%%%%%%%%%%%%%%%%%%%%%%

\subsection{Every Centre is Close to Some Pair in $\cents$}
The following says that if at least one centre of each pair in $\cents$ is open after a swap, then every centre
 in $\fopt \cup \floc$ is somewhat
close to some open centre. The bound is a bit big, but it will be multiplied by $O(\eps)$ 
whenever it is used in the local search analysis.
\begin{lemma}\label{lem:cents}
Let $A \subseteq \fopt \cup \floc$ be such that $A \cap \{\cent(i), i\} \neq \emptyset$
for each $(\cent(i), i) \in \cents$. Then $\del(i', A) \leq 5 \cdot D_{i'}$ for any $i' \in \opt \cup \loc$.
\end{lemma}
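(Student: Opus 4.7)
The plan is to exhibit, for every $i'\in\opt\cup\loc$, a pair $(\cent(i^\circ),i^\circ)\in\cents$ with $\min\{\del(i',\cent(i^\circ)),\del(i',i^\circ)\}\le 5 D_{i'}$. Since $A$ must contain at least one element of each pair in $\cents$, this immediately gives $\del(i',A)\le 5D_{i'}$.

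First I reduce to the case $\widetilde i\in\floc$. If $i'\in\loc$, set $\widetilde i:=\eta(i')\in\floc$; Lemma \ref{lem:filter} gives $\del(i',\widetilde i)\le\eps D_{i'}$ and the triangle inequality gives $D_{\widetilde i}\le(1+\eps)D_{i'}$. If $i'\in\opt$, pass first through $\eta(i')\in\fopt$ (Lemma \ref{lem:filter}) and then to $\widetilde i:=\ophi(\eta(i'))\in\floc$ (Lemma \ref{lem:phi}); the accumulated bounds are $\del(i',\widetilde i)\le(\eps+(1+\eps)^2)D_{i'}$ and $D_{\widetilde i}\le(1+\eps)^2 D_{i'}$. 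In both cases a bound of the form $\del(\widetilde i,A)\le 3 D_{\widetilde i}$ propagates to $\del(i',A)\le 5D_{i'}$ for $\eps$ small.

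Now, working with $\widetilde i\in\floc$, I build a chain $i^{(0)}=\widetilde i,i^{(1)},i^{(2)},\ldots\in\floc$ where the transition is $c_t:=\eta(\phi(i^{(t)}))\in\fopt$ followed by $i^{(t+1)}:=\ophi(c_t)\in\floc$. Lemmas \ref{lem:filter} and \ref{lem:phi} yield $\del(c_t,i^{(t)})\le(1+\eps)D_{i^{(t)}}$, $D_{c_t}\le(1+\eps)D_{i^{(t)}}$, and $\del(c_t,i^{(t+1)})\le(1+\eps)^2 D_{i^{(t)}}$, so $\del(i^{(t)},i^{(t+1)})\le(1+\eps)(2+\eps)D_{i^{(t)}}$. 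Crucially, since $c_t\in\ophi^{-1}(i^{(t+1)})$, the definition of $\cent$ also forces $\del(\cent(i^{(t+1)}),i^{(t+1)})\le(1+\eps)^2 D_{i^{(t)}}$, which is the key "construction" upper bound.

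The heart of the proof is the termination/geometric-decay step. At each $t\ge 1$ I test whether $(\cent(i^{(t)}),i^{(t)})\in\cents$. In the good case I stop: $A$ contains $\cent(i^{(t)})$ or $i^{(t)}$, and by the construction bound $\del(i^{(t)},A)\le(1+\eps)^2 D_{i^{(t-1)}}$---notably tighter than the weaker estimate one would obtain from the $\cents$ condition alone. In the bad case, $\eps\,\del(\cent(i^{(t)}),i^{(t)})>D_{i^{(t)}}$ combined with the construction bound forces $D_{i^{(t)}}<\lambda\,D_{i^{(t-1)}}$ with $\lambda:=\eps(1+\eps)^2<1$; hence $D$ strictly decreases along any run of bad steps, and since $\floc$ is finite the chain must reach a good step $T\ge 1$.

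Summing and combining: the walk contributes at most $(1+\eps)(2+\eps)\sum_{t=0}^{T-1}D_{i^{(t)}}\le\frac{(1+\eps)(2+\eps)}{1-\lambda}D_{\widetilde i}$ using the geometric bound $D_{i^{(t)}}\le\lambda^{t} D_{\widetilde i}$ (valid for $t\ge 1$ in the bad-run regime, trivially for $t=0$), and the final hop to $A$ adds at most $(1+\eps)^2 D_{i^{(T-1)}}$, which is $O(D_{\widetilde i})$ when $T=1$ and $o(D_{\widetilde i})$ otherwise. For sufficiently small $\eps$ the sum is below $3D_{\widetilde i}$, and composing with the initial reduction yields $\del(i',A)\le 5D_{i'}$.

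The main obstacle is the geometric-decay step: the chain must be defined so that the failure of the $\cents$ condition at $i^{(t)}$ genuinely converts into the bound $D_{i^{(t)}}<\lambda\,D_{i^{(t-1)}}$, and simultaneously the construction bound on $\del(\cent(i^{(T)}),i^{(T)})$ dominates the a priori looser $\cents$-bound $D_{i^{(T)}}/\eps$ at the terminal step---this is what makes the absolute constant $5$ achievable rather than a factor that blows up with $1/\eps$.
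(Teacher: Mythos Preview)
Your argument is essentially the same as the paper's: both build a walk that alternates between $\floc$ and $\fopt$ (the paper via $\ophi$ in both directions, you via $\eta\circ\phi$ one way and $\ophi$ the other), use the failure of the $\cents$ condition together with the minimality of $\cent(\cdot)$ to obtain geometric decay, and then sum the step lengths. The only slip is that your bound on $\del(\widetilde i, A)$ is actually $(3+O(\eps)) D_{\widetilde i}$ rather than strictly below $3 D_{\widetilde i}$ (take $T=1$: the walk step costs $(1+\eps)(2+\eps)D_{\widetilde i}$ and the hop to $\cent(i^{(1)})$ adds another $(1+\eps)^2 D_{\widetilde i}$), but since your initial reduction leaves a full unit of slack---you get at most $(4+O(\eps))D_{i'}$ in the $i'\in\opt$ case---the conclusion $\del(i',A)\le 5D_{i'}$ still holds for small $\eps$.
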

\begin{proof}
We first prove the statement for $i' \in \opt$, the other case is similar but requires one additional step so we will discuss it below.
Consider the following sequence of centres. Initially, set $i_0 := i'$, $i_1 := \eta(i_0)$, and $i_2 := \ophi(i_1)$.
We build the rest inductively, noting that we guarantee $i_a \in \ophi(\fopt)$ for even indices $a \geq 2$
(so $\cent(i_a)$ is defined).

Inductively, for even $a \geq 2$ we do the following. If $i_a \in A$ then we stop. Otherwise,
 if $(\cent(i_a),i_a) \in \cents$
then by assumption it must be that $\cent(i_a) \in A$ so we let $i_{a+1} := \cent(i_a)$ and stop.
Finally, if $(\cent(i_a), i_a) \not\in \cents$ then we set $i_{a+1} := \ophi(i_a)$ and $i_{a+2} := \ophi(i_{a+1})$
and iterate with $a' = a+2$.
This walk is depicted in Figure \ref{fig:longpath}.

%Before analyzing the cost, we establish this process eventually terminates.
%Consider the directed graph $H$ with vertices $\fopt \cup \floc$ and edges $\{(i, \ophi(i)) : i \in \fopt \cup \floc\}$.
%Every vertex has outdegree exactly 1, so $H$ is the disjoint union of components that have a single directed cycle 
%and, upon contracting this cycle, we get a branching oriented toward the contracted cycle.

%The sequence $i_1, i_2, \ldots$ is a walk in $H$ except, perhaps, the last step.
%Let $i \in \floc$ be a vertex on a cycle $C$ in $H$. We claim $(\cent(i), i) \in \cents$, which shows the 
%sequence $i_1, i_2, \ldots$ eventually terminates; at the very latest when it reaches vertex $i \in \floc$ that 
%lies on a cycle in $H$, so the sequence stops either at $i$ or at $\cent(i)$.

%By definition of $\ophi$, $\del(i, \ophi(i)) \geq \del(\ophi(i), \ophi(\ophi(i)))$
%for any $i \in \fopt \cup \floc$. This means all edges in a cycle of $H$ have the same length.
%Let $i^* \in \fopt$ be the vertex on the cycle $C$ with $\ophi(i^*) = i$. Then 
%$\del(i^*, i) = \del(i, \ophi(i)) \leq \del(\cent(i), i)$ where the equality holds because $(i^*,i)$ and $(i, \ophi(i))$
%both lie on the cycle $C$.
%On the other hand, $\del(\cent(i),i) \leq \del(i^*, i)$ by definition of $\cent$, so in fact $\del(\cent(i), i) = \del(i^*, i) = \del(i, \ophi(i))$.
%Along with Lemma 2, $\del(\cent(i), i) = \del(i, \ophi(i)) \leq (1 + \eps) D_i$. For even modestly small $\eps$ this means $\eps\cdot \del(\cent(i), i) \leq D_i$. Thus, $(\cent(i), i) \in \cents$.

We will soon show the walk terminates. For now, we observe that, apart from the first step, the steps decrease in length geometrically. In particular, consider some $a \geq 2$ 
such that the walk did not stop at $i_a$. If $i_{a+1} = \ophi(i_a)$ then
\[ \del(i_a, i_{a+1}) = \del(i_a, \ophi(i_a)) \leq \del(i_a, i_{a-1}). \]
If $i_{a+1} \neq \ophi(i_a)$ then it must be $i_a = \cent(i_{a-1})$. In this case,
it must be $i_a = \ophi(i_{a-1})$, so because $\cent(i_a)$ is the closest centre in $\ophi^{-1}(i_a)$ to $i_a$ (by definition), we have
\[ \del(i_a, i_{a+1}) = \del(i_a, \cent(i_a)) \leq \del(i_a, i_{a-1}). \]

We prove that the lengths of the edges traversed decrease geometrically with every other step. This, along with the fact that the lengths of the steps of the walk are nonincreasing (except, perhaps, the first two steps),
will show that this process eventually terminates and also bounds the cost of the path.
%More precisely, say the sequence of centres constructed was $i_0, i_1, \ldots, i_m$.

\begin{claim}\label{claim:geom}
For every even $a \geq 2$ such that the walk did not end at $i_a$ or $i_{a+1}$, we have $\del(i_a, i_{a+1}) \leq 2\eps \cdot \del(i_{a-1}, i_a)$.
\end{claim}
\begin{proof}
Because the walk did not end at $i_a$ or $i_{a+1}$, %So $\{\cent(i_a), i_a\} \cap A = \emptyset$.
$(\cent(i_a), i_a) \not\in \cents$ meaning $D_{i_a} < \eps \cdot \del(\cent(i_a), i_a) \leq \eps \cdot \del(i_{a-1}, i_a)$.
%
%$\del(\cent(i_k), i_k) < \eps D_{i_k}$.
%On the other hand, since $i_{k+1}=\ophi(i_k)$ and $i_k=\ophi(\cent(i_k))$: $\del(i_k,i_{k+1}) \leq \del(\cent(i_k),i_k)$.
%Therefore:
Using this and Lemma 2 in the first bound below, we see
\[ \del(i_a, i_{a+1}) \leq (1+\eps) \cdot D_{i_a} < \eps(1 + \eps) \cdot \del(i_{a-1}, i_a) \leq 2\eps \cdot \del(i_{a-1}, i_a) \]
%\[ \del(i_k, i_{k+1}) \leq \del(\cent(i_k),i_k) < \eps D_{i_k} \leq \eps \del(i_{k-1},i_k).\]
\end{proof}

%Recall that distances are non-increasing in a walk of $H$ so $\del(i_k, i_{k+1}) \leq \del(i_{k-1}, i_k)$ for every
%$2 \leq k \leq m-2$. 
%We also claim $\del(i_{m-1}, i_m) \leq \del(i_{m-2}, i_{m-1})$. There is nothing new to show
%if $i_m = \ophi(i_{m-1})$ since the distances are nonincreasing in such a walk.
%Otherwise, $i_m = \cent(i_{m-1})$ so $\del(i_{m-1}, i_m) = \del(\cent(i_{m-1}), i_{m-1}) \leq \del(i_{m-2}, i_{m-1})$
%since $\ophi(i_{m-2}) = i_{m-1}$ and $\cent(i_{m-1})$ is the closest centre $i$ with $\ophi(i) = i_{m-1}$.

\begin{figure}[t]
\begin{center}
%\scalebox{1.4}{
%\input{Fig-LongPath.tex}
%}   
\includegraphics[scale=1.4]{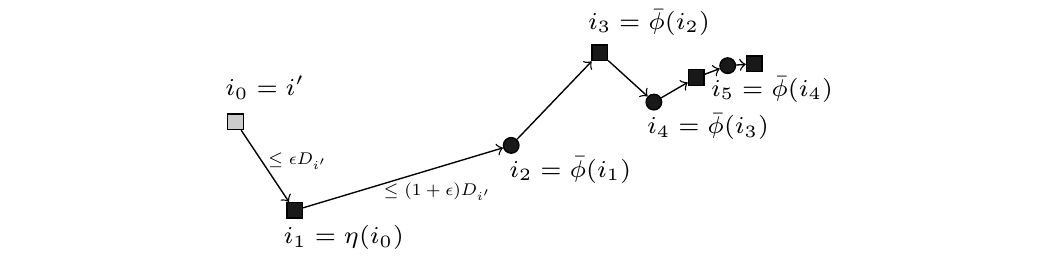} 
\end{center}
 \caption{Illustration of the walk $i_0, i_1, \ldots$ in the proof of Lemma 3. Apart from $a = 1$, the step $(i_a, i_{a+1})$ is shorter than the step $(i_{a-1}, i_a)$. Furthermore, every other step
 decreases in length geometrically.}
 \label{fig:longpath}
\end{figure}

Let $m$ be the index of the last centre in the walk.
Thus, $\del(i_a, i_{a+1}) \leq \del(i_{a-1}, i_a)$ for all $2 \leq a \leq m-1$. From this and Claim \ref{claim:geom} we have
\begin{eqnarray*}
\del(i', A) & \leq & \sum_{j=0}^{m-1} \del(i_j, i_{j+1}) \\
& \leq & \del(i_0, i_1) + \sum_{\substack{2 \leq a \leq m\\ a {\rm ~even}}} 2\del(i_{a-1}, i_a) + \del(i_{m-1}, i_m) \\
& \leq & \del(i_0, i_1) + 2\del(i_1, i_2) \sum_{a \geq 0} (2\eps)^a + \del(i_{m-1}, i_m) \\
& \leq & \del(i_0, i_1) + \frac{2}{1-2\eps} \cdot \del(i_1, i_2) + \del(i_1, i_2) \\
& = & \del(i_0, i_1) + \frac{3-2\eps}{1-2\eps}\cdot \del(i_1, i_2) \\
& \leq & \eps \cdot D_{i'} + \frac{(3-2\eps)(1 + \eps)}{1-2\eps} \cdot D_{\eta(i')} \\
& \leq & (3+7\eps) \cdot D_{i'}.
\end{eqnarray*}
The second last step uses Lemma \ref{lem:phi} and the
last step uses the fact that $D_{\eta(i')} \leq D_{i'}$ (either $i' = \eta(i')$ or else $i'$ was filtered out by $\eta(i')$, in which case it has a larger $D$-value)
and the assumption that $\eps$ is small enough.

Now suppose $i' \in \loc$. We bound $\del(i', A)$ mostly using what we have done already.
That is, we have
\[ \del(i', A) \leq \del(i', \phi(i')) + \del(\phi(i'), A) \leq D_{i'} + (3+7\eps)D_{\phi(i')}. \]
Note $D_{\phi(i')} \leq \del(i', \phi(i')) = D_{i'}$ again by Lemma 2. So,
\[ \del(i', A) \leq (4+7\eps) D_{i'} \leq 5D_{i'}. \]
%for small enough $\eps$.
\end{proof}

%%%%%%%%%%%%%%%%%%%%%%%%%%%%%%%%%%%%%%%%%%%%%%%%%%%%%%%%%%%%%%%%%%%%%%%%%%%%%%%%%%%%%%%%%%%%%%%%
%%%%%%%%%%%%%%%%%%%%%%%%%%%%%%%%%%%%%%%%%%%%%%%%%%%%%%%%%%%%%%%%%%%%%%%%%%%%%%%%%%%%%%%%%%%%%%%%

\subsection{Good Partitioning of $\opt \cup \loc$ and Proof of Theorem \ref{thm:euclid}}
The main tool used in our analysis is the existence of the following randomized partitioning scheme.
\begin{theorem}\label{thm:partition}
There is a randomized algorithm that samples a partitioning $\pi$ of $\opt \cup \loc$ such that:
\begin{itemize}
\item For each part $P \in \pi$, $|P \cap \opt| = |P \cap \loc| \leq \swaps$.
\item For each part $P \in \pi$, $\loc \triangle P$ includes at least one centre from every pair in $\cents$.
\item For each $(i^*, i) \in \net$, $\Pr[i, i^* {\rm ~lie~in~different~parts~of~}\pi] \leq \eps$.
\end{itemize}
\end{theorem}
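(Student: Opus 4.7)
The plan is to construct the partition in two stages: first, a randomized partition of the filtered centres $\fopt \cup \floc$, and then an extension to $\opt \cup \loc$ with a balancing step to enforce $|P \cap \opt| = |P \cap \loc|$.

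For the first stage, I would use a hierarchical randomized decomposition tailored to the multi-scale structure of the constraints. The key observation is that both $\cents$ and $\net$ relations involve pairs at distance at most $\eps^{-1} D_i$, so the relevant length-scale around each centre $i$ is tied to $D_i$. Concretely, I would bucket the filtered centres by level $\ell_i = \lceil \log_2 D_i \rceil$ and apply a randomly shifted axis-aligned grid at each scale, with the hierarchy refining finer scales within coarser ones. At scale $2^\ell$, I would take cells of side length $\Theta(2^\ell / \eps)$ so that any $\net$ pair $(i^*, i)$ with $D_i \approx 2^\ell$ and $\del(i, i^*) \leq \eps^{-1} D_i$ is cut with probability $O(\eps)$ by the standard random-shift-grid calculation in $\RR^d$. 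For the size bound per cell: Lemma \ref{lem:filter} gives that filtered centres at scale $2^\ell$ are $\Omega(\eps \cdot 2^\ell)$-separated, so packing in $\RR^d$ bounds the number of filtered centres in any cell by $\eps^{-O(d)}$. Accumulating over the hierarchy contributes a polylogarithmic-in-$1/\eps$ factor on top of this.

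The $\cents$ pairs must be kept together \emph{deterministically}. Since $(\cent(i), i) \in \cents$ satisfies $\eps \cdot \del(\cent(i), i) \leq D_i$, the two endpoints lie at comparable scales and within the same neighborhood; I would therefore post-process the random decomposition by merging any parts that contain a split $\cents$ pair. Because each filtered centre lies in at most one $\cents$ pair (preimages under $\ophi$ are disjoint), the merges only expand parts by a constant factor and keep the per-part bound at $\eps^{-O(d)}$. For the second stage, I place each $i \in (\opt \cup \loc) \setminus (\fopt \cup \floc)$ in the same part as its proxy $\eta(i)$; Lemma \ref{lem:filter} again bounds the number of unfiltered centres sharing a single proxy by a packing argument, inflating the per-part size by another $\eps^{-O(d)}$ factor, which fits within $\swaps = d^{O(d)} \cdot \eps^{-O(d/\eps)}$.

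For balance, since $|\opt| = |\loc| = k$ the global discrepancies sum to zero, so any local imbalance in individual parts can be reconciled by transferring a bounded number of centres between parts via an arbitrary routing. Crucially, this transfer is deterministic; it does not alter the random choices that govern the $\net$-cut probabilities, and the few additional centres added to a given part keep the per-part size within $\swaps$ after budgeting for this from the outset. I expect the \textbf{main obstacle} to be this simultaneous coordination of the three conditions — in particular, enforcing exact balance while preserving both the hard constraint on $\cents$ pairs and the probabilistic guarantee on $\net$ pairs. Balance is a global combinatorial constraint that does not arise naturally from geometric decompositions, so the trickiest part is likely a careful matching-type argument (using the near-bijective structure of $\ophi$ between $\fopt$ and $\floc$) showing that the decomposition's imbalance is already small per part, leaving only $O(1)$ adjustments that cannot violate the other two properties.
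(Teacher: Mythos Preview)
Your high-level outline (random grid for cutting, packing for the size bound, post-process for $\cents$, then balance) matches the paper's, but three of the steps have real gaps.

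\textbf{The multi-scale decomposition is incomplete.} You bucket by $\lceil \log_2 D_i \rceil$ and impose a grid at each scale, but you never say how centres at \emph{different} scales end up in a common part. A $\net$ pair $(i^*,i)$ satisfies $\eps D_i \leq D_{i^*} \leq \eps^{-1} D_i$, so $i$ and $i^*$ can sit at adjacent (or, in base~$2$, up to $\log_2(1/\eps)$ apart) levels; if each level gets its own grid, such a pair is \emph{always} separated. Your remark that ``accumulating over the hierarchy contributes a polylogarithmic-in-$1/\eps$ factor'' is not justified: the number of occupied scales depends on the aspect ratio of the instance, not on $\eps$. The paper's fix is a second layer of randomness: it groups $b=\Theta(1/\eps)$ consecutive geometric buckets into a single \emph{band} via a uniformly random offset in $\{0,\ldots,b-1\}$, and then applies \emph{one} grid per band. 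A $\net$ (or $\cents$) pair straddles a band boundary with probability at most $1/b \leq \eps/4$, and within a band a single grid handles the rest. The per-part size then becomes $\eps^{-O(d/\eps)}$ (not $\eps^{-O(d)}$ times a polylog) precisely because the ratio of largest to smallest $D_i$ inside a band is $\eps^{-\Theta(1/\eps)}$; this is exactly where the $d/\eps$ in the exponent of $\swaps$ comes from.

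\textbf{The unfiltered-centre step is wrong.} There is no packing bound on how many $i \in \opt - \fopt$ (or $\loc - \floc$) share a common proxy $\eta(i)$: the condition $\del(i,\eta(i)) \leq \eps D_i$ imposes no lower bound on the pairwise separation of such $i$'s, and nothing prevents arbitrarily many of them from mapping to the same proxy. The paper instead drops each unfiltered centre into its own \emph{singleton} part and lets the balancing step absorb them.

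\textbf{Balancing by ``transferring'' centres can break the $\net$ guarantee.} Moving a centre out of its part may separate it from a $\net$ partner that the random grid had kept with it, driving the cut probability for that pair to~$1$. The paper's balancing (Lemma~\ref{lem:grouping}) only \emph{merges} parts: it repeatedly extracts a bounded-size subcollection of parts whose imbalances sum to zero and replaces them by their union. Merging can never separate a pair, so both the deterministic $\cents$ property and the probabilistic $\net$ property survive automatically; the cost is another polynomial blow-up in part size, which the $\swaps$ budget absorbs.
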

We prove this theorem in Section \ref{sec:partitioning}. For now, we will complete the analysis of the 
local search algorithm using this partitioning scheme. Note that in the following we do not use the geometry of the
metric (i.e. all arguments hold for general metrics); it is only in the proof of Theorem \ref{thm:partition} that
we use properties of $\RR^d$.

The following gives a way to handle the fact that the triangle inequality does not hold with squares of the distances.
\begin{lemma}\label{lem:sqr}
For any real numbers $x,y$ we have $(x+y)^2 \leq 2(x^2 + y^2)$.
%Also, if $y \leq \delta \cdot x$ for some $\delta \leq 1$ we have $(x+y)^2 \leq (1 + 3\delta)x^2$.
\end{lemma}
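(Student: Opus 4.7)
The plan is to prove this by a direct algebraic manipulation, since it is a one-line inequality rather than a structural result. I would start from the obvious nonnegativity fact $(x-y)^2 \geq 0$, which upon expansion yields $x^2 - 2xy + y^2 \geq 0$, equivalently $2xy \leq x^2 + y^2$. Adding $x^2 + y^2$ to both sides gives $x^2 + 2xy + y^2 \leq 2(x^2+y^2)$, and the left hand side is exactly $(x+y)^2$.

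Alternatively, one can view this as the Cauchy-Schwarz inequality applied to the vectors $(x,y)$ and $(1,1)$ in $\mathbb{R}^2$, giving $(x+y)^2 = \langle (x,y),(1,1)\rangle^2 \leq \|(x,y)\|^2 \cdot \|(1,1)\|^2 = 2(x^2+y^2)$, or as the power-mean inequality between the arithmetic and quadratic means of $|x|$ and $|y|$. However, the expansion-of-$(x-y)^2$ proof is the shortest and cleanest.

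There is essentially no obstacle here, since the claim is a standard scalar inequality with a two-line proof, and equality holds iff $x=y$. The main point of stating the lemma at this location in the paper is to have a convenient named reference for the subsequent cost-reassignment arguments, where one needs to bound the square of a triangle-inequality sum $\delta(a,b) \leq \delta(a,c)+\delta(c,b)$ by a sum of squared distances $2(\delta(a,c)^2 + \delta(c,b)^2)$ — a step that replaces the (failing) triangle inequality for squared distances.
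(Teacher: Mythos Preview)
Your proof is correct and essentially identical to the paper's: the paper writes the same idea in the compressed form $(x+y)^2 \leq (x+y)^2 + (x-y)^2 = 2x^2 + 2y^2$, which is exactly your use of $(x-y)^2 \geq 0$.
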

\begin{proof}
$(x+y)^2 \leq (x+y)^2 + (x-y)^2 = 2x^2 + 2y^2.$
\end{proof}

%Finally, we observe one more helpful bound before we finish the local search analysis.
\begin{lemma} \label{lem:client_rad}
For each point $j \in \cX$, $D_{\osigma(j)} \leq D_{\sigma(j)} \leq \del(j, \sigma(j)) + \del(j, \sigma^*(j))$. Similarly,
$D_{\osigma^*(j)} \leq D_{\sigma^*(j)} \leq \del(j, \sigma(j)) + \del(j, \sigma^*(j))$.
\end{lemma}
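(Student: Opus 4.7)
The plan is to observe that each of the two displayed inequalities decomposes cleanly into (a) a "filtering monotonicity" step relating the $D$-value of a centre to that of its proxy, and (b) a triangle-inequality step relating the $D$-value of a centre to client distances. Neither step requires Euclidean structure, so the argument will apply in the general metric setting.

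For the first chain, I would start with $D_{\osigma(j)} \leq D_{\sigma(j)}$. By definition $\osigma(j) = \eta(\sigma(j))$, so either $\sigma(j) \in \floc$ (in which case $\osigma(j) = \sigma(j)$ and equality is trivial), or $\sigma(j)$ was filtered out in Algorithm \ref{alg:filter} by some centre already in $\floc$. Since that algorithm processes centres in nondecreasing order of $D$-value, the proxy $\eta(\sigma(j))$ must have been added to $\floc$ before $\sigma(j)$ was considered, giving $D_{\eta(\sigma(j))} \leq D_{\sigma(j)}$. Next, to bound $D_{\sigma(j)}$ by client distances, I use the definition $D_{\sigma(j)} = \del(\sigma(j), \opt) \leq \del(\sigma(j), \sigma^*(j))$, since $\sigma^*(j) \in \opt$. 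Then the triangle inequality gives $\del(\sigma(j), \sigma^*(j)) \leq \del(\sigma(j), j) + \del(j, \sigma^*(j))$, completing the first chain.

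The second chain follows by the symmetric argument: $D_{\osigma^*(j)} \leq D_{\sigma^*(j)}$ by the analogous filtering monotonicity (the filtering procedure for $\opt$ is described identically to that for $\loc$), and then $D_{\sigma^*(j)} = \del(\sigma^*(j), \loc) \leq \del(\sigma^*(j), \sigma(j)) \leq \del(j, \sigma(j)) + \del(j, \sigma^*(j))$.

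There is essentially no obstacle here; the only thing to be careful about is invoking the filtering order correctly when $\sigma(j)$ (resp. $\sigma^*(j)$) is not itself in $\floc$ (resp. $\fopt$). The whole proof should occupy only a few lines once one unpacks the definitions of $\osigma$, $\osigma^*$, $D_i$, and $D_{i^*}$.
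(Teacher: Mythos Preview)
Your proposal is correct and essentially identical to the paper's own proof: the paper also splits into the trivial case $\osigma(j) = \sigma(j)$ versus the case where $\sigma(j)$ was filtered out (invoking the nondecreasing $D$-order of Algorithm~\ref{alg:filter}), and then bounds $D_{\sigma(j)} = \del(\sigma(j), \phi(\sigma(j))) \leq \del(\sigma(j), \sigma^*(j)) \leq \del(j,\sigma(j)) + \del(j,\sigma^*(j))$ via the triangle inequality, noting the second chain is symmetric.
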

\begin{proof}
As usual, we only prove the first statement since the second is nearly identical. If $\osigma(j) = \sigma(j)$ then $D_{\osigma(j)} = D_{\sigma(j)}$
is trivially true. Otherwise, $\osigma(j) = \eta(\sigma(j))$ was already in $\floc$ when $\sigma(j)$ was considered by the filtering algorithm meaning $D_{\osigma(j)} \leq D_{\sigma(j)}$.

For the other inequality, note
\[ D_{\sigma(j)} = \del(\sigma(j), \phi(\sigma(j))) \leq \del(\sigma(j), \sigma^*(j)) \leq \del(j, \sigma(j)) + \del(j, \sigma^*(j)). \]
\end{proof}

\begin{pproof}{Theorem \ref{thm:euclid}}\\
Let $\pi$ be a partition sampled by the algorithm from Theorem \ref{thm:partition}.
%For each part $P$ of $\pi$
%we know $0 \leq \cost(\loc \triangle P) - \cost(\loc)$ because $\loc$ is a locally optimum solution and $|P| \leq \swaps$.
%We explicitly describe an assignment of points to centres in $\loc \triangle P$ to upper bound the cost change.
For each point $j \in \cX$ and each part $P$ of $\pi$, let $\Delta^P_j := \del(j, \loc \triangle P)^2 - \del(j, \loc)^2$ 
denote the change in assignment cost for the point
after swapping in the centers in $P\cap \opt$ and swapping out $P\cap \loc$. 
Local optimality of $\loc$ and $|P \cap \loc| = |P \cap \opt| \leq \rho(\eps, d)$ means $0 \leq \sum_j \Delta^P_j$ for any part $P$.
%We place an upper bound on $\Delta^P_j$ for each $j \in \cX$ by explicitly describing where it should go in $\loc \triangle P$.

Classify each point $j \in \cX$ in one of the following ways:
\begin{itemize}
\item {\bf Lucky}: $\sigma(j)$ and $\osigma(j)$ do not lie in the same part of $\pi$.
\item {\bf Long}: $j$ is not lucky but $\del(\osigma(j), \osigma^*(j)) > \eps^{-1} \cdot D_{\osigma(j)}$.
\item {\bf Bad}: $j$ is not lucky or long and $(\osigma^*(j), \osigma(j)) \in \net$ yet $\osigma(j), \osigma^*(j)$ lie in different parts of $\pi$.
\item {\bf Good}: $j$ is neither lucky, long, nor bad.
\end{itemize}

We now place an upper bound on $\sum_{P \in \pi} \Delta^P_j$ for each point $j\in\cX$.
Note that each centre in $\loc$ is swapped out exactly once over all
swaps $P$ and each centre in $\opt$ is swapped in exactly once. With this in mind,
consider the following cases for a point $j \in \cX$.
In the coming arguments, we let $\del_j := \del(j, \sigma(j))$ and $\del^*_j := \del(j, \sigma^*(j))$ for brevity.
Note $\costl_j = \del_j^2$ and $\costo_j = \del^{*2}_j$.

In all cases for $j$ except when $j$ is bad, the main idea is that we can bound the distance from $j$ to some point in $\loc \triangle P$ by first moving it to either $\sigma(j)$ or $\sigma^*(j)$
and then moving it a distance of $O(\eps) \cdot (\del_j + \del^*_j)$ to reach an open facility. Considering that we reassigned $j$ from $\sigma(j)$, the reassignment cost will be
\[ (\del_j + O(\eps) \cdot (\del_j + \del^*_j))^2 - c_j = O(\eps) \cdot (c_j + c^*_j)\]
or
\[ (\del^*_j + O(\eps) \cdot (\del_j + \del^*_j))^2 - c_j = (1 + O(\eps)) \cdot c^*_j - (1 - O(\eps)) \cdot c_j. \]

~

\noindent
{\bf Case: $j$ is lucky}\\
%This is an easy case since $j$ can go to the nearby centre $\osigma(j)$ in the swap where $\sigma(j)$ is closed.
%When we close $\sigma(j)$ (in a test swap) we can still send $j$ to a nearby centre, namely  $\osigma(j)$.
For the part $P \in \pi$ with $\sigma^*(j) \in P$, we have $\Delta^P_j \leq c^*_j - c_j$ as we could move $j$ from $\sigma(j)$ to $\sigma^*(j)$. If $\sigma(j)$ is swapped out in a different swap $P'$,
we move $j$ to $\osigma(j)$ (which remains open because $j$ is lucky) and bound $\Delta^{P'}_j$ by:
\begin{eqnarray*}
\Delta^{P'}_j & \leq & \del(j, \osigma(j))^2 - \del_j^2 \\
& \leq & (\del_j + \del(\sigma(j), \osigma(j)))^2 - \costl_j \\
& \leq & (\del_j + \eps \cdot D_{\sigma(j)})^2 - \costl_j {\hspace{27.5mm}}\mbox{(Lemma \ref{lem:filter})} \\
& = & 2\eps \cdot \del_j \cdot D_{\sigma(j)} + \eps^2 \cdot D_{\sigma(j)}^2 \\
& \leq & 2\eps \cdot \del_j \cdot (\del_j + \del^*_j) + \eps^2 \cdot (\del_j + \del^*_j)^2 
\quad\quad\mbox{(Lemma \ref{lem:client_rad})}\\
%\end{eqnarray*}
%The last bound uses Lemma \ref{lem:client_rad}. Continuing by using $(\del_j + \del^*_j)^2 \leq 2 c_j + 2c^*_j$ (from Lemma %\ref{lem:sqr}), we have
%\begin{eqnarray*}
& \leq & 2\eps\cdot (\del_j + \del^*_j)^2 + \eps^2\cdot (\del_j+\del^*_j)^2\\
&\leq & 4\eps \cdot (c^*_j + c_j) + 2\eps^2 \cdot (c^*_j + c_j) {\hspace{14mm}}\mbox{(Lemma \ref{lem:sqr})}\\
& \leq & 6 \eps \cdot (\costo_j + \costl_j),
\end{eqnarray*}
again using the assumption that $\eps$ is sufficiently small.
For every other swap $P''$, we have that $\sigma(j)$ remains open after the swap so $\Delta^{P''}_j \leq 0$ as we could just leave $j$ at $\sigma(j$).
In total, we have
\[ \sum_{P \in \pi} \Delta_j^P \leq \costo_j - \costl_j + 6\eps \cdot (\costo_j + \costl_j).\]

~

\noindent
{\bf Case: $j$ is long}\\
Again, for $P \in \pi$ with $\sigma^*(j)\in P$
we get $\Delta^P_j \leq \costo_j - \costl_j$. If $\sigma(j)$ is swapped out in a different swap $P'$, then
%we bound $\Delta^{P'}_j$ by moving $j$ from $\sigma(j)$ to $\osigma(j)$ and then from there to the nearest open centre.
%We will use Lemma \ref{lem:cents} to bound the distance from $\osigma(j)$ to the nearest open centre, which
%Lemma \ref{lem:cents} applies because $\loc \triangle P'$
%contains at least one centre from every pair in $\cents$ by partitioning scheme.
we bound $\Delta^{P'}_j$ by moving $j$ from $\sigma(j)$ to the open centre nearest to $\sigma(j)$.
Note that $\loc \triangle P'$ contains at least one centre from every pair in $\cents$, so we bound this distance
using Lemma \ref{lem:cents}. This case is depicted in Figure \ref{fig:caselong}.
We have
\begin{eqnarray*}
 D_{\sigma(j)} &\leq &\del(\sigma(j), \osigma(j)) + \del(\osigma(j), \phi(\osigma(j)))\\
 &\leq & \eps D_{\sigma(j)} + D_{\osigma(j)} 
                     \hspace{6.4cm}\mbox{(Lemma \ref{lem:filter})}\\
 &\leq & \eps(\del(j,\sigma(j))+\del(j,\sigma^*(j)) + \eps\del(\osigma(j),\osigma^*(j))
                  \quad \quad\quad\quad\quad\mbox{(since $j$ is long)}\\
 &\leq & \eps(\del_j+\del^*_j) + \eps(\del(\osigma(j),\sigma(j))+\del(\sigma(j),j)+\del(j,\sigma^*(j)) + 
                   \del(\sigma^*(j),\osigma^*(j)) )\\ 
 &\leq& \eps(\del_j+\del^*_j) + \eps(\eps D_{\sigma(j)}+\del_j + \del^*_j + \eps D_{\sigma^*(j)})
                                                         \quad\quad \quad\quad\quad\mbox{(Lemma \ref{lem:filter})}\\
 &\leq& \eps(\del_j+\del^*_j) + \eps(\eps(\del_j+\del^*_j) + \del_j + \del^*_j + \eps(\del_j+\del^*_j))
                                                          \hspace{9.5mm}\mbox{(Lemma \ref{lem:client_rad})}\\
 & = & 2\eps(1+\eps)(\del_j+\del^*_j).
\end{eqnarray*}
%The second bound uses Lemma \ref{lem:filter},
% and the last bound uses the fact that $j$ is long and Lemma \ref{lem:client_rad}.
%Rearranging, we see
%\[ D_{\sigma(j)} \leq \eps \cdot \frac{1 + \eps}{1-\eps} (\del(j, \osigma(j)) + \del(j, \osigma^*(j))). \]
%We also have
%\[ \del(j, \osigma(j)) \leq \del_j + \del(\sigma(j), \osigma(j)) \leq \del_j + \eps \cdot (\del_j + \del^*_j).\] Combining this with a similar bound for $\del(j, \osigma^*(j))$,
%we see
%\[ D_{\sigma(j)} \leq \eps\cdot \frac{(1+\eps)(1+2\eps)}{1-\eps} \cdot (\del_j + \del^*_j) \leq \eps (1+5\eps) \cdot (\del_j + \del^*_j). \]
Using this, we bound $\Delta^{P'}_j$ as follows.
\begin{eqnarray*}
\Delta^{P'}_j & \leq & (\del_j + \del(\sigma(j), \loc \triangle P))^2 - \costl_j \\
& \leq & (\del_j + 5D_{\sigma(j)})^2 - \costl_j  \quad\quad\quad\quad\quad\quad\quad\quad\quad\quad\mbox{(Lemma \ref{lem:cents})} \\
& \leq & (\del_j + 10\eps(1+\eps) \cdot (\del_j + \del^*_j))^2 - \costl_j \quad\quad\quad\quad \mbox{bound for $D_{\sigma(j)}$ above}\\
& \leq & 21 \eps \cdot \del_j (\del_j + \del^*_j) + 101 \eps^2 \cdot (\del_j + \del^*_j)^2 \\
& \leq & 22\eps \cdot (\del_j + \del^*_j)^2 \\
& \leq & 44\eps \cdot (\costl_j + \costo_j).
\end{eqnarray*}

\begin{figure}[h!]
\begin{center}
%\scalebox{1}{
%\input{Fig-Case-Long.tex}
%}   
\includegraphics[scale=1]{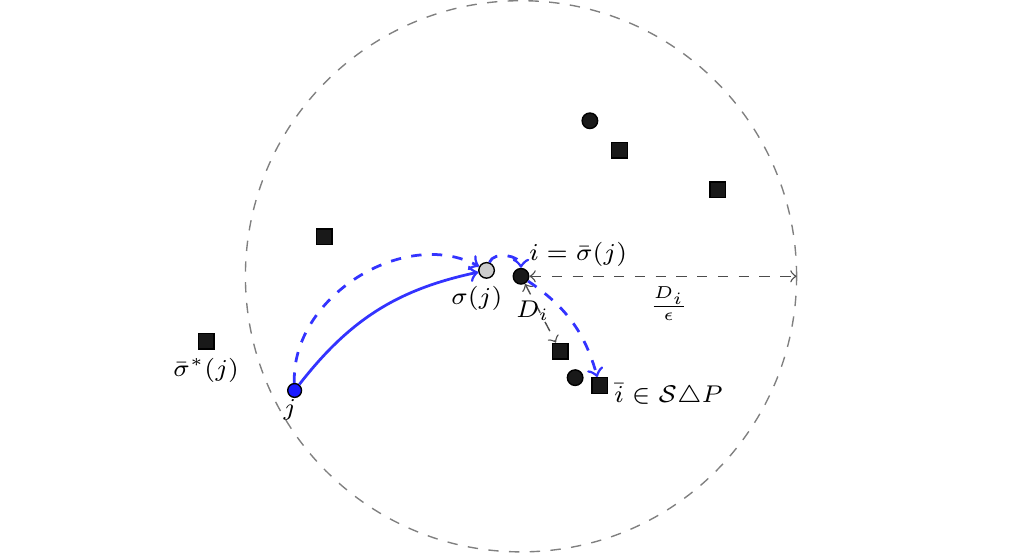} 
\end{center}
 \caption{Illustrating the case when $j$ is long. It is moved from $\sigma(j)$ to the nearest open centre at an additional distance of $O(D_{\sigma(j)})$.
 This is negligible compared to $\del_j + \del^*_j$.}
 \label{fig:caselong}
\end{figure}

In every other swap we could leave $j$ at $\sigma(j)$.
Thus, for a long point $j$ we have
\[ \sum_{P \in \pi} \Delta_j^P \leq \costo_j - \costl_j + 44\eps \cdot (\costo_j + \costl_j).\]

%
%\begin{eqnarray*}
%\Delta^P_j & \leq & (\del(j, \osigma(j)) + \del(\osigma(j), \loc \triangle P))^2 - \costl_j \\
%& \leq & (\del_j + \del(\sigma(j), \osigma(j)) + 5 D_{\osigma(j)})^2 - \costl_j \\
%& \leq & (\del_j + \eps D_{\sigma(j)} + 5\eps \cdot \del(\osigma(j), \osigma^*(j)))^2 - \costl_j 
%            \quad\quad\mbox{because $j$ is long}\\
%%\end{eqnarray*}
%%The last bound follows because $j$ is long. Using Lemma \ref{lem:client_rad}, we continue developing this bound
%%\begin{eqnarray*}
%& \leq & \left[\del_j + \eps\cdot (\del_j + \del^*_j) + 5\eps \cdot (\del(\osigma(j), \sigma(j)) + 
%\del_j + \del^*_j + \del(\sigma^*(j), \osigma^*(j)))\right]^2 - \costl_j \\
%&& \quad\quad\quad\quad\quad\quad\quad\mbox{using Lemma \ref{lem:client_rad}}\\
%& \leq & (\del_j + 6\eps(\del_j + \del^*_j) + 5\eps^2 \cdot (D_{\sigma(j)} + D_{\sigma^*(j)}))^2 - \costl_j \\
%& \leq & (\del_j + 7\eps(\del_j + \del^*_j))^2 - \costl_j \\
%& = & 14\eps \cdot \del_j (\del_j + \del^*_j) + 49\eps^2 \cdot (\del_j + \del^*_j)^2 \\
%& \leq & 29 \eps \cdot (\costo_j + \costl_j).
%\end{eqnarray*}
%
~

\noindent
{\bf Case: $j$ is bad}\\
We only move $j$ in the swap $P$ when $\sigma(j)$ is closed. In this case,
we assign $j$ in the same way as if it was long. Our bound is weaker here and introduces significant positive dependence on $\costl_j$. This will eventually be
compensated by the fact
that $j$ is bad with probability at most $\eps$ over the random choice of $\pi$. For now, we just provide the reassignment cost bound for bad $j$.

\begin{eqnarray*}
\Delta^P_j & \leq & (\del_j + \del(\sigma(j), \loc \triangle P))^2 - \costl_j \\
& \leq & (\del_j + 5D_{\sigma(j)})^2 - c_j \quad\quad\quad\quad\quad\quad\quad\mbox{(Lemma \ref{lem:cents})} \\
& \leq & (6\del_j + 5\del^*_j)^2 - c_j{\hspace{30.25mm}}\mbox{(Lemma \ref{lem:sqr})} \\
& \leq & 71 \cdot (\costo_j + \costl_j).
\end{eqnarray*}
%\begin{eqnarray*}
%\Delta^P_j & \leq & (\del_j + \del(\sigma(j), \osigma(j)) + \del(\osigma(j), \loc \triangle P))^2 - \costl_j \\
%& \leq & (\del_j + \eps \cdot (\del_j + \del^*_j) + 5 D_{\osigma(j)})^2 
%\quad\quad\quad\mbox{by Lemmas \ref{lem:cents} and \ref{lem:sqr}}\\
%& \leq & (\del_j + \eps \cdot (\del_j + \del^*_j) + 5 D_{\sigma(j)})^2 \\
%& \leq & (2\del_j + \del^*_j + 5(\del_j + \del^*_j))^2 \\
%& = & (7\del_j + 6\del^*_j)^2 \\
%& \leq & 98 \cdot (\costo + \costl)
%\end{eqnarray*}
So for bad points we have
\[ \sum_{P \in \pi} \Delta_j^P  \leq 71 \cdot (\costo_j + \costl_j). \]

~

\noindent
{\bf Case: $j$ is good}\\
This breaks into two subcases. We know $\del(\osigma^*(j), \osigma(j)) \leq \eps^{-1} \cdot D_{\osigma(j)}$ because $j$ is not long.
In one subcase, $D_{\osigma^*(j)} \geq \eps \cdot D_{\osigma(j)}$ so $(\osigma^*(j), \osigma(j)) \in \net$. Since $j$ is not bad and not lucky,
we have $\sigma(j), \osigma(j), \osigma^*(j) \in P$ for some common part $P \in \pi$. In the other subcase, $D_{\osigma^*(j)} < \eps \cdot D_{\osigma(j)}$.
Note in this case we still have $\sigma(j), \osigma(j) \in P$ for some common part $P$ because $j$ is not lucky.

~

\noindent
{\bf Subcase: $D_{\osigma^*(j)} \geq \eps \cdot D_{\osigma(j)}$}\\
The only time we move $j$ is when $\sigma(j)$ is closed. As observed in the previous paragraph, this happens in the same swap when $\osigma^*(j)$ is opened,
so send $j$ to $\osigma^*(j)$. This is illustrated in Figure \ref{fig:casegood2}.

\begin{figure}[h!]
\begin{center}
%\scalebox{1}{
%\input{Fig-Case-Good2.tex}
%}   
\includegraphics[scale=1]{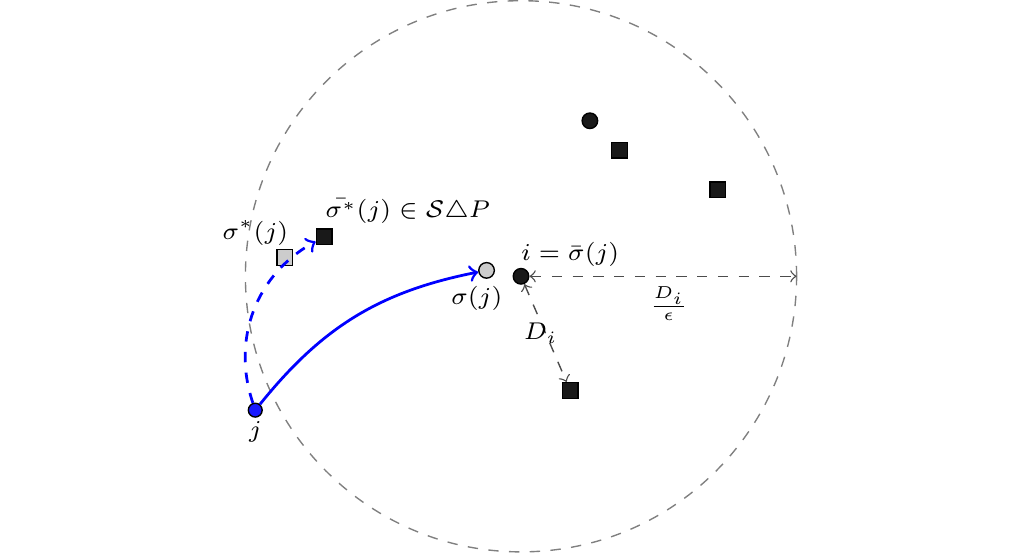} 
\end{center}
 \caption{Illustrating the subcase when $D_{\osigma^*(j))} \geq \eps \cdot D_{\osigma(j)}$. In this case, the only additional distance $j$ travels after first being
 reassigned to $\sigma^*(j)$ is to the nearby proxy $\osigma^*(j)$}
 \label{fig:casegood2}
\end{figure}

\begin{eqnarray*}
\Delta^P_j & \leq & (\del^*_j + \del(\sigma^*(j), \osigma^*(j)))^2 - \costl_j \\
& \leq & (\del^*_j + \eps \cdot D_{\sigma^*(j)})^2 - \costl_j \hspace{4.5cm}\mbox{(Lemma \ref{lem:filter})}\\
& \leq & (\del^*_j + \eps \cdot(\del_j + \del^*_j))^2 - \costl_j \\
& = & \costo_j + \eps^2\cdot(\del_j+\del^*_j)^2 + 2\eps \cdot \del^*_j(\del_j+\del^*_j) - \costl_j \\
& \leq & \costo_j + 2\eps^2\cdot(\costl_j+\costo_j) + 2\eps \cdot(\del_j+\del^*_j)^2-\costl_j\quad\quad\quad\quad\mbox{(Lemma \ref{lem:sqr})}\\
& \leq & \costo_j - \costl_j + 5\eps \cdot (\costo_j + \costl_j). \hspace{4.3cm}\mbox{(Lemma \ref{lem:sqr})}
\end{eqnarray*}

~

\noindent
{\bf Subcase: $D_{\osigma^*(j)} < \eps \cdot D_{\osigma(j)}$}\\
Again, the only time we move $j$ is when $\sigma(j)$ is closed.
We reassign $j$ by first moving it to $\sigma^*(j)$ and then using Lemma \ref{lem:cents} to further bound the cost.
Figure \ref{fig:casegood1} depicts this reassignment.

\begin{figure}[h!]
\begin{center}
%\scalebox{1}{
%\input{Fig-Case-Good1.tex}
%}   
\includegraphics[scale=1]{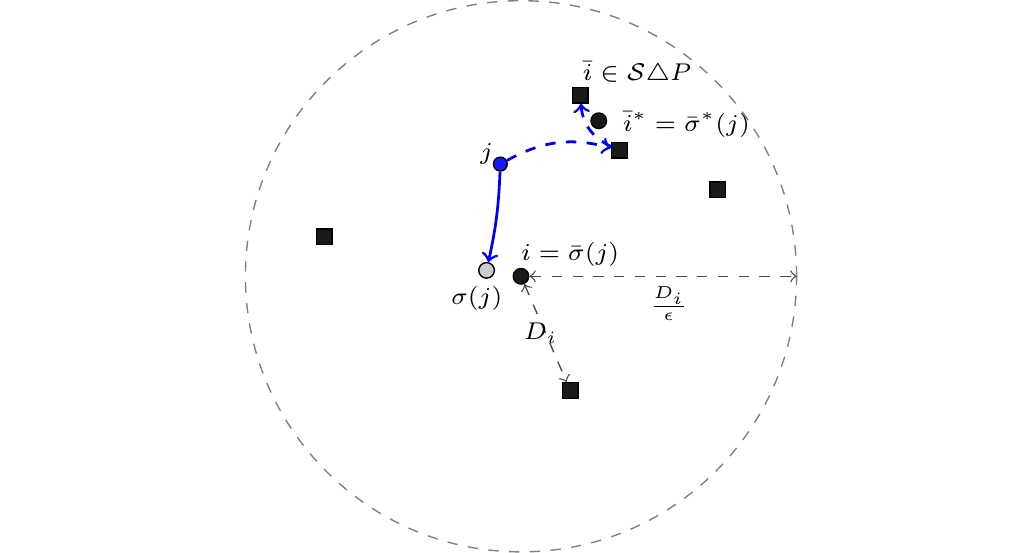} 
\end{center}
 \caption{Illustrating the subcase when $D_{\osigma^*(j)} < \eps \cdot D_{\osigma(j)}$. The only additional distance $j$ moves after being reassigned to $\sigma^*(j)$ is
small, as it travels first to the nearby proxy $\osigma^*(j)$ and then at distance at most $5D_{\osigma^*(j)} \leq 5\eps \cdot D_{\osigma(j)}$ by Lemma \ref{lem:cents}.}
 \label{fig:casegood1}
\end{figure}

We bound the cost change for this reassignment as follows. Recall that 
$\delta(\sigma^*(j),\osigma^*(j))\leq \eps D_{\sigma^*(j)}$ by our filtering. This implies that
 $D_{\sigma^*(j)} \leq D_{\osigma^*(j)} + \eps D_{\sigma^*(j)}$ which in turn implies
$D_{\sigma^*(j)} \leq \frac{1}{1-\eps}D_{\osigma^*(j)}$; thus 
$D_{\sigma^*(j)} \leq (1 + 2\eps) D_{\osigma^*(j)}$ 
which in turn is bounded by $\eps(1 + \eps) D_{\osigma(j)}$ in this subcase (by the assumption of subcase). Thus,
\begin{eqnarray*}
\Delta^P_j & \leq & (\del^*_j + \del(\sigma^*(j), \loc \triangle P))^2 - \costl_j \\
& \leq & (\del^*_j + 5 D_{\sigma^*(j)}) - \costl_j \\
& \leq & (\del^*_j + 5\eps(1+2\eps) \cdot D_{\osigma(j)})^2 - \costl_j \\
& \leq & (\del^*_j + 5\eps (1+2\eps) \cdot (\del_j + \del^*_j))^2 - \costl_j 
                                      \quad\quad\quad\quad\mbox{(Lemma \ref{lem:client_rad})}\\
& \leq & \costo_j + 11\eps \cdot \del^*_j (\del_j + \del^*_j) + 26\eps^2 (\del_j + \del^*_j)^2 - \costl_j \\
& \leq & \costo_j - \costl_j + 12\eps \cdot (\del_j + \del^*_j)^2 \\
& \leq & \costo_j - \costl_j + 24\eps \cdot (\costo_j + \costl_j).
\end{eqnarray*}

%We bound the cost change for this reassignment as follows.
%\begin{eqnarray*}
%\Delta^P_j & \leq & (\del^*_j + \del(\sigma^*(j), \osigma^*(j)) + \del(\osigma^*(j), \loc \triangle P))^2 - \costl_j \\
%& \leq & (\del^*_j + \eps \cdot (\del_j + \del^*_j) + 5 \cdot D_{\osigma^*(j)})^2 - \costl_j \\
%& \leq & (\del^*_j + \eps \cdot (\del_j + \del^*_j) + 5 \eps \cdot D_{\osigma(j)})^2 - \costl_j \\
%%& \leq & (\del^*_j + \eps \cdot (\del_j + \del^*_j) + 7\eps \cdot D_{\sigma(j)})^2 - \costl_j \\
%& \leq & (\del^*_j + 6\eps \cdot (\del_j + \del^*_j))^2 - \costl_j \\
%& \leq & (\costo_j + 36\eps^2(\del_j+\del^*_j)^2 + 12\eps\del^*_j(\del_j+\del^*_j))-\costl_j\\
%& \leq & \costo_j + 25\eps (\del_j+\del^*_j)^2 -\costl_j\\
%& \leq & \costo_j - \costl_j + 50\eps \cdot (\costo_j + \costl_j)
%\end{eqnarray*}
%The 4th inequality uses Lemma \ref{lem:client_rad}.

Considering both subcases we can say that for any good point $j$ that
\[ \sum_{P \in \pi} \Delta_j^P \leq \costo_j - \costl_j + 24\eps \cdot (\costo_j + \costl_j). \]

Aggregating these bounds and remembering $0 \leq \sum_{j \in \cX} \Delta^P_j$ for each $P \in \pi$ because $\loc$ is a locally optimum solution, 
we have
\[ 0 \leq \sum_{j \in \cX} \sum_{P \in \pi} \Delta_j^P \leq \sum_{\substack{j \in \cX \\ j {\rm ~not~bad}}} 
\left[(1 + 44\eps) \costo_j - (1 - 44\eps) \costl_j\right] + \sum_{\substack{j \in \cX \\ j {\rm ~bad}}} 71 (\costo_j + \costl_j). \]

The last step is to average this inequality over the random choice of $\pi$. Note that any point $j \in \cX$ 
is bad with probability at most $\eps$ by the guarantee in Theorem \ref{thm:partition} and the definition of {\em bad}.
Thus, we see
\begin{eqnarray*}
0 & \leq & \mathbf{E}_{\pi}\left[\sum_{j \in \cX} \sum_{P \in \pi} \Delta_j^P\right] \\
& \leq & \sum_{j \in \cX} \Pr[j {\rm ~not~bad}] \cdot \left[(1 + 44\eps) \costo_j - (1 - 44\eps) \costl_j\right] + 
                 \Pr[j {\rm ~bad}] \cdot 71 (\costo_j + \costl_j) \\
& \leq & \sum_{j \in \cX} (1 + 44\eps) \costo_j - (1-\eps) \cdot (1-44\eps) \costl_j + 71\eps \cdot (\costo_j + \costl_j) \\
& \leq & \sum_{j \in \cX} (1 + 115\eps) \costo_j - (1 - 116\eps) \costl_j. \\
\end{eqnarray*}
Rearranging shows
\[ \cost(\loc) \leq \frac{1 + 115\eps}{1 - 116\eps} \cdot \cost(\opt) \leq (1 + O(\eps)) \cdot \cost(\opt). \]

\end{pproof}

%%%%%%%%%%%%%%%%%%%%%%%%%%%%%%%%%%%%%%%%%%%%%%%%%%%%%%%%%%%%%%%%%%%%%%%%%%%%%%%%%%%%%%%%%%%%%%%
\subsection{Running Time Analysis}
Recall that we can go to the discrete case by finding a set $\fa$ of size $O(n\epsilon^{-d}\log(1/\epsilon)$.
The analysis of Arya et al.~\cite{AGKMMP01,AGKMMP04} shows that the number of local search
steps is at most $\log(cost(S_0)/cost(\opt))/\log\frac{1}{1-\eps/k}$, where $S_0$ is the initial solution.
This is polynomial in the total bit complexity of the input (i.e. the input size). So we focus on bounding the time complexity of each
local search step. Since the number of swaps in each step is bounded by $\rho=\swaps$, a crude
upper bound on the time complexity of each step is $O((n\eps^{-d}\log(1/\eps))^\rho)$.

We can speed up this algorithm by using the idea of coresets. 
First observe that our local search algorithm extends to the weighted setting where each point $j\in\cX$ has
a weight $w(j)$ and the cost of a clustering with centres $C$ is $\sum_{j\in\cX} w(j)\cdot\del^2(j,C)$.

Let $\cX\subseteq\RR^d$ be a weighted set of points with weight function $w:\cX\rightarrow\RR$.
For any set $C\subseteq \RR^d$ of size $k$ we use $\mu_C(\cX)$ to denote $\sum_{j\in\cX} w(x)\del(j,C)^2$.
The cost of the optimum \kmeans solution for $\cX$ is $\mu_{\opt}(\cX,k)=\min_{C\subseteq\RR^d:|C|=k} \mu_C(\cX)$.

%{\bf ZF: Does this notation clash with other parts of the paper? e.g. should we say $\mu_{\opt}$ instead?}

\begin{definition}
Let $\cX\subseteq\RR^d$ and $S\subseteq\RR^d$ be two weighted point sets.
For any $k,\eps$, we say $S$
is a $(k,\eps)$-coreset if for every set of $k$ centres $C\subseteq \RR^d$:
\[(1-\eps)\mu_C(\cX)\leq\mu_C(S)\leq (1+\eps)\mu_C(\cX).\]
\end{definition}

A set $Z\subseteq \RR^d$ is
a $(k,\eps)$-centroid set for $\cX$ if there is a set $C\subseteq Z$ of size $k$ such that
$\mu_C(\cX)\leq (1+\eps)\mu_{\opt}(\cX,k)$.

Earlier works on coresets for $k$-means \cite{HK05,FMS07,FSS13,LS10} imply the existence of a $(k,\eps)$-coresets
of small size. For example,~\cite{HK05} show existence of $(k,\eps)$-coresets of size $O(k^3/\eps^{d+1})$.
Applying the result of~\cite{Matousek00}, we get a $(k,\eps)$-centroid set of size $O(\log(1/\eps)k^3/\eps^{2d+1})$
over a $(k,\eps)$-coreset of size $O(k^3/\eps^{d+1})$.
Thus running our local search $\rho$-swap algorithm takes 
$O((k/\eps)^\zeta)$ time per iteration where $\zeta = d^{O(d)}\cdot \eps^{-O(d/\eps)}$.

%{\bf ZF: I'm guessing the $O(n)$ part comes from the fact that we have to find the nearest centre for each point each iteration.
%But wouldn't this take $O(nk)$ time? We could say $O(n \cdot \rho)$ because we only have to compare the old nearest against the $\rho$ new facilities.
%Finally, can't we reduce the number of points in $\cX$ to something depending only on $k, \eps$ where each point has a weight? I think we say something about this
%in the introduction.}

%%%%%%%%%%%%%%%%%%%%%%%%%%%%%%%%%%%%%%%%%%%%%%%%%%%%%%%%%%%%%%%%%%%%%%%%%%%%%%%%%%%%%%%%%%%%%%%%
%%%%%%%%%%%%%%%%%%%%%%%%%%%%%%%%%%%%%%%%%%%%%%%%%%%%%%%%%%%%%%%%%%%%%%%%%%%%%%%%%%%%%%%%%%%%%%%%

\section{The Partitioning Scheme: Proof of Theorem \ref{thm:partition}}\label{sec:partitioning}
Here is the overview of our partitioning scheme. First we bucket the real values by defining bucket $a$
to have real values $r$ where $\eps^{-a}\leq r< \eps^{-(a+1)}$. Fix $b=\Theta(1/\eps)$ and
then choose a random off-set $a'$ and consider $b$ consecutive buckets 
$a'+\ell\cdot b,\ldots,a'+(\ell+1)\cdot b - 1$ to form ``bands''.
We get a random partition of points $i\in\fopt\cup\floc$ into bands 
$B_1,B_2,\ldots$ such that all centres $i$ in the same band $B_\ell$
have the property that their $D_i$ values are in buckets $a'+\ell\cdot b,\ldots,a'+(\ell+1)\cdot b - 1$
(so  $\eps^{-(a'+\ell\cdot b)}\leq D_i <\eps^{-(a'+(\ell+1)\cdot b-1)}$).
Given that for each pair $(i,i^*)\in\net$ the $D_i,D_{i^*}$ values are within a factor $1/\eps$ of each other, 
the probability that they fall into different bands is small (like $\eps/4$).
Next we impose a random grid partition on each band $B_\ell$ to cut it into cells with cell width roughly 
$\Theta(d/\eps^{\ell\cdot b})$; again the randomness comes from choosing a random off-set for our grid. 
The randomness helps to bound the probability of any pair $(i,i^*)\in\net$ being cut into two different cells
to be small again. This will ensure the 3rd property of the theorem holds. Furthermore,
since each $i\in B_\ell$ has $D_i\geq \eps^{-(a'+\ell\cdot b)}$ and by the filtering we did
(Lemma \ref{lem:filter}), balls of radius $\frac{\eps}{2}\cdot\eps^{-(a'+\ell\cdot b)}$ around them must be disjoint; 
hence using a simple volume/packing argument we can bound the number of centres from each $B_\ell$ that fall into the 
same grid cell by a function of $\eps$ and $d$. This helps of establish the first property.
We have to do some clean-up to ensure that for each pair $(i,\cent(i))\in\cents$ they belong to the same part
(this will imply property 2 of the Theorem)
and that at the end for each part $P$, $|P\cap\loc|=|P\cap\opt|$. These are a bit technical and are described in details below.

Start by geometrically grouping the centres in $\fopt \cup \floc$. For $a \in \mathbb Z$, let
\[ G_a := \left\{i \in \fopt \cup \floc : \frac{1}{\eps^{a}} \leq D_i < \frac{1}{\eps^{a+1}}\right\}. \]
Finally, let $G_{-\infty} := \{ i \in \fopt \cup \floc : D_i = 0\}$. Note that each $i \in \fopt \cup \floc$ appears in exactly one set among $\{G_a : a \in \mathbb Z\} \cup \{G_{-\infty}\}$.

We treat $G_{-\infty}$ differently in our partitioning algorithm. It is important to note that no pair in $\cents$ or $\net$ has precisely one point in $G_{-\infty}$,
as the following shows.
\begin{lemma}\label{lem:zero}
For each pair of centres $(i^*, i) \in \cents \cup \net$, $|\{i, i^*\} \cap G_{-\infty}| \neq 1$.
\end{lemma}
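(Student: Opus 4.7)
The plan is to unpack what membership in $G_{-\infty}$ actually means and then check, for each of the two families $\net$ and $\cents$ separately, that the defining inequalities force the partner centre to also satisfy $D = 0$. The key observation is that $i' \in G_{-\infty}$ means $D_{i'} = 0$, i.e.\ the nearest centre on the opposite side lies at distance zero; since we assumed distinct points within $\opt$ or within $\loc$ have positive distance, this is equivalent to saying $i'$ is colocated with a duplicate on the other side. So to prove the lemma I just need to push this colocation to the partner of the pair.

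For a pair $(i^*, i) \in \net$ (so $i^* \in \fopt$, $i \in \floc$, $\del(i,i^*) \leq \eps^{-1} D_i$, and $D_{i^*} \geq \eps \cdot D_i$), I would do a two-case split. If $i \in G_{-\infty}$ then $D_i = 0$, and the upper bound $\del(i,i^*) \leq \eps^{-1} \cdot 0 = 0$ gives $\del(i,i^*) = 0$; but then $D_{i^*} = \del(i^*, \phi(i^*)) \leq \del(i^*, i) = 0$ because $i \in \loc$, so $i^* \in G_{-\infty}$ as well. If instead $i^* \in G_{-\infty}$ then $D_{i^*} = 0$ and the lower bound $\eps D_i \leq D_{i^*} = 0$ forces $D_i = 0$, placing $i$ in $G_{-\infty}$.

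For a pair $(\cent(i), i) \in \cents$ (so $i \in \ophi(\fopt) \subseteq \floc$, $\cent(i) \in \fopt$, and $\eps \cdot \del(\cent(i), i) \leq D_i$), the analysis is symmetric but uses Lemma \ref{lem:phi} in the harder direction. If $i \in G_{-\infty}$, then $D_i = 0$ forces $\del(\cent(i),i) = 0$ directly from the defining inequality, and then $D_{\cent(i)} \leq \del(\cent(i), i) = 0$ since $i \in \loc$, so $\cent(i) \in G_{-\infty}$. If instead $\cent(i) \in G_{-\infty}$, I cannot read $\del(\cent(i), i) = 0$ off of $\cents$'s defining inequality, so I invoke Lemma \ref{lem:phi} on $\cent(i) \in \fopt$: since $\ophi(\cent(i)) = i$ by construction of $\cent(\cdot)$, we get $\del(\cent(i), i) \leq (1+\eps) D_{\cent(i)} = 0$, and therefore $D_i \leq \del(i, \cent(i)) = 0$.

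The only mildly delicate step is the last one above: unlike for $\net$, the $\cents$ side does not directly give a lower bound $D_{\cent(i)} \geq \eps \cdot D_i$, so I really do need to go through $\ophi$ and Lemma \ref{lem:phi} to propagate the colocation from $\cent(i)$ back to $i$. Everything else is a straightforward unfolding of the definitions together with the paper's standing assumption that duplicates between $\opt$ and $\loc$ are the only way to achieve zero distance.
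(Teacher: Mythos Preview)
Your proof is correct. The $\net$ case is identical to the paper's. For $\cents$ you take a slightly different, more direct route: you read off $\del(\cent(i),i)=0$ from the defining inequality $\eps\cdot\del(\cent(i),i)\leq D_i$ when $D_i=0$, and in the reverse direction you invoke Lemma~\ref{lem:phi} together with $\ophi(\cent(i))=i$ to get $\del(\cent(i),i)\leq(1+\eps)D_{\cent(i)}=0$. The paper instead first argues structurally that any colocated pair $(i^*,i)$ with $D_i=D_{i^*}=0$ must survive filtering (because nothing else is colocated with them) and must satisfy $i^*=\cent(i)$; it then uses uniqueness of $\ophi(i^*)$ to finish. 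Your argument is cleaner for the lemma as stated. The one thing the paper's detour buys is a byproduct used later in Section~\ref{sec:balancing}: the proof there explicitly cites ``the proof of Lemma~\ref{lem:zero}'' for the fact that $G_{-\infty}$ decomposes into colocated $\fopt$--$\floc$ pairs, each of which is a $\cents$ pair. Your argument does not establish that extra structure, though it is easy to recover separately from the same ingredients (no two centres in $\opt$, or in $\loc$, are colocated, so any centre with $D=0$ is matched uniquely and survives filtering).
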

\begin{proof}
%Begin by noting that for any $i'' \in \opt \cup \loc$ with $D_{i''} = 0$, since no other point in the same set $\opt$ or $\loc$ as $i''$ is colocated
%with $i''$ then $i''$ is not filtered out (i.e. $i'' \in \fopt \cup \floc$).
Consider some colocated pair $(i^*, i) \in \loc \times \opt$. As $D_i = D_{i^*} = 0$ and because no other centre in $\loc \cup \opt$ is colocated with $i$ and $i^*$,
then $i \in \floc$ and $i^* \in \fopt$. Thus, $\ophi(i^*) = i$ and it is the unique closest facility in $\fopt$ to $i$ so $i^* = \cent(i)$. This shows every
pair $(i^*, i) \in \cents$ with either $D_i = 0$ or $D_{i^*} = 0$ must have both $D_i = D_{i^*} = 0$ so $i^*, i \in G_{-\infty}$.

%If $D_i = 0$ for $i \in \loc$ then it is colocated with some $i' \in \opt$ and no other points. Thus, both $i \in \floc$ and $i' \in \fopt$ and $\ophi(i') = i$.
%Clearly we must then have $\cent(i) = i'$ so $\cent(i) \in G_{-\infty}$. That is, colocated pairs appear
%Suppose $(i^*, i) \in \cents$, so $i^* = \cent(i)$. If $D_i = 0$ then we claim $D_{\cent(i)} = 0$ so both lie in $G_{-\infty}$.
%Let $i' \in \opt$ be such that $\del(i, i') = 0$. Note that $D_{i'} = 0$ so $i' \in \fopt$ and $\ophi(i') = i$ because no other point in $\loc$ is colocated with $i$.
%Similarly, no point $\overline{i^*} \in \opt$ is colocated with $i'$ so $i'$ is the only point in $\fopt$ at distance 0 from $i$. That is, $\cent(i) = i'$ so $\cent(i) \in G_{-\infty}$.
%Conversely, if $D_{\cent(i)} = 0$ then similar reasoning shows  $D_i = 0$ so both lie in $G_{-\infty}$.

%
%If $d \in G_{-\infty}$ then $d(i, i') = 0$ for some $i' \in \opt$. If there are multiple such $i' \in \opt$,
%then let $i'$ be the one that was considered first by the filtering algorithm (i.e. $i' \in \fopt$).
%But then $D_{i'} = d(i', \phi(i')) = 0$ as well. As any other point in $\opt$ colocated with $i'$ would be filtered out, we have $i' = \cent(i)$.
%Thus, $\cent(i) \in G_{-\infty}$ too. A similar argument shows $\cent(i) \in G_{-\infty}$ implies
%$i \in G_{-\infty}$ as well.

Next consider some $(i^*, i) \in \net$. We know $\eps \cdot D_i \leq D_{i^*}$ by definition of $\net$. Thus, if $i^* \in G_{-\infty}$ then $i \in G_{-\infty}$ as well.
Conversely, suppose $i \in G_{-\infty}$. Since $(i^*, i) \in \net$ then $\del(i, i^*) \leq \eps^{-1} \cdot D_i = 0$.
So, $D_{i^*} = 0$ meaning $i^* \in G_{-\infty}$ as well.
\end{proof}

\subsection{Partitioning $\{G_a : a \in \mathbb Z\}$}

%Later we will discuss a trivial way to partition $G_{-\infty}$.
%We now focus on partitioning $\{G_a : a \in \mathbb Z\}$.

\noindent
{\bf Step 1: Forming Bands}\\
Fix $b$ to be the smallest integer that is at least $4/\eps$. We first partition $\{G_a : a \in \mathbb Z\}$ into {\em bands}.
Sample an integer {\em shift} $a'$ uniformly at random in the range $\{0, 1, \ldots, b-1\}$. For each $\ell \in \mathbb Z$, form the band $B_\ell$ as follows:
\[ B_{\ell} := \bigcup_{0 \leq j \leq b-1} G_{a' + j + \ell \cdot b}. \]

~

\noindent
{\bf Step 2: Cutting Out Cells}\\
Focus on a band $B_{\ell}$. Let $W_{\ell} := 4d \cdot \eps^{-(\ell+2) \cdot b - 1}$ 
(recall that $d$ is the dimension of the Euclidean metric). This will be the ``width'' of the cells we create.
It is worth mentioning that this is the first time we are going to use the properties of Euclidean metrics as 
all our arguments so far were treating $\delta(\cdot,\cdot)$ as a general metric.

We consider a random grid with cells having width $W_{\ell}$ in each dimension. Choose an {\em offset}
$\beta \in \mathbb R^d$ uniformly at random from the cube $[0, W_\ell]^d$. 
For emphasis, we let ${\bf p}^i_j$ refer to the $j$'th component of $i$'s point in Euclidean space
(this is the only time we will refer specifically to the coordinates for a point). So centre $i$
has Euclidean coordinates $({\bf p}^i_1, {\bf p}^i_2, \ldots, {\bf p}^i_d)$.

For any tuple of integers ${\bf a} \in \mathbb Z^d$ define the {\em cell} $C^\ell_{\bf a}$ as follows.
\[ C^\ell_{\bf a} = \{i \in B_{\ell} : \beta_j + W_{\ell} \cdot {\bf a}_j \leq {\bf p}^i_j < 
											\beta_j + W_{\ell} \cdot ({\bf a}_j + 1) {\rm ~for~all~} 1 \leq j \leq d\}. \]
These are all points in the band $B_{\ell}$ that lie in the half-open cube with side length $W_{\ell}$ and lowest corner
 $\beta + W_{\ell} \cdot {\bf a}$.
Each $i \in B_{\ell}$ lies in precisely one cell as these half-open cubes tile $\mathbb R^d$.

~

\noindent
{\bf Step 3: Fixing $\cents$}\\
Let $\mathcal I \subseteq \fopt$ be the centres of the form $\cent(i)$ for some $(\cent(i), i) \in \cents$ where $i, \cent(i) \not\in G_{-\infty}$ and $i$ and $\cent(i)$ lie in different cells
after step 2.
We will simply move each $i \in \mathcal I$ to the cell containing $\ophi(i)$.
More precisely, for $\ell \in \mathbb Z$ and ${\bf a} \in \mathbb Z$ we define the {\em part} $P^\ell_{\bf a}$ as
\[ P^\ell_{\bf a} = (C^{\ell}_{\bf a} - \mathcal I) \cup \{i \in \mathcal I : \ophi(i) \in C^{\ell}_{\bf a}\}. \]

We will show that the parts $P^\ell_{\bf a}$ essentially satisfy most of the desired properties stated about the random partitioning scheme promised in Theorem \ref{thm:partition}.
It is easy to see that property 2 holds for each part $P^\ell_{\bf a}$ since for each pair $(\cent(i),i)\in\cents$
both of them belong to the same part.
All that remains is to ensure they are {\em balanced} (i.e. have include the same number of centres in $\opt \cup \loc$) and to incorporate $G_{-\infty}$ and the centres
in $\opt \cup \loc$ that were filtered out. These are relatively easy cleanup steps.

%%%%%%%%%%%%%%%%%%%%%%%%%%%%%%%%%%%%%%%%%%%%%%%%%%%%%%%%%%%%%%%%%%%%%%%%%%%%%%%%%%%%%%%%%%%%%%%%
%%%%%%%%%%%%%%%%%%%%%%%%%%%%%%%%%%%%%%%%%%%%%%%%%%%%%%%%%%%%%%%%%%%%%%%%%%%%%%%%%%%%%%%%%%%%%%%%

\subsection{Properties of the Partitioning Scheme}
We will show the parts formed in the partitioning scheme so far have constant size (depending only on $\eps$ and $d$) and also that
each pair in $\net$ is cut with low probability.
Figure \ref{fig:grid} illustrates some key concepts in these proofs.

\begin{figure}[t]
\begin{center}
%\scalebox{1.2}{
%\input{Fig-Grid2.tex}
%}   
\includegraphics[scale=1.2]{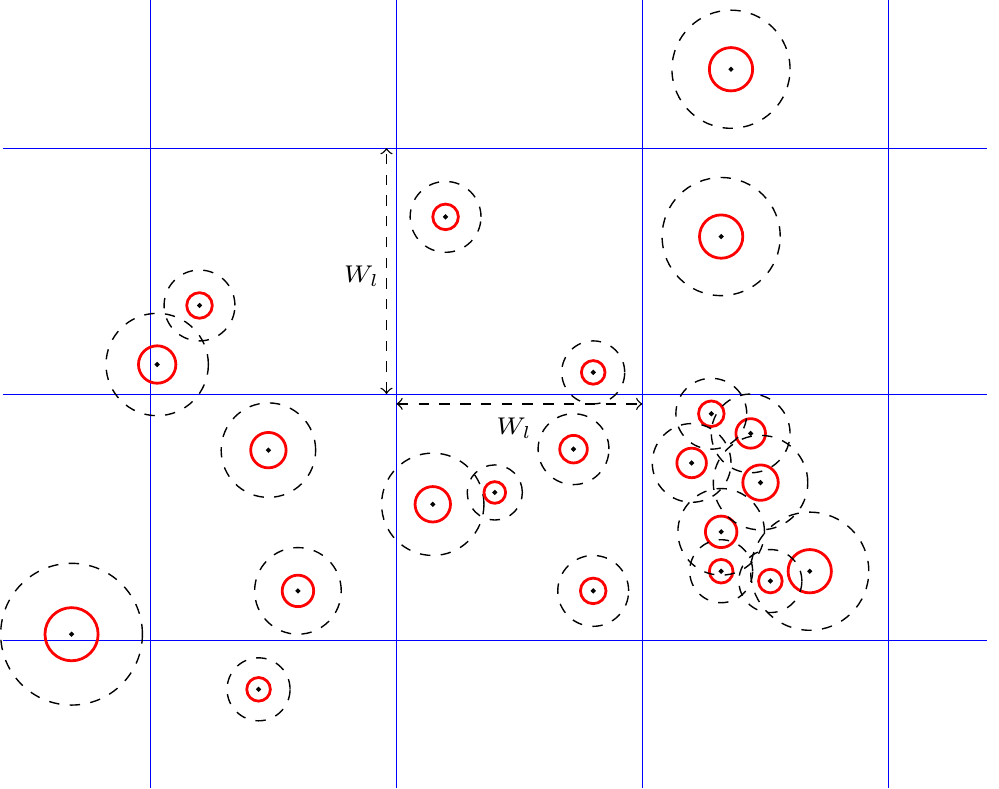} 
\end{center}
 \caption{The random grid \& balls of radii $\frac{\epsilon \cdot D_{i}}{2}$ (solid) and $\frac{D_{i}}{\epsilon}$ (dashed) around each 
$i \in B_{\ell}$.
 Note the balls of radius $\frac{\eps \cdot D_i}{2}$ are disjoint, which is used to bound $|C^\ell_{\bf a}|$ by a volume argument in Lemma \ref{lem:size}.
 For each pair $(i, i^*) \in \cents \cup \net$ in the band $B_\ell$, one centre must lie in the ball of radius $\frac{D_{i}}{\epsilon}$ around the other.
 The proof of Lemma \ref{lem:prob} essentially shows that each such ball crosses a grid line with very small probability, so such a pair is probably not cut by the random grid. }
 \label{fig:grid}
\end{figure}

\begin{lemma}\label{lem:size}
For each $\ell \in \mathbb Z$ and ${\bf a} \in \mathbb Z^d$ we have $|P^\ell_{\bf a}| \leq 2\cdot (2d)^{2d} \cdot \eps^{-9d/\eps}$.
\end{lemma}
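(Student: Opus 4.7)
The plan is to decompose $P^\ell_{\mathbf{a}}$ into the disjoint union
$(C^\ell_{\mathbf{a}} \setminus \mathcal{I}) \;\sqcup\; \{i \in \mathcal{I} : \ophi(i) \in C^\ell_{\mathbf{a}}\}$
and argue both pieces have size at most $|C^\ell_{\mathbf{a}}|$, giving $|P^\ell_{\mathbf{a}}| \leq 2|C^\ell_{\mathbf{a}}|$. The first piece is literally a subset of $C^\ell_{\mathbf{a}}$. For the second piece, the key observation is that for each $\cent(i) \in \mathcal{I}$ we have $\cent(i) \in \ophi^{-1}(i)$ by definition of $\cent$, so $\ophi(\cent(i)) = i$; hence the map $\cent(i) \mapsto i$ injects this piece into $C^\ell_{\mathbf{a}} \cap \floc$. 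It therefore suffices to show $|C^\ell_{\mathbf{a}}| \leq (2d)^{2d} \eps^{-9d/\eps}$ (up to absorbing a small constant into the claimed $2$).

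The bound on $|C^\ell_{\mathbf{a}}|$ is a standard volume/packing argument, and is the first (and essentially only) place in the whole proof where we use that the metric is $\RR^d$. By construction of the bands, every $i \in C^\ell_{\mathbf{a}} \subseteq B_\ell$ satisfies $D_i \in [\eps^{-(a'+\ell b)},\,\eps^{-(a'+(\ell+1)b)})$. By Lemma~\ref{lem:filter}, the open Euclidean balls $B(i,\tfrac{\eps}{2} D_i)$ for $i \in C^\ell_{\mathbf{a}}$ are pairwise disjoint; let $r_{\min}$ and $r_{\max}$ denote the corresponding lower and upper bounds on their radii. Each such ball lies in the axis-aligned cube $Q$ obtained by expanding $C^\ell_{\mathbf{a}}$ by $r_{\max}$ on each side, and a short calculation using $W_\ell = 4d\,\eps^{-(\ell+2)b-1}$ together with $a' \geq 0$ gives $2r_{\max}/W_\ell \leq \eps^3/(4d)$, so $\mathrm{vol}(Q) \leq 2 W_\ell^d$. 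Since each ball of radius at least $r_{\min}$ contains an axis-aligned cube of side $2 r_{\min}/\sqrt{d}$, and these inscribed cubes are pairwise disjoint inside $Q$, a volume comparison yields
\[ |C^\ell_{\mathbf{a}}| \cdot \left(\frac{2 r_{\min}}{\sqrt{d}}\right)^d \leq 2 W_\ell^d, \qquad \text{so} \qquad |C^\ell_{\mathbf{a}}| \leq 2 \left(\frac{W_\ell \sqrt{d}}{2 r_{\min}}\right)^d. \]

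Substituting the definitions gives $|C^\ell_{\mathbf{a}}| \leq 2(4 d^{3/2})^d \eps^{-d(2b+2)}$. The choice $b \geq 4/\eps$ is engineered precisely so that $d(2b+2) \leq 8d/\eps + 4d \leq 9d/\eps$ once $\eps \leq 1/4$, and $(4d^{3/2})^d \leq (2d)^{2d}/2$ for $d \geq 2$ (using $d^{3d/2} \leq d^{2d}/2$), folding the leading $2$ into the stated constant. The only real obstacle here is careful bookkeeping of the powers of $\eps$ so the claimed constants match, specifically verifying that the slack between $(W_\ell + 2r_{\max})^d$ and $W_\ell^d$ contributes nothing of consequence; every geometric step is the routine disjoint-ball packing argument that doubling metrics will later require a substitute for (Section~\ref{sec:doubling1}).
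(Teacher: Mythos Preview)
Your proof is correct and follows essentially the same route as the paper: first reduce $|P^\ell_{\mathbf a}| \leq 2|C^\ell_{\mathbf a}|$ via the injection $\cent(i) \mapsto \ophi(\cent(i)) = i$, then bound $|C^\ell_{\mathbf a}|$ by a disjoint-balls packing argument using Lemma~\ref{lem:filter}. The only cosmetic differences are that the paper lower-bounds ball volume by the explicit formula $(\sqrt{2\pi}/d)^d R^d$ rather than by an inscribed cube, and it works with the uniform radius $\tfrac{\eps}{2}\eps^{-\ell b}$ rather than tracking the $a'$-dependent $r_{\min}$; the arithmetic closes the same way.
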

\begin{proof}
Note that each centre $i \in C^\ell_{\bf a}$ witnesses the inclusion of at most one additional centre of $\mathcal I$
into $P^\ell_{\bf a}$ (in particular, $\cent(i)$).
So, $|P^\ell_{\bf a}| \leq 2 \cdot |C^\ell_{\bf a}|$ meaning
it suffices to show $|C^\ell_{\bf a}| \leq (2d)^{2d} \cdot \eps^{-9d/\eps}$.
For each $i \in C^\ell_{\bf a}$ we have $D_i \geq \eps^{-\ell \cdot b}$. By Lemma \ref{lem:filter}, the balls of radius $\frac{\eps}{2} \cdot \eps^{-\ell \cdot b}$ around the centres
in $C^\ell_{\bf a}$ must be disjoint subsets of $\mathbb R^d$.
%: indeed they cannot overlap between centres in $\fopt$ and $\floc$ by definition of $D_i$
%for these centres and overlapping balls between two such centres in $\fopt$ (or two in $\floc$) would imply that one of these centres should have been
%excluded from $\fopt$ or $\floc$ by the filtering process.

The volume of a ball of radius $R$ in $\mathbb R^d$ can be lower bounded by $\left(\frac{\sqrt{2\pi}}{d}\right)^d \cdot R^d$, so
the total volume of all of these balls of radius $\frac{\eps}{2} \cdot \eps^{-\ell \cdot b}$ is at least
\[ |C^\ell_{\bf a}| \cdot \left(\frac{\sqrt{\pi} \cdot \eps}{\sqrt 2 \cdot d}\right)^d \cdot \eps^{-d \cdot \ell \cdot b}. \]

On the other hand, these %ball of radius $\frac{\eps}{2} \cdot \eps^{-\ell \cdot b}$ around each $i \in C^{\ell}_{\bf a}$
balls have their centres
located in a common cube with side length $W_\ell$ and, thus, 
all such balls are contained in a slightly larger cube with side length $W_\ell + D_i \leq W_\ell + \eps^{-(\ell+2)\cdot b + 1}$.
% and each such ball has radius at most $\frac{\eps}{2} \cdot \epsilon^{-(\ell+2) \cdot b}$
The total volume of all balls is at most
$(W_\ell+ \epsilon^{-(\ell+2)\cdot b + 1})^d \leq (5d\eps^{-(2b+1)})^d \cdot \epsilon^{-d \cdot \ell \cdot b}$.
Combining this with the lower bound on the total volume, we see
\[ |C^\ell_{\bf a}| \leq \left(\frac{\sqrt{50} \cdot d^2 \cdot \eps^{-(2b+2)}}{\sqrt{\pi}} \right)^d \leq (2d)^{2d} \cdot \eps^{-9d/\eps}. \]
The last bound follows for sufficiently small $\epsilon$ and because $b$ is the smallest integer at least $4/\eps$.
\end{proof}

%For pairs $(i^*, i) \in \net$ with both $i, i^* \in \{G_a : a \in \mathbb Z\}$ we bound the probability that $i, i^*$ lie in different parts.
\begin{lemma}\label{lem:prob}
For each $(i^*, i) \in \net$ with  $i,i^* \not\in G_{-\infty}$ we have $\Pr[i, i^* {\rm ~lie~in~different~parts}] \leq \eps$.
\end{lemma}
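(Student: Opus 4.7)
The plan is to decompose the event that $i,i^*$ end up in different parts into the union of two simpler events tied to Steps~2 and~3 of the partitioning. Since $i\in\floc$ while $\mathcal{I}\subseteq\fopt$, $i$ is never moved by Step~3; only $i^*$ can be. Consequently $\{i,i^*\text{ in different parts}\}\subseteq E_1\cup E_2$, where $E_1=\{i,i^*\text{ lie in different cells after Step~2}\}$ and $E_2=\{i^*\in\mathcal{I}\}$. I will bound each probability by $O(\eps)$ and apply a union bound.

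To bound $\Pr[E_1]$, I first argue that the bucket indices of $D_i$ and $D_{i^*}$ differ by at most $1$: the net condition gives $D_{i^*}\geq\eps D_i$ directly, while $D_{i^*}\leq\del(i^*,i)\leq\eps^{-1}D_i$ follows from $i\in\loc$ combined with the other net condition $\del(i,i^*)\leq\eps^{-1}D_i$. Since the random shift $a'$ is uniform in $\{0,\ldots,b-1\}$ with $b\geq 4/\eps$, a random band boundary separates two buckets whose indices differ by at most $1$ with probability at most $1/b\leq\eps/4$. Conditioned on $i,i^*\in B_\ell$, the definition of $B_\ell$ combined with $a',j\leq b-1$ yields $D_i<\eps^{-(\ell+2)b+1}$, hence $\del(i,i^*)\leq\eps^{-1}D_i<\eps^{-(\ell+2)b}$. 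The standard random-grid argument --- union-bounding over the $d$ coordinate axes, where each axis contributes at most $|{\bf p}^i_j-{\bf p}^{i^*}_j|/W_\ell\leq\del(i,i^*)/W_\ell$ --- combined with $W_\ell=4d\eps^{-(\ell+2)b-1}$ then gives $\Pr[\text{different cells}\mid\text{same band}]\leq d\cdot\del(i,i^*)/W_\ell\leq\eps/4$. Therefore $\Pr[E_1]\leq\eps/4+\eps/4=\eps/2$.

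For $\Pr[E_2]$, note that if $(i^*,\ophi(i^*))\notin\cents$ then $i^*\notin\mathcal{I}$ deterministically, so I may assume $(i^*,\ophi(i^*))\in\cents$. The cents condition combined with $\del(i^*,\ophi(i^*))\geq D_{i^*}$ (since $\ophi(i^*)\in\loc$) gives $D_{\ophi(i^*)}\geq\eps D_{i^*}$, while Lemma~\ref{lem:phi} gives $D_{\ophi(i^*)}\leq\del(\ophi(i^*),i^*)\leq(1+\eps)D_{i^*}$. Hence the bucket indices of $D_{i^*}$ and $D_{\ophi(i^*)}$ again differ by at most $1$, and the same band-plus-grid calculation applied to the pair $(i^*,\ophi(i^*))$ --- now with $\del(i^*,\ophi(i^*))\leq(1+\eps)D_{i^*}\leq 2\eps^{-(\ell+2)b+1}$ in band $B_\ell$ --- yields $\Pr[E_2]\leq 1/b+\eps^2/2\leq\eps/4+\eps^2/2$. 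Combining by a union bound, $\Pr[i,i^*\text{ in different parts}]\leq 3\eps/4+O(\eps^2)\leq\eps$ for sufficiently small $\eps$. The main subtlety I expect is recognizing that Step~3 can separate a net pair even when Step~2 did not --- if $i^*\in\mathcal{I}$ gets moved to a cell different from $i$'s --- which is why one must bound $\Pr[i^*\in\mathcal{I}]$ in addition to the direct cell-separation probability.
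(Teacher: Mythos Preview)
Your proof is correct and follows essentially the same approach as the paper: decompose into (a) the net pair $(i^*,i)$ being separated by the random band or random grid, and (b) $i^*$ being relocated in Step~3, which in turn requires the $\cents$-pair $(i^*,\ophi(i^*))$ to be separated by band or grid. Your presentation via $E_1\cup E_2$ is slightly cleaner than the paper's four-term union bound, and your use of Lemma~\ref{lem:phi} to get $\del(i^*,\ophi(i^*))\leq(1+\eps)D_{i^*}$ yields a sharper $\eps^2/2$ for the grid-cut of the $\cents$ pair (the paper just reuses the cruder $\eps/4$ bound), but the argument is otherwise the same.
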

\begin{proof}
We first bound the probability they lie in different bands. Note 
that $D_{i^*} \leq \del(i, i^*) \leq D_i / \eps$ and also $\eps D_i \leq D_{i^*}$ because
$(i^*, i) \in \net$. Thus, if we say $i \in G_a$ and $i^* \in G_{{a^*}}$ then $|a-{a^*}| \leq 1$. The probability that $G_a$ and $G_{{a^*}}$ are separated
when forming the bands $B_\ell$ is at most $1/b \leq \eps/4$.

Next, conditioned on the event that $i, i^*$ lie in the same band $B_\ell$, we bound the probability they lie in different cells. 
Say $i, i^* \in B_\ell$
and note
\[ \del(i, i^*) \leq \eps^{-1} \cdot D_i \leq \eps^{-(\ell+2) \cdot b} \leq \frac{\eps}{4d} \cdot W_{\ell}.\]
The probability that $i$ and $i^*$ are cut by the random offset
along one of the $d$-dimensions is then at most $\frac{\eps}{4d}$. Taking the union bound over all dimensions, the probability that $i$ and $i^*$ lie
in different cells is at most $\frac{\eps}{4}$.

Finally, we bound the probability that $i^*$ will be moved to a different part when fixing $\cents$. If $(i^*, \ophi(i^*)) \not\in \cents$ or if $i^* \not\in \mathcal I$
then this will not happen.
So, we will bound $\Pr[i^* \in \mathcal I]$ if $(i^*, \ophi(i^*)) \in \cents$.
That is, we bound the probability that $i', i^*$ lie in different parts where $i'$ is such that $i^* = \cent(i')$ and $(i^*, i') \in \cents$.
Note that $D_{i'} \leq \del(i', i^*) \leq (1+\eps)\cdot D_{i^*} \leq \eps^{-1} \cdot D_{i^*}$ by Lemma \ref{lem:phi} and
$D_{i'} \geq \eps \cdot \del(i', i^*) = \eps \cdot D_{i^*}$ by definition of $\cents$. So $i'$ and $i^*$ lie in different bands with probability
at most $\frac{\eps}{4}$ by the same argument as with $(i, i^*)$. Similarly, conditioned on $i', i^*$ lying in the same band, the probability they lie in different cells is at
most $\frac{\eps}{4}$ again by the same arguments as with $(i, i^*)$.

Note that if $i, i^*$ lie in different parts then they were cut by the random band or by the random box, or else $i', i^*$ were cut by the random band
or the random box (the latter may not apply if $i^*$ is not involved in a pair in $\cents$). Thus, by the union bound we have
\begin{eqnarray*}
\Pr[i, i^* {\rm ~are~in~different~parts}] & \leq & \Pr[i, i^* {\rm ~lie~in~different~} B_\ell] + \\
 && \Pr[i, i^* {\rm ~lie~in~different~} C^\ell_{\bf a} ~|~ i, i^* {\rm ~lie~in~the~same~} B_\ell] + \\
& & \Pr[i^*, i' {\rm ~lie~in~different~} B_\ell] + \\
&&\Pr[i^*, i' {\rm ~lie~in~different~} C^\ell_{\bf a} ~|~ i^*, i' {\rm ~lie~in~the~same~} B_\ell ] \\
& \leq & \frac{\eps}{4} + \frac{\eps}{4} + \frac{\eps}{4} + \frac{\eps}{4} = \eps 
\end{eqnarray*}
%Overall, $i, i^*$ lie in different parts only if they lie in different bands or are cut by the random box in their band or if $i'$ and $i^*$ lie in different bands or are cut by
%the random box. These events happen with probability at most $\frac{\eps}{4}$ each, so overall the probability that $i, i^*$ lie in different parts is at most $\eps$.
\end{proof}

%%%%%%%%%%%%%%%%%%%%%%%%%%%%%%%%%%%%%%%%%%%%%%%%%%%%%%%%%%%%%%%%%%%%%%%%%%%%%%%%%%%%%%%%%%%%%%%%
%%%%%%%%%%%%%%%%%%%%%%%%%%%%%%%%%%%%%%%%%%%%%%%%%%%%%%%%%%%%%%%%%%%%%%%%%%%%%%%%%%%%%%%%%%%%%%%%

\subsection{Balancing the Parts}\label{sec:balancing}
We have partitioned $\{G_a : a \in \mathbb Z\}$ into parts $P^\ell_{\bf a}$ for various $\ell \in \mathbb Z$ and ${\bf a} \in \mathbb Z^d$. We extend this to a partition of all of $\opt \cup \loc$
into constant-size parts that are balanced between $\opt$ and $\loc$ to complete the proof of Theorem \ref{thm:partition}.
Let $\mathcal P = \{P^\ell_{\bf a} : \ell \in \mathbb Z, {\bf a} \in \mathbb Z^d {\rm ~and~} P^\ell_{\bf a} \neq \emptyset\}$ be the collection of nonempty parts formed so far.

The proof of Lemma \ref{lem:zero} shows that $G_{-\infty}$ partitions naturally into colocated centres. So let $\mathcal P_{-\infty}$ denote the partition of $G_{-\infty}$
into these pairs.
%Next, notice that $G_{-\infty}$ partitions naturally into pairs of colocated centres. That is, if $i \in \loc$ has $D_i = 0$ then $\del(i, i^*) = 0$ for some $i^* \in \opt$ and vice versa.
%Also, as $\del(i, i') > 0$ for all distinct $i,i' \in \loc$ then any $i \in \loc$ with $D_i = 0$ actually lies in $\floc$. Similarly, any $i^* \in \opt$ with $D_{i^*} = 0$ also lies in $\fopt$.
%Let $\mathcal P_{-\infty}$ denote the partition of $G_{-\infty}$ into these colocated pairs.
Finally let $\mathcal P'$ denote the partition of $(\opt - \fopt) \cup (\loc - \floc)$ into singleton sets.

We summarize important properties of $\mathcal P \cup \mathcal P_{-\infty} \cup \mathcal P'$.
\begin{itemize}
\item $\mathcal P \cup \mathcal P_{-\infty} \cup \mathcal P'$ is itself a partitioning of $\opt \cup \loc$ into parts with size at most
$2\cdot (2d)^{2d} \cdot \eps^{-9d/\eps}$ (Lemma \ref{lem:size}).
\item Every $(\cent(i), i) \in \cents$ pair has both endpoints
in the same part. This is true for pairs with endpoints in $\mathcal P$ by Step 3 in the partitioning of $\{G_a : a \in \mathbb Z\}$.
If $i \in G_{-\infty}$ this is trivially true as colocated pairs $(i^*, i) \in \fopt \cup \floc$ must have $i^* = \cent(i)$.
%then we claim that $\cent(i)$ is defined and must be the colocated point with $i$ in the pair containing $i$ in $\mathcal P_{-\infty}$.
%But this follows from the definitions of $\ophi$ and $\cent()$ and that fact that a part in $\mathcal P_{-\infty}$ consists of a single colocated pair of centres.
%Since $D_{\cent(i)} = 0$ then $d(\cent(i), \phi(\cent(i))) = 0$ so in fact $\cent(i)$ is paired with the colocated centre $\phi(\cent(i)) = i$.
\item Over the randomized formation of $\mathcal P$, each $(i^*, i) \in \net$ has endpoints in different parts with probability at most $\eps$.
For $i, i^*$ lying in $\mathcal P$ this follows from Lemma \ref{lem:prob}. For $i, i^*$ lying in $\mathcal P_{-\infty}$ this follows because
they must then be colocated pairs so they always form a part by themselves.
\end{itemize}

From now on, we simply let $\overline{\mathcal P} = \mathcal P \cup \mathcal P_{-\infty} \cup \mathcal P'$. We show how to combine parts of $\overline{\mathcal P}$
into constant-size parts that are also balanced between $\opt$ and $\loc$. Since merging parts does not destroy the property of two centres lying together, we will still have that pairs
of centres in $\cents$ appear together and any pair of centres in $\net$ lie in different parts with probability at most $\eps$.

For any subset $A \subseteq \opt \cup \loc$, let $\mu(A) = |A \cap \opt| - |A \cap \loc|$ denote the {\em imbalance} of $A$.

\begin{lemma} \label{lem:grouping}
Let $Y \geq 1$ be an integer and $\mathcal A$ a collection of disjoint, nonempty subsets of $\opt \cup \loc$ such that $\sum_{A \in \mathcal A} \mu(A) = 0$.
If $|A| \leq Y$ for each $A$ then there is 
some nonempty $\mathcal B \subseteq \mathcal A$ where $|\mathcal B| \leq 2Y^3$ such that $\sum_{B \in \mathcal B} \mu(B) = 0$.
\end{lemma}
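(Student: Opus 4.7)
The approach is to greedily order $\mathcal A$ so that its prefix imbalance sums stay in $[-Y, Y]$, then apply pigeonhole; this yields the bound $|\mathcal B| \leq 2Y+1$, which suffices for $Y \geq 2$, with the case $Y = 1$ handled by direct inspection.

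First I would enumerate $\mathcal A$ as $A_1, A_2, \ldots, A_m$ using the following greedy rule. Let $S_0 := 0$ and $S_t := \sum_{i \leq t} \mu(A_i)$. At step $t$, choose $A_t$ to be any remaining set with $\mu(A_t) \leq 0$ if $S_{t-1} \geq 0$, and any remaining set with $\mu(A_t) > 0$ if $S_{t-1} < 0$. Such a choice always exists by the overall balance condition $\sum_{A \in \mathcal A} \mu(A) = 0$: if, say, $S_{t-1} \geq 0$ yet every remaining set had strictly positive imbalance, the remaining imbalances would sum to something strictly positive, but they must equal $-S_{t-1} \leq 0$. Since $|A| \leq Y$ forces $|\mu(A)| \leq Y$, this rule keeps every $S_t$ inside $[-Y, Y]$ by an easy case check (e.g., if $S_{t-1} \in [0, Y]$ and $\mu(A_t) \in [-Y, 0]$, then $S_t \in [-Y, Y]$).

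Finally I would apply pigeonhole. If $m \leq 2Y+1$, take $\mathcal B := \mathcal A$; otherwise the $2Y+2$ partial sums $S_0, \ldots, S_{2Y+1}$ all lie in the $(2Y+1)$-element set $\{-Y, \ldots, Y\}$, so some $S_i = S_j$ with $0 \leq i < j \leq 2Y+1$, and $\mathcal B := \{A_{i+1}, \ldots, A_j\}$ is nonempty with $|\mathcal B| \leq 2Y+1 \leq 2Y^3$ (valid for $Y \geq 2$) and $\sum_{B \in \mathcal B} \mu(B) = S_j - S_i = 0$. For $Y = 1$, if some $A$ has $\mu(A) = 0$ take $\mathcal B = \{A\}$; otherwise all imbalances are $\pm 1$, $m$ is even, and the greedy rule forces $S_2 = 0$, giving $|\mathcal B| = 2 = 2Y^3$. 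The only step that genuinely takes thought is verifying that the greedy choice is always available, which follows from the balance condition as above; the pigeonhole finish is routine, so I do not foresee any serious obstacle.
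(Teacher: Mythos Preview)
Your proof is correct, but it takes a genuinely different route from the paper's argument. The paper buckets the sets of $\mathcal A$ by their imbalance value $\mu(A)\in\{-Y,\dots,-1,1,\dots,Y\}$ (after disposing of the $\mu(A)=0$ case) and then argues by cases: either one side (positive or negative imbalances) has at most $Y^2$ sets, which forces $|\mathcal A|\le 2Y^3$ and one takes $\mathcal B=\mathcal A$; or both sides are large, in which case pigeonhole gives buckets $\mathcal A_x$ and $\mathcal A_{-y}$ each of size at least $Y$, and one takes $y$ sets from $\mathcal A_x$ together with $x$ sets from $\mathcal A_{-y}$.

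Your greedy-prefix argument is both simpler and quantitatively stronger: it yields $|\mathcal B|\le 2Y+1$ rather than $2Y^3$ for $Y\ge 2$, and the $Y=1$ case is handled cleanly. The key insight that the running balance stays in $[-Y,Y]$ because a set of the needed sign is always available (forced by the global balance condition) is the only nontrivial step, and you have justified it correctly. The paper's approach has the mild advantage of being completely case-based with no inductive ordering to maintain, but your argument is more elegant and would in fact tighten the eventual swap-size bound $\swaps$ by roughly a cubic-root factor (since the balancing step in Section~\ref{sec:balancing} multiplies the part size by $2Y^3$).
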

\begin{proof}
If $\mu(A) = 0$ for some $A \in \mathcal A$ then simply let $\mathcal B = \{A\}$.
Otherwise, note $|\mu(A)| \leq |A| \leq Y$ for each $A \in \mathcal A$ and partition $\mathcal A$
into sets $\mathcal A_x := \{A \in \mathcal A : \mu(A) = x\}$ for $x \in \{1, \ldots, Y\} \cup \{-1, \ldots, -Y\}$.
We have
\[ \sum_{x=1}^Y |\mathcal A_{-x}| \leq \sum_{x=1}^Y |\mathcal A_{-x}| \cdot x = \sum_{x=1}^Y |\mathcal A_x| \cdot x \leq Y \cdot \sum_{x=1}^Y |\mathcal A_x|, \]
so if $\sum_{x=1}^Y |A_x| \leq Y^2$ then $|\mathcal A| \leq 2Y^3$ and we can take $\mathcal B = \mathcal A$. Similarly
if $\sum_{x=1}^Y |\mathcal A_{-x}| \leq Y^2$ then we can take $\mathcal B = \mathcal A$.

Finally, we are left with the case that $\sum_{x=1}^Y |\mathcal A_x|$ and $\sum_{x=1}^Y |\mathcal A_{-x}|$ both exceed $Y^2$. By the pigeonhole principle,
there are values $1 \leq x, y \leq Y$ such that $|\mathcal A_x| \geq Y$ and $|\mathcal A_{-y}| \geq Y$. In this case, we take $\mathcal B$ to be any $y$ sets
from $\mathcal A_x$ plus any $x$ sets from $\mathcal A_{-y}$.
Then $|\mathcal B| = x + y \leq 2Y$ and $\mathcal B$ is balanced, which is what we needed to show.
\end{proof}

To complete the partitioning we iteratively apply Lemma \ref{lem:grouping} to $\overline{\mathcal P}$
with $Y = 2\cdot (2d)^{2d} \cdot \eps^{-9d/\eps}$ where we note $|P| \leq Y$ by Lemma \ref{lem:size}.
Each iteration, we find some nonempty $\mathcal Q \subseteq \overline{\mathcal P}$
such that $\sum_{Q \in \mathcal Q} \mu(Q) = 0$. Remove $\mathcal Q$ from $\overline{\mathcal P}$ and repeat until all sets 
from $\overline{\mathcal P}$ have been removed.
Each balanced part obtained is the union of at most $2 \cdot \left(2\cdot (2d)^{2d} \cdot \eps^{-9d/\eps}\right)^3$ different parts in $\overline{\mathcal P}$, meaning each
part has size at most $2 \cdot \left(2\cdot (2d)^{2d} \cdot \eps^{-9d/\eps}\right)^4 = 32 \cdot (2d)^{8d} \cdot \eps^{-36 \cdot d/\eps}$.
%Note this is a slightly different constant than what was used for Algorithm \ref{alg:local} in Section \ref{sec:prem}. 
%The constant stated in Section \ref{sec:prem} will be used when discussing doubling metrics in the next section.

%%%%%%%%%%%%%%%%%%%%%%%%%%%%%%%%%%%%%%%%%%%%%%%%%%%%%%%%%%%%%%%%%%%%%%%%%%%%%%%%%%%%%%%%%%%%%%%%
%%%%%%%%%%%%%%%%%%%%%%%%%%%%%%%%%%%%%%%%%%%%%%%%%%%%%%%%%%%%%%%%%%%%%%%%%%%%%%%%%%%%%%%%%%%%%%%%

\section{Proof of Theorem \ref{theo:main} and Extension to \lqnorm}\label{sec:doubling}
In this section we show how to extended the analysis presented in the previous sections for $\RR^d$ to
prove Theorem \ref{theo:main}. For doubling metrics we use $\rho=\swaps=32\cdot d^{16d}\cdot\eps^{-256d/\eps}$ and consider
the $\rho$-swap local search. First we argue why we get a PTAS when 
the metric $\delta(\cdot,\cdot)$ is a doubling metric. 

\subsection{Extending to Doubling Metrics}\label{sec:doubling1}
% Also when the objective function is to minimize $\sum_{j\in\cX} \delta(j,S)^p$ instead of $p=2$.
Let $(V,\delta)$ be a doubling metric space with doubling dimension $d$ (so each ball of radius $2r$ can be
covered with $2^d$ balls of radius $r$ in $V$).
Note that the only places in the analysis where we used the properties of metric being $\RR^d$ was in
the proof of Theorem \ref{thm:partition} and in particular in Step 2 where we cut out cells from bands
and then later in the proof of Lemma \ref{lem:size}. The rest of the proof remains unchanged.

%First, we need to bound the aspect ratio of the metric. 
Given a metric $(V,\delta)$, the aspect ratio, denoted by $\Delta$, is the ratio of the largest distance to the smallest 
non-zero distance in the metric: $\Delta = \frac{\max_{u,v \in V}\del(u,v)}{\min_{u,v \in V, u \ne v}\del(u,v)}$. 
%Using a standard scaling we can assume that $\Delta=\Poly(n)$ at a loss of at most $(1+\epsilon)$ in approximation ratio.
%(see~\cite{BFSS15}). Ignoring this $(1+\epsilon)$ loss, in the rest of this section 
%we assume that $\Delta$ is bounded by a polynomial in $n$.
Talwar~\cite{Talwar04} gave a hierarchical decomposition of a doubling metric using an
algorithm similar to one by Fakcharoenphol et al.~\cite{FRT03}.
It assumes $\min_{u,v \in V, u \ne v} \del(u,v) = 1$, which can be accomplished by scaling the distances.
So, $\Delta$ is the maximum distance between two points in the metric.

\begin{theorem}\cite{Talwar04}\label{thm:talwar}
Suppose $\min_{u,v \in V : u \ne v} \del(u,v) = 1$.
There is a randomized hierarchical decomposition of $V$, which is a sequence of partitions $\calP_0$, $\calP_1$, $\dots, \calP_h$, 
where $\calP_{i-1}$ is a refinement of $\calP_i$, $\calP_h=\{V\}$, and $\calP_0=\{\{v\}\}_{v\in V}$. The decomposition 
has the following properties:
\begin{enumerate}
\item $\calP_0$ corresponds to the leaves and $\calP_h$ corresponds to the root of the split-tree $T$, and the height of
 $T$ is $h=\varphi+2$, where $\varphi = \log \Delta$ and $\Delta$ is the aspect ratio of metric.
\item For each level $i$ and each $S \in \calP_i$, $S$ has diameter at most $2^{i+1}$.
\item The branching factor $b$ of $T$ is at most $12^d$.
\item For any $u,v \in V$, the probability that they are in different sets corresponding to nodes in level $i$ of $T$ is at most
 $5d\cdot 2^{-i} \cdot \del(u,v)$. 
\end{enumerate}
\end{theorem}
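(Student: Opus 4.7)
The plan is to follow the Fakcharoenphol-Rao-Talwar \cite{FRT03} probabilistic partitioning, but to restrict the random priorities at scale $i$ to a sparse hierarchical net; then the $\log n$ distortion factor in the usual FRT separation analysis is replaced by a $2^{O(d)}$ term coming from the doubling property.

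First I would build a nested chain of hierarchical nets $V = N_0 \supseteq N_1 \supseteq \cdots \supseteq N_h$, where $N_i$ is a maximal $2^{i-1}$-net in $V$: pairwise distances in $N_i$ are at least $2^{i-1}$, and every point of $V$ lies within $2^{i-1}$ of $N_i$. Since $\min_{u \ne v}\del(u,v) = 1$, every point is its own nearest net point at level $0$ so $N_0 = V$, and for $h := \lceil \log \Delta \rceil + 2$ the net $N_h$ must be a single point. Sample a uniform $\beta \in [1,2)$ and a uniform random permutation $\pi$ of $V$, and build the tree top-down: set $\calP_h = \{V\}$, and given a cluster $S \in \calP_i$, form its children in $\calP_{i-1}$ by assigning each $v \in S$ to the point $u \in N_{i-1}$ of smallest $\pi$-rank satisfying $\del(v,u) \leq \beta \cdot 2^{i-2}$ (such a $u$ exists because $v$ is within $2^{i-2}$ of $N_{i-1}$). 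At level $0$ the radius is less than $1$, forcing singletons, and by construction $\calP_{i-1}$ refines $\calP_i$. Each cluster at level $i$ sits inside a ball of radius $\beta \cdot 2^{i-1} < 2^i$, hence has diameter less than $2^{i+1}$, giving Property 2.

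For Property 3, fix $S \in \calP_i$ of diameter at most $2^{i+1}$, pick any $x \in S$, and observe that every center of a child of $S$ lies in $N_{i-1}$ within distance $\beta \cdot 2^{i-2} + 2^{i+1} < 2^{i+2}$ of $x$. Applying the doubling property to cover $B(x, 2^{i+2})$ by balls of radius much smaller than $2^{i-2}$, and using that the $2^{i-2}$-separation of $N_{i-1}$ forces each small ball to contain at most one net point, yields a branching bound of $2^{O(d)}$; a careful packing argument sharpens this to the stated $12^d$.

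The heart of the argument, and what I expect to be the main obstacle, is Property 4. Fix $u, v$ with $L := \del(u,v)$, focus on the transition from $\calP_i$ to $\calP_{i-1}$, and enumerate the net points $w_1, w_2, \ldots \in N_{i-1}$ in increasing order of $\min(\del(w_j, u), \del(w_j, v))$. The pair $(u,v)$ can be split only if some $w_j$ is the first in $\pi$-order among the net points that already qualify as a center for $u$ or $v$, and the random radius $R := \beta \cdot 2^{i-2}$ lies strictly between $\min_t \del(w_j, t)$ and $\max_t \del(w_j, t)$ for $t \in \{u, v\}$. This interval has length at most $L$, so the probability over $\beta$ is at most $L/2^{i-2}$; meanwhile, the probability over $\pi$ that $w_j$ precedes $w_1, \ldots, w_{j-1}$ is $1/j$. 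Summing $1/j$ over $j$ would give a harmonic number, but the doubling hypothesis ensures that only net points of $N_{i-1}$ within $O(2^{i-1})$ of $\{u, v\}$ can ever contribute, bounding the index range by $2^{O(d)}$ and the harmonic sum by $O(d)$. Combining yields a separation probability of the form $O(d \cdot L \cdot 2^{-i})$, and tracking the absolute constants recovers exactly $5d \cdot 2^{-i} \del(u,v)$. The delicate step is the FRT-style decoupling that reduces the event ``$(u,v)$ is split'' to the event ``some $w_j$ is both the $\pi$-first qualifier and a cutter''; this needs care to avoid double counting across choices of $w_j$, and relies on the ordering by distance-to-$\{u,v\}$ combined with the single shared permutation $\pi$.
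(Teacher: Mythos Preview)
The paper does not prove this theorem: it is quoted verbatim as a black-box result from Talwar~\cite{Talwar04} and used without proof in Section~\ref{sec:doubling1}. So there is no ``paper's own proof'' to compare against; the authors simply invoke Properties~3 and~4 of the cited decomposition to bound the cell sizes and the cutting probabilities.

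That said, your sketch is a faithful reconstruction of Talwar's original argument: hierarchical $2^{i-1}$-nets, a single global random permutation $\pi$, a random radius scaling $\beta\in[1,2)$, and the FRT-style ``first cutter'' analysis with the harmonic sum truncated by a packing bound. The only points where one would want more care in a full write-up are (i) ensuring the children of a cluster $S\in\calP_i$ genuinely refine $S$ (your top-down description assigns each $v\in S$ to a net point in $N_{i-1}$, but two points in the same level-$(i{-}1)$ ball could a priori lie in different level-$i$ clusters unless the nets are nested and the assignment is made consistently across levels), and (ii) the exact constant $5d$ in Property~4, which depends on how one counts the net points in the annulus and bounds the harmonic partial sum; your outline correctly identifies this as the delicate step but does not carry it through. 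None of this is a gap in approach---it is the standard Talwar argument---just bookkeeping that the cited reference handles.
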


The proof of randomized partitioning scheme for the case of doubling metrics follows similarly
as in Section \ref{sec:partitioning}. We create bands $B_\ell$ as before.
For each band $B_\ell$ we cut out the cells (that define the parts) in the following way. 
First, observe that for all $i\in B_\ell$ we have $\frac{1}{\eps^{\ell b}}\leq D_i< \frac{1}{\eps^{(\ell + 2)b-1}}$.
Additionally, by Lemma \ref{lem:filter} for all $i\in\floc\cap B_\ell$, $i^*\in\fopt\cap B_\ell$ we have 
$\frac{1}{\eps^{\ell b -1}}\leq \del(i,i^*)$.
So, if we scale all distances between points in $B_\ell$ so the minimum distance is $1$, for all $i \in B_\ell$ we have
%First we scale all the distances between the points in $\floc\cap B_\ell$ 
%and $\fopt\cap B_\ell$ by $\eps^{\ell b - 1}$.
%So now for all points $i\in B_\ell$ we have
$1\leq \eps \cdot D_i< \frac{1}{\eps^{2b}}$ and for any two distinct $i, i^* \in B_\ell$ we have $1 \leq \del(i,i^*)$.

Consider the hierarchical decomposition of the points in $B_\ell$ using Theorem \ref{thm:talwar} in this scaled metric
and consider the clusters defined by the
sets at level $\lambda=\log d + (2b+3)\log\frac{1}{\eps}$. These will define our cells for $B_\ell$,
let's call them $C^\ell_1,\ldots,C^\ell_q$.\footnote{It is easy to see that we really don't need the full hierarchical decomposition as in Theorem \ref{thm:talwar}.
One can simply run one round of the {\em Decompose} algorithm of~\cite{Talwar04} that carves out clusters with
diameter size at most $2^\lambda$ and show that the relevant properties of Theorem \ref{thm:talwar} hold.
For ease of presentation we simply invoke Theorem \ref{thm:talwar}.}
%We prove the following two equivalent versions of Lemmas \ref{lem:size} and \ref{lem:prob}.
The only two things that we need to do is: a) bound the size of each cell $C^\ell_r$
and b) to show that  Lemma \ref{lem:prob} still holds.
Bounding $|C^\ell_r|$ will be equivalent to Lemma \ref{lem:size}.
We then we apply Step 3 to fix $\cents$ as before, this will at most double the size of each cell.
The rest of the proof of Theorem \ref{thm:partition} remains unchanged
as it does not rely on the geometry of $\RR^d$.

To bound $|C^\ell_i|$ we use property 3 of Theorem \ref{thm:talwar} and the fact that the bottom level $\calP_0$ consists of singleton sets:

\begin{eqnarray*}
  |C^\ell_i| &\leq & 12^{d \cdot \lambda}\\
  &=&  2^{(\log 12)d \cdot \left[(2b+3)\log\frac{1}{\eps} + \log d\right]}\\
  &\leq & \eps^{-64d/\eps}\cdot d^{4d},
\end{eqnarray*}
where we used the definition  of $b$ in the last inequality.
Using similar argument as in the case of $\RR^d$, we can bound the size of each part by $4Y^4$
where $Y\leq 2|C^\ell_i|$, which implies an upper bound of $32\cdot d^{16d}\cdot\eps^{-256d/\eps}$.

As for proving equivalent version of Lemma \ref{lem:prob}, it is enough
to show that  $\Pr[i, i^* {\rm ~lie~in~different~} C^\ell_{r}]$ given
that they are in the same band $B_\ell$ is at most $\eps/4$
since the rest of that proof is the same.
For such pair $i,i^*$, note that $\del(i,i^*)\leq \eps^{-1}\cdot D_i \leq \eps^{-2b-1}$.
Using property 4 of Theorem \ref{thm:talwar}:

\begin{eqnarray*}
  \Pr[i, i^* {\rm ~lie~in~different~} C^\ell_{\bf r} ~|~ i, i^* {\rm ~lie~in~the~same~} B_\ell]
  &\leq& 5d\frac{\del(i,i^*)}{2^\lambda}\\
  &\leq& 5d\frac{\eps^{-2b - 1}}{d\eps^{-2b-3}}\\
  &=& 5\eps^{2}\\
  &\leq& \eps/4.
\end{eqnarray*}

The rest of the proof of Theorem \ref{thm:partition} remains the same as in Subsection \ref{sec:balancing}.

%%%%%%%%%%%%%%%%%%%%%%%%%%%%%%%%%%%%%%%%%%%%%%%%%%%%%%%%%%%%%%%%%%%%%%%%%%%%%%%%%%%%%%%%%%%%%%%%%%%
%\subsection{Extending to $\ell^q_q$-norm  $k$-clustering} \label{sec:lpq}
\subsection{Extending to \lqnorm} \label{sec:lpq}
It is fairly straightfoward to see that all the calculations in the proof of Theorem \ref{thm:euclid}
where we bound $\Delta^P_j$ will hold with similar bounds if the objective function is sum of distances (instead of squares
of distances), that is $\ell_1$-clustering (i.e. \kmed).
For the case of $\ell^q_q$ (with $q>2$) it is also possible to verify that whenever we have a term
of the form $O(\eps)(c^*_j+c_j)$ when upper bounding $\Delta^P_j$ in the analysis of \kmeans
(e.g. when $j$ is lucky, or long, or good) then 
the equivalent term for $\ell^q_q$ will be $O(\eps 2^q)(\del^q_j+{\del^*_j}^q)$.
For the case when $j$ is bad we still get an upper bound of $\Delta^P_j\leq O(2^q(\del^q_j+{\del^*_j}^q))$.
One can use the crude bound of $(x+\eps\cdot y)^q \leq x^q + 2^q\eps\cdot\max\{x,y\}$ for all $x,y\geq 0$
to help bound terms like $(d_j + \eps \cdot D_{\sigma(j)})^q$.
We skip the straightforward (but tedious) details. 

This means we get a $(1+O(\eps))$-approximation for $k$-clustering when the objective function is measured according to the
$\ell^q_q$-norm for any fixed $q\geq 1$ by considering swaps of size $\rho = d^{O(d)} \cdot (2^q/\eps)^{-d\cdot 2^q/\eps}$
in the local search algorithm.

%%%%%%%%%%%%%%%%%%%%%%%%%%%%%%%%%%%%%%%%%%%%%%%%%%%%%%%%%%%%%%%%%%%%%%%%%%%%%%%%%%%%%%%%%%%%%%%%

\section{Extending to the \ufl and \gkm problems} \label{sec:ufl}
Recall that the setting of \ufl is similar to \kmed except we are given opening costs for each $i \in \fa$ instead of a cardinality bound $k$.
A feasible solution is any nonempty $\loc \subseteq \fa$ and its cost is
\[ {\rm cost}_{{\bf ufl}}(\loc) = \sum_{j \in \cX} \del(j, \loc) + \sum_{i \in \loc} f_i.\]
The standard local search algorithm is the following. Let $\rho$ be defined as before.

\begin{algorithm*}[t]
 \caption{$\rho$-Swap Local Search for \ufl} \label{alg:local_ufl}
\begin{algorithmic}
\State Let $\loc \leftarrow \fa$
\While{$\exists$ set $\emptyset \subsetneq \loc' \subseteq\fa$, $|\loc - \loc'|, |\loc' - \loc| \leq \rho$ s.t. 
$\cost_{\bf ufl}(\loc')<\cost_{\bf ufl}(\loc)$}
\State $\loc\leftarrow \loc'$
\EndWhile
\State \Return $\loc$
\end{algorithmic}
\end{algorithm*}

We will show every locally optimum solution has cost at most $(1 + O(\eps)) \cdot OPT$ using at most $2\cdot |\fa|$ test swaps.
Thus, replacing the cost condition in Algorithm \ref{alg:local} 
with $\cost_{\bf ufl}(\loc') \leq \left(1 - \frac{\eps}{2\cdot |\fa|}\right) \cdot \cost_{\bf ufl}(\loc)$
is a polynomial-time variant that is also a PTAS.

Let $\loc$ be a locally optimum solution and $\opt$ a global optimum. We may assume, 
by duplicating points if necessary (for the analysis only), that $\loc \cap \opt = \emptyset$,
thus $|\loc| + |\opt| \leq 2 \cdot |\fa|$ (where $|\fa|$ refers to the size of the original set of facilities before duplication).
Our analysis proceeds in a manner that is nearly identical to our approach for \kmeans.

In particular, we prove the following. It is identical to Theorem \ref{thm:partition} 
in every way except the first point does not require $P \cap \opt$ and $P \cap \loc$ to have equal size.
\begin{theorem}\label{thm:partition_ufl}
There is a randomized algorithm that samples a partitioning $\pi$ of $\opt \cup \loc$ such that:
\begin{itemize}
\item For each part $P \in \pi$, $|P \cap \opt|, |P \cap \loc| \leq \swaps$.
\item For each part $P \in \pi$, $\loc \triangle P$ includes at least one centre from every pair in $\cents$.
\item For each $(i^*, i) \in \net$, $\Pr[i, i^* {\rm ~lie~in~different~parts~of~}\pi] \leq \eps$.
\end{itemize}
\end{theorem}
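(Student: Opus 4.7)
The plan is to reuse, essentially verbatim, the randomized partitioning machinery developed in Section \ref{sec:partitioning} (together with its doubling-metric extension in Section \ref{sec:doubling1}) for the proof of Theorem \ref{thm:partition}. The only structural difference between Theorems \ref{thm:partition} and \ref{thm:partition_ufl} is that the first bullet now demands only $|P \cap \opt|, |P \cap \loc| \leq \swaps$ separately, rather than the stronger balanced condition $|P \cap \opt| = |P \cap \loc|$. This is fortunate because the balancing step of Section \ref{sec:balancing} is precisely the step that ultimately forces the fourth-power blowup in the final size bound; dropping it means the proof is strictly simpler than the one we already have.

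Concretely, I would execute Steps 1--3 of the partitioning exactly as before. Sample a random shift $a' \in \{0, \dots, b-1\}$ to group the geometric classes $\{G_a\}$ into bands $B_\ell$, then either impose a random-offset axis-aligned grid of side $W_\ell$ (Euclidean case) or invoke the Talwar hierarchical decomposition at the appropriate level (doubling case) to carve each band into cells $C^\ell_{\bf a}$, and finally fix $\cents$ by moving each $i \in \mathcal I$ into the cell containing $\ophi(i)$, producing the parts $P^\ell_{\bf a}$. Lemma \ref{lem:size} (respectively its doubling-metric analog from Section \ref{sec:doubling1}) already bounds $|P^\ell_{\bf a}|$ by $2(2d)^{2d}\eps^{-9d/\eps}$ (resp.\ by $2d^{4d}\eps^{-64d/\eps} \cdot 2$), which is well below $\swaps$, and a fortiori bounds $|P^\ell_{\bf a} \cap \opt|$ and $|P^\ell_{\bf a} \cap \loc|$ separately.

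Next, I would handle the leftover centres exactly as in Section \ref{sec:balancing}: partition $G_{-\infty}$ into the colocated pairs $\mathcal P_{-\infty}$ guaranteed by Lemma \ref{lem:zero}, and place each element of $(\opt - \fopt) \cup (\loc - \floc)$ into its own singleton part $\mathcal P'$. The union $\overline{\mathcal P} = \{P^\ell_{\bf a}\} \cup \mathcal P_{-\infty} \cup \mathcal P'$ is then a complete partition of $\opt \cup \loc$; every part in $\overline{\mathcal P}$ satisfies the size bound (trivially for singletons and colocated pairs), property 2 for $\cents$-pairs is enforced for non-$G_{-\infty}$ pairs by Step 3 and for $G_{-\infty}$ pairs by Lemma \ref{lem:zero}, and property 3 for $\net$-pairs is given by Lemma \ref{lem:prob} for non-$G_{-\infty}$ pairs and is automatic for $G_{-\infty}$ pairs since these sit in their own part.

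Because the $|P \cap \opt| = |P \cap \loc|$ balance is no longer required, I would simply stop here, skipping the iterative application of Lemma \ref{lem:grouping}. There is no genuine obstacle to overcome: the hardest parts of the original argument (the random band/cell construction, the volumetric cell-size bound, the union bound analysis for $\net$-pairs, and the Step 3 patch for $\cents$-pairs) all port over unchanged, and the routine bookkeeping at the end is strictly lighter than before. The only step that deserves a second of attention is checking that our simpler size bound $2(2d)^{2d}\eps^{-9d/\eps}$ (or its doubling-metric counterpart) comfortably fits inside $\swaps$, which it clearly does.
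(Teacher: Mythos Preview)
Your proposal is correct and matches the paper's own proof essentially verbatim: the paper simply states that the proof is identical to that of Theorem \ref{thm:partition} except that the balancing step of Section \ref{sec:balancing} is omitted, since that step was the only place requiring $|\loc| = |\opt|$. Your observation that the resulting part sizes are already well below $\swaps$ without the fourth-power blowup from Lemma \ref{lem:grouping} is also accurate.
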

The proof is identical to Theorem \ref{thm:partition}, except we do not to the ``balancing'' step in Section \ref{sec:balancing}.
Indeed this was the only step in the proof of Theorem \ref{thm:partition} that required $|\loc| = |\opt|$.

We now complete the proof of Theorem \ref{thm:ufl}.\\
\begin{proof}
Let $\costl_j = \del(j, \loc)$ and $\costo_j = \del(j, \opt)$ for each $j \in \cX$.
Sample a partition $\pi$ of $\loc \cup \opt$ as per Theorem \ref{thm:partition_ufl}. For each part $P \in \pi$ and each $j \in \cX$ we let $\Delta^P_j$ denote $\delta(j, \loc \triangle P) - \delta(j, \loc)$.
Using the same bounds considered in the proof of Theorem \ref{thm:euclid} we have
\[ \mathbf{E}_{\pi}\left[\sum_{P \in \pi} \Delta^p_j\right] \leq (1 + O(\eps)) \cdot \costo_j - (1 - O(\eps)) \costl_j. \]

As each $i \in \loc$ is closed exactly once and each $i^* \in \opt$ is opened exactly once, the fact that $\loc$ is locally optimal and fact that each $P \in \pi$ has $|P \cap \loc|, |P \cap \opt| \leq \rho$
shows
\begin{eqnarray*}
0 & \leq & \mathbf{E}_{\pi}\left[\sum_{P \in \pi} \cost_{\bf ufl}(\loc \triangle P) - \cost_{\bf ufl}(\loc)\right] \\
 & \leq & \left(\sum_{j \in \cX} (1 + O(\eps)) \cdot \costo_j - (1 - O(\eps))\costl_j\right)  + \left(\sum_{i^* \in \opt} f_{i^*} - \sum_{i \in \loc} f_i\right).
\end{eqnarray*}
Rearranging shows $\cost_{\bf ufl}(\loc) \leq (1 + O(\eps)) \cost_{\bf ufl}(\opt)$.
\end{proof}

Finally we conclude by analyzing a local search algorithm for \gkm. Here we are given both a cardinality bound $k$ and opening costs for each $i \in \fa$.
The goal is to open some $\loc \subseteq \fa$ with $1 \leq |\loc| \leq k$ to minimize
\[ \cost_{\bf gkm}(\loc) =  \sum_{j \in \cX} \del(j, \loc) + \sum_{i \in \loc} f_i. \]
Note this is the same as $\cost_{\bf ufl}$, we use this slightly different notation to avoid confusion as we are discussing  a different problem now.

\begin{algorithm*}[t]
 \caption{$\rho$-Swap Local Search for \ufl} \label{alg:local_gkm}
\begin{algorithmic}
\State Let $\loc$ be any $k$ centres from $\loc$
\While{$\exists$ set $\loc' \subseteq\fa$, with $|\loc - \loc'|, |\loc' - \loc| \leq \rho$  and $1 \leq |\loc'| \leq k$ s.t.
$\cost_{\bf gkm}(\loc')<\cost_{\bf gkm}(\loc)$}
\State $\loc\leftarrow \loc'$
\EndWhile
\State \Return $\loc$
\end{algorithmic}
\end{algorithm*}

We can prove Theorem \ref{thm:gkm} very easily.

\begin{proof}
Let $\loc$ and $\opt$ denote a local optimum and a global optimum (respectively). By adding artificial points $i$ with $f_i = 0$ that are extremely far away from $\cX$, we may assume $\loc = \opt$.
Then we simply use our original partitioning result, Theorem \ref{thm:partition}, to define test swaps. Noting that each $i \in \loc$ is closed exactly once and each $i^* \in \opt$ is opened exactly once
over the swaps induced by a partition $\pi$, we proceed in the same way as above in the proof of Theorem \ref{thm:ufl} and see
\[ \cost_{\bf gkm}(\loc) \leq (1 + O(\eps)) \cdot \cost_{\bf gkm}(\opt). \]
\end{proof}

Using almost identical arguments as we did in 
Section \ref{sec:lpq} for extension of \kmeans to $\ell^q_q$-norm in doubling dimensions, 
one can easily extend the results of Theorem
\ref{thm:ufl} and \ref{thm:gkm} to the case where we consider the sum of $q$'th power of distances ($\ell^q_q$-norm).

%%%%%%%%%%%%%%%%%%%%%%%%%%%%%%%%%%%%%%%%%%%%%%%%%%%%%%%%%%%%%%%%%%%%%%%%%%%%%%%%%%%%%%%%%%%%%%%%

\section{Conclusion}
We have presented a PTAS for \kmeans in constant-dimensional Euclidean metrics and, more generally, in doubling metrics and when the objective
is to minimizing the $\ell_q^q$-norm of distances between points and their nearest centres for any constant $q \geq 1$.
This is also the first approximation for \kmed in doubling metrics and the first demonstration that local search yields a true PTAS
for \kmed even in the Euclidean plane. Our approach extends to showing local search yields a PTAS for \ufl and \gkm in doubling metrics.

The running time of a single step of the local search algorithm is $O(k^{\rho})$ where 
$\rho = d^{O(d)} \cdot \eps^{-O(d/\eps)}$ for the case of \kmeans when the metric has doubling dimension $d$.
We have not tried to optimize the constants in the $O(\cdot)$ notations in $\rho$.
%$d$-dimensional Euclidean metrics.
The dependence on $d$ cannot be improved much under the Exponential Time Hypothesis (\ETH).
For example, if the running time of a single iteration was only $O(k^{{\rm exp}(d^{1-\delta}) \cdot f(\eps)})$ for some constant
$\delta$ then we would have a sub-exponential time $(1+\eps)$-approximation when $d = O(\log n)$. Recall that \kmeans is \apx-hard when $d = \Theta(\log n)$
\cite{ACKS15}, so this would refute the \ETH.

It may still be possible to obtain an EPTAS for any constant dimension $d$. That is, there could be a PTAS with running time of the form $O(g(\eps) \cdot n^{{\rm exp}(d)})$ for some function
$g(\eps)$ (perhaps depending also on $d$). Finally, what is the fastest PTAS we can obtain in the special case of 
the Euclidean plane (i.e. $d = 2$)?
%The NP-hardness results of \cite{MNV09,Vattani09} actually imply that \kmeans is strongly NP-hard
%on the plane. Therefore, 
%We cannot expect an FTPAS (i.e. having $g(\eps) = \eps^{-\alpha}$ for some absolute constant $\alpha$),
%but 
It would be interesting to see if there is
an EPTAS whose running time is linear or near linear in $n$ for any fixed constant $\eps$.

%%%%%%%%%%%%%%%%%%%%%%%%%%%%%%%%%%%%%%%%%%%%%%%%%%%%%%%%%%%%%%%%%%%%%%%%%%%%%%%%%%%%%%%%%%%%%%%%
%%%%%%%%%%%%%%%%%%%%%%%%%%%%%%%%%%%%%%%%%%%%%%%%%%%%%%%%%%%%%%%%%%%%%%%%%%%%%%%%%%%%%%%%%%%%%%%%

\bibliographystyle{plain}
{
\bibliography{k-means}
}

\end{document}